\titleformat{\paragraph}[hang]
{\normalfont\normalsize\bfseries}{}{1em}{}
\titlespacing*{\paragraph}{0pt}{3.25ex plus 1ex minus .2ex}{1.5ex plus .2ex}
\titlespacing*{\subparagraph}{\parindent}{3.25ex plus 1ex minus .2ex}{.75ex plus .1ex}
\def\today{%
	\two@digits{\the\day}-%
	\ifcase\month\or%
	Jan\or Feb\or Mar\or Apr\or May\or Jun\or%
	Jul\or Aug\or Sep\or Oct\or Nov\or Dec\fi-%
	\number\year%
}
\newtheorem{proposition}{Proposition}
\newtheorem{lemma}{Lemma}
\newtheorem{theorem}{Theorem}
\newtheorem*{theorem*}{Theorem}
\newtheorem{corollary}{Corollary}
\newtheorem{problem}{Problem}
\newtheorem*{problem*}{Problem}
\newtheorem{example}{Example}
\newtheorem{remark}{Remark}
\newcommand{\COMMENTOUT}[1]{} 
\newcommand{\cadlag}{\emph{c\'adl\'ag}}
\newcommand{\diff}{\mathrm{d}}
\newcommand{\indicator}[1]{\mathbbm{1}_{\{#1\}}}
\newcommand{\indicatorf}[2]{\mathbbm{1}_{#1}(#2)}
\newcommand{\qqquad}{\quad\quad\quad}
\DeclareMathOperator{\sgn}{sgn}
\newcommand\bespoketilde[3]{\hstretch{#1}{\tilde{\hstretch{#2}{#3}}}} 
\newcommand\wthalf[1]{\bespoketilde{2}{0.5}{#1}}
\newcolumntype{C}[1]{>{\hspace{0pt}\Centering\arraybackslash}p{\dimexpr#1\textwidth-2\tabcolsep\relax}}
\newcolumntype{L}[1]{>{\hspace{0pt}\RaggedRight\arraybackslash}p{\dimexpr#1\textwidth-2\tabcolsep\relax}}
\newcolumntype{R}[1]{>{\hspace{0pt}\RaggedLight\arraybackslash}p{\dimexpr#1\textwidth-2\tabcolsep\relax}}
\newcommand{\vast}{\bBigg@{4}}
\newcommand{\Vast}{\bBigg@{5}}
\newcommand{\checkEmpty}[1]{%
	\if\relax\detokenize{#1}\relax
		{}%
	\else
		{, #1}%
	\fi
}
\newcommand{\mycitepalias}[3]{%
	\citepalias[#1][\citeyear{#3}\checkEmpty{#2}]{#3}%
}
\title{Analytical valuation of vulnerable derivative claims with bilateral cash flows under credit, funding and wrong way risk}
\author{Juan Jos\'e Francisco Migu\'elez%
	    \thanks{This research is part of the PhD thesis in Financial Mathematics of Juan Jos\'e Francisco Migu\'elez at King's College London. The views expressed in this paper are those of the author and do not necessarily represent those of current or past employers.} \\
	\small Department of Mathematics, King's College London\\
	\small CVA \& Counterparty Model Risk, Bank of America
	\\[7pt]
	Cristin Buescu\\
	\small Department of Mathematics, King's College London}
\date{\it 
	This version: \today{}.}
\begin{document}
\maketitle

\begin{abstract}
	\noindent
	We study the problem of valuing and hedging a vulnerable derivative claim with bilateral cash flows between two counterparties in the presence of asymmetric funding costs, defaults and wrong way risk (WWR). We characterize the pre-default claim value as the solution to a non-linear Cauchy problem. We show an explicit stochastic representation of the solution exists under a funding policy which linearises the Cauchy PDE. We apply this framework to the valuation of a vulnerable equity forward and show it can be represented as a portfolio of European options. Despite the complexity of the model, we prove the forward's value admits an analytical formula involving only elementary functions and Gaussian integrals.  Based on this explicit formula, numerical analysis demonstrates WWR has a significant impact even under benign assumptions: with a parameter configuration less punitive than that representative of Archegos AM default, we find WWR can shift values for vulnerable forwards by 100bps of notional, while peak exposures increase by 25\% of notional. This framework is the first to apply to contracts with bilateral cash flows in the presence of credit, funding and WWR, resulting in a non-linear valuation formula which admits a closed-form solution under a suitable funding policy.
\end{abstract}


\section{Introduction}
\label{sec:introduction}

In the last decade and a half, financial markets have undergone significant structural shifts motivated by changes in both attitudes and the regulatory environment. As a consequence, market participants started recognizing additional risk factors and trading costs, when pricing derivative transactions, which were either being neglected or inexistent prior to the crisis. Academic and industry research has sought to incorporate such factors into derivative valuation theory, in an effort which continues to this day.
\newline

Most of these changes can be traced back to the 2007-2008 Global Financial Crisis (GFC) and its ramifications. The sudden default of the US broker-dealer Lehman Brothers in September 2008 exposed the default risk faced by counterparties in any bilateral derivative transaction. From then on pricing models had to incorporate the cost of hedging this bilateral credit risk which came to be known as credit valuation adjustment (CVA), modelled as an additional component to the claim's risk-free value \citep{Gregory2009}. 
\newline

The Lehman Brothers default and the ensuing financial panic re-acquainted participants with liquidity risk in the wholesale lending market. Term premiums (also known as basis) across interest rate tenors, which had been negligible prior to the crisis, widened throughout Autumn 2008 and traders had to adapt their yield curve models to a multi-curve environment \citep{Bianchetti2010}. In cross-currency markets, an additional basis between funding costs in different currencies emerged at the same time and has persisted since then \citep{FujiiShimadaTakahashi2010,BorioMccauleyMcguireSushko2016}. Under this multi-curve environment, traders and investors had to start accounting for different rates for borrowing or lending funds when valuing derivative transactions \citep{Mercurio2014}. Some participants started introducing funding valuation adjustments (FVA) into their pricing models to account for any asymmetrical financing cost arising from these transactions \citep{HullWhite2012}. 
\newline

Changes to traditional credit and funding assumptions were compounded by the response from global regulators to the crisis. The banking regulation architecture was overhauled through the Basel III accords \mycitepalias{}{}{BCBS2011} which emphasized credit risk mitigation for derivative contracts. Regulatory bodies devised incentives which encouraged participants in derivative markets to either execute transactions in central counterparties, or to secure over-the-counter (OTC) deals with sufficient collateral. These new collateralization practices forced the industry to review traditional discounting assumptions made in valuation models \citep{Piterbarg2010}. For non-collateralized transactions, Basel III imposed further capital requirements which had to be recognized by large derivative dealers in the form of capital valuation adjustments (KVA) when valuing and hedging derivatives \citep{GreenKenyonDennis2014}. 
\newline

The simultaneous occurrence of Lehman Brothers default and a stock market crash at the height of the GFC lead to a literature stream on modelling adverse correlation between asset prices and credit events (known as \emph{wrong way risk}, or WWR) as well as jumps in market prices when a market participant defaults (\emph{gap risk})%
	\footnote{Loosely speaking wrong way risk is a pre-default risk while gap risk materialises at the time of default. By abuse of language we might write ``wrong way risk'' when referring to both phenomena.}.
These features can be captured by augmenting asset price dynamics with the default process driving default \citep{MercurioLi2015}. This approach has recently been used for valuing CVA for quanto CDS contracts, where the FX rate usually jumps whenever the sovereign defaults \citep{BrigoPedePetrelli2019}. More recently, the default of Archegos Asset Management has brought wrong way and gap risks to the attention of financial regulators: the large losses incurred by broker-dealers during this episode revealed the risk that exposures might unexpectedly jump by a large amount at default \citep{Anfuso2023,Arnsdorf2023}.
\newline

There are a number of references that provide a comprehensive review of the topic of derivative valuation under counterparty risks. The monographs by \citet{BieleckiRutkowski2004} and \citet{CesariEtAl2009} cover the mathematical tools for valuing financial derivatives under credit risk. From an applied perspective, the reader can refer to the books by \citet{Green2015} and \citet{Gregory2020} for a detailed overview of post-GFC changes to derivation valuation models to include credit and funding risk factors. Finally, the collection edited by \citet{KenyonGreen2016} includes landmark research on topics such as valuations adjustments, collateral modelling and WWR.
\newline

In this article, we present a comprehensive model which incorporates \emph{ab initio} most credit and funding risk factors described above, instead of treating them as independent components to the risk-free valuation. Our model remains sufficiently tractable to provide closed-form formul\ae\ for valuing uncollateralized derivative claims with bilateral cash flows such as forwards -- instead of relying on numerical methods, as has been customary in most of the CVA/FVA literature. Our model extends the analytical framework from \citet{BlackScholes1973} to a market with differential rates, defaults and WWR.
\newline

We generalize the frameworks from \citet{BurgardKjaer2011}, \citet{MercurioLi2015} and \citet{BrigoBuescuRutkowski2017} which posit financial markets with heterogeneous rates, defaultable counterparties, or wrong way risks. In the first paper, the authors derive partial differential equations (PDE) for the valuation of derivative claims subject to bilateral default risk under different recovery assumptions, yet explicit valuation formulas are absent when market prices and credit events are not independent. In the second paper, the authors introduce jumps to model adverse price moves between asset prices and the default event; their numerical analysis shows WWR has a material effect on valuations but they do not include any analytical formula. Finally, the last paper presents a valuation formula for a European call option subject to default risk, with no recovery at default, and in the presence of a funding spread; the authors interpret their result as the classical Black-Scholes formula with a modified dividend yield to account for credit and funding spreads. In the last two papers, the authors only deal with the case where the derivative claim has unilateral cash flows and do not include WWR.
\newline

In contrast, we consider claims with bilateral cash flows traded in a market with credit and funding instruments, where the dynamics of the underlying asset price include a stochastic component which jumps whenever a counterparty defaults; as a consequence, asset prices and credit events are not independent, and we control their correlation through the jump size. With respect to the paper by \citet{MercurioLi2015}, we generalize their model by including two distinct sources of jump risk, one for each counterparty involved in the transaction. We derive a non-linear PDE associated with the valuation of vulnerable claims in the general case, and we obtain a linearised version under a suitable funding policy which generalizes the assumptions from \citet{BurgardKjaer2011}. Subsequently we apply our framework to study the valuation of a forward contract between two counterparties, both of which are subject to default risk and where, contrary to \citet{BrigoBuescuRutkowski2017}, we assume positive recovery at the time of liquidation. We find the vulnerable forward contract is valued as a continuous portfolio of European call and put options priced under an equivalent probability measure $\widehat{\mathbb{Q}}$ (Proposition \ref{prop:valuation-forward}). Importantly, we show this valuation formula admits an analytical expression involving elementary functions and the Gaussian cumulative distribution function (Theorem \ref{th:analytical-fwd-value}). Finally, numerical analysis demonstrates that neglecting WWR leads to material mispricing of vulnerable claims as well as underestimating the sensitivity to interest rates.
\newline

Our results are relevant from both academic and industrial perspectives. The analytical pricing formula \eqref{eq:analytical-fwd-value} extends the seminal framework from \citet{BlackScholes1973} by incorporating features that have become critical in derivative valuation since the 2007-2008 GFC, including wrong way risk and price-credit correlation. To the best of our knowledge, no such closed-form results have been published in the literature in recent years, where focus has often be on either studying a single deviation from the Black-Scholes framework or instead relying on numerical methods for obtaining prices. Compared to the recent work by \citet{BrigoBuescuRutkowski2017} which study options only, we start off by studying claims with bilateral cash-flows (such as forwards) which exhibit the added complexity of two-way default risk and cash balances with can be both positive and negative. From a practical perspective, our valuation formula is quick to evaluate and provides broker-dealers with swift estimates of the overall cost from entering into such trades. These estimates can be used to quote all-inclusive prices to external clients or estimate parameter sensitivities while avoiding full-blown CVA calculations, which are longer to execute in IT systems and hence vulnerable to material market moves in the meantime. Banks have an active interest in such accurate and speedy valuations to remain price-competitive while ensuring appropriate risk management, particularly in electronic trading \citep{Tunstead2023}, as well as to efficiently allocate limited computing resources across the company.
\newline

The rest of this paper is organized as follows. In Section \ref{sec:valuation-problem-vulnerable-derivatives} we introduce our financial market model and formulate the valuation problem for a vulnerable derivative. In Section \ref{sec:pdes-associated-valuation-problem} we use hedging arguments to derive a non-linear partial differential equation for the pre-default valuation of a vulnerable claim (Proposition \ref{prop:pricing-pde}); we establish conditions under which this PDE becomes linear and obtain a probabilistic representation of the solution (Theorem \ref{th:conditional-expectation}). In Section \ref{sec:analytical-valuation-forward} we specialize our results to the case of a forward contract on a stock, obtaining an analytical formula for valuation (Theorem \ref{th:analytical-fwd-value}). In Section \ref{sec:numerical-analysis} we perform numerical analysis by comparing against risk-free valuations and estimating sensitivities to model parameters (including cross-effects between parameters), before concluding in Section \ref{sec:conclusions}.
\newline

Unless stated otherwise, we adopt the point of view of the broker-dealer and define cash flows so that a positive (resp. negative) cash flow represents an inflow (resp. outflow) for the dealer and an outflow (resp. inflow) for the client. We use the following notation for any numbers $x,y\in\mathbb{R}$ and any set $E\subseteq\mathbb{R}$:
\begin{flalign*}
\qquad x\vee y 			& :=\max\{x,y\}		&&\\
\qquad x\wedge y		& :=\min\{x,y\}		&&\\
\qquad (x)^+			& :=x\vee0			&&\\
\qquad (x)^-			& :=-(x\wedge0)		&&\\
\qquad \indicatorf{E}{x}& :=1 \text{ if $x\in E$, 0 otherwise} &&\\
\qquad \indicator{x\in E}&:=\indicatorf{E}{x} &&\\
\qquad \sgn(x)			& :=\indicatorf{(0,\infty)}{x}-\indicatorf{(-\infty,0)}{x}
\end{flalign*}

\section{The valuation problem for vulnerable derivatives}
\label{sec:valuation-problem-vulnerable-derivatives}

In this section we posit a financial market model with two defaultable counterparties, a derivative dealer and a client, as well as asymmetric interest rates. The two parties want to enter into a contract written on a stock share, whose price jumps when either counterparty defaults. After we describe admissible trading strategies and necessary conditions under which there is no arbitrage, we formulate the dealer's valuation problem for the vulnerable claim.

\subsection{The financial market model}
\label{sec:model-description}
We fix a probability space $(\Omega,\mathscr{F},\mathbb{P})$ over a continuous trading horizon $[0,\bar{T}]$ for some large real $\bar{T}\in\mathbb{R}_{+}$ where $\mathbb{P}$ is the physical probability measure. We consider two market participants: a derivative dealer (counterparty 1) and a corporate client (counterparty 2). Both counterparties can default at any given time $t$ between 0 and $\bar{T}$. To model default we introduce two independent Poisson process $N^{1},N^{2}:[0,\bar{T}]\times\Omega\rightarrow\mathbb{N}$ with intensities $\gamma_1,\gamma_2\in\mathbb{R}_{+}$, and define the \emph{default process} $J^{i}:[0,\bar{T}]\times\Omega\rightarrow\{0,1\}$ for counterparty $i\in\{1,2\}$ as follows:
\begin{align}
	\label{eq:default-def}
	J^{i}_t:=\indicatorf{[1,\infty)}{N^{i}_t}
\end{align}
Because the two Poisson processes are independent, they a.s. never jump at the same time, hence simultaneous defaults do not occur in this model -- the quadratic covariation between the default processes is zero. 
\newline

The default processes generate the default filtration $\mathbb{H}$ and each default process $J^{i}$ induces the $\mathscr{H}_t$-stopping time $\tau_i$ corresponding to the first jump time from the Poisson process $N^{i}$. We define the \emph{first default time} as the stopping time $\tau:=\tau_1\wedge\tau_2$ which generates the \emph{first-to-default jump process} $J:[0,\bar{T}]\times\Omega\rightarrow\{0,1\}$:
	\begin{align}
		\label{eq:first-to-default}
		J_t:=\indicatorf{[\tau,\infty)}{t}
	\end{align}

We introduce a frictionless financial market $\mathfrak{M}$ defined as a set of assets which can be traded by the derivative dealer such that each asset is characterized by a price process and a dividend process as in \citet{Duffie2001}. We equip $\mathfrak{M}$ with eight assets in total: (1) a stock share; (2-3) two defaultable zero-coupon bonds with zero recovery and expiries $T_1,T_2\in(0,\bar{T}]$, issued by counterparties 1 and 2 respectively; (4-6) repurchase agreements (repos) on the stock and the two bonds; (7) a risk-free deposit account to lend cash at a risk-free rate $r_\ell\in\mathbb{R}$; and (8) a funding account to borrow cash on an unsecured basis at a funding rate $r_b\in\mathbb{R}$. 
\newline

The prices of the deposit and funding accounts available to the dealer (whose point of view we are taking) are given by the stochastic processes $B^\ell,B^b:[0,\bar{T}]\times\Omega\rightarrow\mathbb{R}$ with dynamics:
	\begin{align}
		\label{eq:deposit-model}
		& \diff B^\ell_t = r_\ell B^\ell_t\diff t\\
		\label{eq:funding-model}
		& \diff B^b_t = r_b B^b_t\diff t,
	\end{align}
and initial condition $B^\ell_0,B^b_0\in\mathbb{R}_{+}^{*}$ respectively. The defaultable zero-coupon bonds are fully characterized by their price process $P^{i}:[0,\bar{T}]\times\Omega\rightarrow\mathbb{R}$ for each $i\in\{1,2\}$, whose dynamics are given by:
	\begin{align}
		\label{eq:zcbond-model}
		& \diff P^{i}_t 
		= P^{i}_{t-}(r_i\diff t-\diff J^{i}_t),
	\end{align}
with $P^{i}_0=e^{-r_iT_i}>0$ and where $r_i\in\mathbb{R}$ is the rate of return for the zero-coupon bond issued by counterparty $i$. We make the assumption that the rate of return from the bonds exactly compensates the credit risk taken on by the debt holders with respect to a risk-free deposit so that $\gamma_i$ is interpreted as the credit spread from counterparty $i$:
	\begin{align}
		\label{eq:bond-return-rate}
		r_i=r_\ell+\gamma_i
	\end{align}

The stock price is modelled by the process $S:[0,\bar{T}]\times\Omega\rightarrow\mathbb{R}$ such that:
	\begin{align}
		\label{eq:stock-price-process}
		\diff S_t 
		=S_{t-}((\mu-q)\diff t+\sigma\diff W_t
		+\kappa(\diff J^{1}_t+\diff J^{2}_t))
	\end{align}
with initial condition $S_0=s\in\mathbb{R}_{+}^{*}$ and where $W:[0,\bar{T}]\times\Omega\rightarrow\mathbb{R}$ is a Brownian motion; $\mu\in\mathbb{R}$ the stock price drift; $q\in\mathbb{R}$ the stock dividend rate; $\sigma\in\mathbb{R}_{+}^{*}$ the stock price volatility; and $\kappa\in[-1,1]$ the relative price change at time of the first default. We assume the Brownian motion and the Poisson processes $N^{1}$, $N^{2}$ are independent.
\newline

Our stock price model generalizes \citet{MercurioLi2015} where the asset price is only sensitive to the default from one of the parties. By including the default processes into the dynamics of $S$, we achieve two aims: first we correlate the stock price with the default event; secondly we capture gap risk arising from a sudden jump in the stock price at default.
\newline

The correlation between stock price and default event admits a closed-form formula which depends on the jump size $\kappa$, the volatility $\sigma$ and the \emph{first-to-default hazard intensity} $\upgamma:=\gamma_1+\gamma_2$. An approximation to the correlation value for small values of $\kappa$ is, in this model:
	\begin{align}
		\label{eq:rho-linear-kappa}
		\textrm{Corr}(S_t,J_t)=
		\kappa\left(\frac{e^{-\upgamma t}(1-e^{-\upgamma t})}{e^{\sigma^2t}-1}\right)+\mathcal{O}(\kappa^2),
	\end{align}
that is stock-credit correlation is linear in the jump parameter. Table \ref{tab:price-credit-correlation} displays numerical examples when correlation is estimated using the approximation \eqref{eq:rho-linear-kappa} for different values of $\upgamma$ and $\kappa$ under the assumption that $\sigma=15\%$ which is an usual level for implied volatilities from the S\&P 500 US equity index.%
\newline

\begin{table}[h!]
\centering\small
\rowcolors{3}{}{gray!10}
\begin{threeparttable}
	\begin{tabular}{lcccc}
		\toprule
		\multirow{2}{*}{\bf\bm{$\kappa$}} 
		& \multicolumn{4}{c}{\bf\bm{$\upgamma$}}	\\\cmidrule(lr){2-5}
		~    		& \it 0.02 		& \it 0.04	& \it 0.06	& \it 0.1 	\\\midrule
		\it -0.15 	& -12.8\% 		& -16.7\% 	& -19.0\% 	& -21.2\%	\\ 
		\it -0.05 	& -4.3\% 		& -5.6\% 	& -6.3\% 	& -7.1\%	\\
		\it 0.05 	& 4.3\% 		& 5.6\% 	& 6.3\% 	& 7.1\%		\\
		\it 0.15 	& 12.8\% 		& 16.7\% 	& 19.0\%  	& 21.2\%	\\
		\bottomrule
	\end{tabular}
	\caption{Sample correlations between stock price $S_t$ and first-to-default process $J_t$ when $\sigma=0.15$.}
	\label{tab:price-credit-correlation}
\end{threeparttable}
\end{table}

In addition to the price process \eqref{eq:stock-price-process}, the stock share is also characterized by a dividend process $D:[0,\bar{T}]\times\Omega\rightarrow\mathbb{R}$ with initial condition $D_0=0$ which verifies:
\begin{align}
	\label{eq:dividend-model}
	\diff D_t = q S_{t-}\diff t,
\end{align}

Finally, repurchase agreements are defined as pure dividend assets with zero price; the dividend process $R^S:[0,\bar{T}]\times\Omega\rightarrow\mathbb{R}$ for the repo on the stock has the dynamics:
	\begin{align}
		\label{eq:stock-repo-model}
		\diff{R}^S_t=\diff (S_t+D_t) - h_SS_{t-}\diff t,
	\end{align}	
with initial condition $R^S_0=0$ and where $h_S\in\mathbb{R}$ is the repo rate on the stock; similarly the bond repos have dividend process $R^{i}:[0,\bar{T}]\times\Omega\rightarrow\mathbb{R}$ for $i\in\{1,2\}$:
\begin{align}
	\label{eq:zcbond-repo-model}
	\diff{R}^{i}_t=\diff P^{i}_t - h_iP^{i}_{t-}\diff t,
\end{align}	
with $R^{1}_0,R^{2}_0=0$ and where $h_1,h_2\in\mathbb{R}$ are the repo rates.

\begin{remark}
	While we assume the risky asset to be an equity share, the log-normal model \eqref{eq:stock-price-process} is readily applicable to FX products if we interpret $S$ as the exchange rate, $\mu$ as the foreign interest rate and $q$ as the domestic one \citep{GarmanKohlhagen1983}. Log-normal dynamics have also been used in the literature to model commodities \citep{GibsonSchwartz1990} and inflation \citep{JarrowYildirim2003} derivatives.
\end{remark}

The probability space $(\Omega,\mathscr{F},\mathbb{P})$ is equipped with a market filtration $\mathbb{F}$ generated by the price and dividend processes from traded assets in $\mathfrak{M}$. We progressively enlarge this market filtration with the default filtration $\mathbb{H}$ in order to obtain the enlarged filtration $\mathbb{G}$ defined as $\mathbb{G}:=\mathbb{F}\vee\mathbb{H}$. We work under the $(\mathcal{H})$ hypothesis which states that any square-integrable $\mathbb{F}$-martingale remains a square-integrable $\mathbb{G}$-martingale \citep{JeanblancYorChesney2009}.

\subsection{Trading strategies and no-arbitrage in the financial market}
\label{sec:trading-strategies-and-no-arbitrage}
At any time $t\in[0,\bar{T}]$ and for any contingent claim he needs to hedge, the dealer holds a portfolio of traded assets from $\mathfrak{M}$ according to a \emph{trading strategy}, which is defined as a $\mathbb{F}$-predictable stochastic process $\Theta:[0,\bar{T}]\times\Omega\rightarrow\mathbb{R}^8$ specifying the number of units held of each asset such that:
	\begin{align}
		\label{eq:trading-strategy}
		\Theta_t:=(
		\theta^S_t,
		\theta^{1}_t,
		\theta^{2}_t,
		\upvartheta^S_t,
		\upvartheta^{1}_t,
		\upvartheta^{2}_t,
		\theta^\ell_t,
		\theta^b_t
		)
	\end{align}
where the elements of $\Theta$ are themselves $\mathbb{F}$-predictable stochastic processes from $[0,\bar{T}]\times\Omega$ into $\mathbb{R}$ which give respectively the units of stock $\theta^S$; defaultable bonds from counterparties one $\theta^{1}$ and two $\theta^{2}$; repos on the stock and the two bonds $\upvartheta^S$, $\upvartheta^{1}$ and $\upvartheta^{2}$; the deposit account $\theta^\ell$; and the funding account $\theta^b$. 
\newline

The value of the portfolio constructed with the strategy $\Theta$ is a stochastic process $V^\Theta:[0,\bar{T}]\times\Omega\rightarrow\mathbb{R}$ defined as the dot product of the price and trading strategy vectors:
	\begin{align}
		\label{eq:portfolio-price}
		V^\Theta_t:=
		\theta^S_tS_t
		+\sum_{i=1}^2\theta^{i}_tP^{i}_t
		+\theta^\ell_tB^\ell_t
		+\theta^b_tB^b_t
	\end{align}
where we have used the fact that repos are zero price assets. The \emph{gain process} associated to this portfolio is the stochastic process $G^\Theta:[0,\bar{T}]\times\Omega\rightarrow\mathbb{R}$ defined as follows \citep[see][]{Duffie2001}:
	\begin{align}
		\notag
		G^\Theta_t&:=
		\int_0^t\theta^S_u(\diff S_u+\diff D_u)
		+\int_0^t\upvartheta^S_u\diff R^S_u
		+\int_0^t\sum_{i=1}^2\theta^{i}_u\diff P^{i}_u
		+\int_0^t\sum_{i=1}^2\upvartheta^{i}_u\diff R^{i}_u
		\\[3pt]\label{eq:portfolio-gain-process}
		& \qquad 
		+\int_0^t\theta^\ell_u\diff B^\ell_u
		+\int_0^t\theta^b_u\diff B^b_u
	\end{align}

We impose a series of conditions on the trading strategies. First while the dealer can take long (i.e. positive number of units) or short (negative) positions in risky assets, the deposit account can only be used for positive cash balances while the funding account must be used for negative ones. Therefore for all $t\in[0,\bar{T}]$ we impose the \emph{funding condition}:
\begin{align}
	\label{eq:bank-account-constraint}
	\theta^\ell_t\geq0, \qquad
	\theta^b_t\leq0
\end{align}
Secondly, it is easy to prove that for any trading strategy such that $\theta^{\ell,1}\theta^{b,1}\neq0$ there exists a strategy $\theta^{\ell,0}\theta^{b,0}\equiv0$ such that the portfolio value is equal in both cases but the latter has higher cumulative gains. Therefore we impose the \emph{netting condition} for all $t\in[0,\bar{T}]$ which states we do not simultaneously lend and borrow cash \citep{Mercurio2014}:
	\begin{align}
		\label{eq:netting-condition}
		\theta^\ell_t\theta^b_t=0
	\end{align}
The value of trading strategies is restricted to be bounded from below by a real number $A\geq0$ \citep{JeanblancYorChesney2009}:
	\begin{align}
		V^\Theta_t\geq -A
	\end{align}
for all $t\in[0,\bar{T}]$. We say a trading strategy $\Theta$ is \emph{admissible} if it is bounded from below and satisfies both the funding and netting conditions.
\newline

We end this section by providing necessary conditions for absence of arbitrage in market $\mathfrak{M}$. We recall from \citet{Duffie2001} that a strategy is \emph{self-financing} if the next equality holds for all $t$:
	\begin{align}
	\label{eq:def-self-financing}
	G^\Theta_t=V^\Theta_t-V^\Theta_0
	\end{align}
We define an \emph{arbitrage} strategy along the lines of \citet{Bjork2020}. It is a self-financing trading strategy $\Theta$ such that for some $t\in[0,\bar{T}]$:
	\begin{align}
		\label{eq:arb-def}
		V^\Theta_0=0,\qquad \mathbb{P}(G^\Theta_t\geq0)=1,\qquad \mathbb{P}(G^\Theta_t>0)>0
	\end{align}
We state the following proposition.
\begin{proposition}
	\label{prop:no-arbitrage}
	If the financial market $\mathfrak{M}$ is \emph{arbitrage-free} then the following must hold:
		\begin{align}
			\label{eq:funding-arbitrage}
			\forall\; h\in\{h_S,h_1,h_2\},\ & r_\ell\leq h\leq r_b\\[4pt]
			\label{eq:credit-arbitrage}
			\forall\; i\in\{1,2\},\ &h_i<r_i
		\end{align}	
\end{proposition}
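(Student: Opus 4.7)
The plan is to argue by contradiction: for each inequality that could be violated, I would construct a self-financing strategy with $\Pi^\Theta_0=0$, $\mathbb{P}(G^\Theta_t\geq 0)=1$, and $\mathbb{P}(G^\Theta_t>0)>0$.

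The pivotal observation I would use throughout is that combining a long position in an asset with a short position in its repo strips out all risk and returns a risk-free continuous stream at the corresponding repo rate. Indeed, from \eqref{eq:stock-price-process}--\eqref{eq:stock-repo-model} one has $\diff S_t+\diff D_t-\diff R^S_t=h_SS_{t^-}\diff t$, and from \eqref{eq:zcbond-model}--\eqref{eq:zcbond-repo-model} one has $\diff P^{(i)}_t-\diff R^{(i)}_t=h_iP^{(i)}_{t^-}\diff t$, both with the diffusion and jump terms cancelling exactly. From this the double inequality $r\leq h\leq f$ for $h\in\{h_S,h_1,h_2\}$ follows via two mirror arbitrages: if $h>f$, I would be long the asset, short the matching repo, and finance the position by borrowing at the funding rate, yielding risk-free instantaneous gain $(h-f)X_{t^-}\diff t>0$; if $h<r$, I would flip the signs and deposit the short-sale proceeds at rate $r$, giving $(r-h)X_{t^-}\diff t>0$. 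In both cases accumulated gains would be channelled through whichever cash account keeps the optimality condition \eqref{eq:optimality-condition} in force.

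The credit bound $h_i<r_i$ calls for a different tactic, since no risk-free synthesis is needed -- the defaultable bond's own jump supplies the arbitrage. I would take the single-leg strategy $\upvartheta^{(i)}_t=-1$ (short bond repo, no offsetting bond position) with $\theta^B_t$ dynamically set so that $\theta^B_tB_t=G^\Theta_t$, which forces self-financing with zero initial wealth. Combining \eqref{eq:zcbond-model} and \eqref{eq:zcbond-repo-model} gives $-\diff R^{(i)}_t=P^{(i)}_{t^-}\left[(h_i-r_i)\diff t+\diff J^{(i)}_t\right]$, and solving the resulting linear equation $\diff G^\Theta_t=-\diff R^{(i)}_t+rG^\Theta_t\diff t$ yields
\begin{align*}
G^\Theta_t=\int_0^t e^{r(t-u)}P^{(i)}_{u^-}\left[(h_i-r_i)\diff u+\diff J^{(i)}_u\right].
\end{align*}
Whenever $h_i\geq r_i$, both contributions are non-negative, so $G^\Theta_t\geq 0$ a.s.; moreover on the event $\{\tau_i\leq t\}$, which has positive probability for any $t>0$ provided $\lambda_i>0$, the jump term contributes $e^{r(t-\tau_i)}P^{(i)}_{\tau_i^-}>0$. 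This produces the required arbitrage and forces the strict bound $h_i<r_i$.

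The main obstacle will be bookkeeping around self-financing rather than new mathematical content: in each case I have to verify $\diff G^\Theta=\diff\Pi^\Theta$ for a strategy that dynamically rebalances $\theta^B$ or $\theta^F$ to absorb the risk-free drifts and the jump-at-default, all while honouring the constraint $\theta^B_t\theta^F_t=0$. Because the cash accounts $B$ and $F$ are deterministic and the gain processes are available in closed form, this amounts to a routine check rather than a genuine difficulty.
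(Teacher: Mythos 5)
Your proposal is correct and follows essentially the same route as the paper: the bounds $r\leq h\leq f$ are obtained exactly as in the paper's proof by pairing a position in the asset with the offsetting repo position (so the diffusion and jump risks cancel and a riskless stream at rate $h$ remains) and financing through the deposit or funding account, while the bound $h_i<r_i$ is obtained from the same stand-alone short position in the bond repo, whose gains $P^{(i)}_{t^-}\bigl((h_i-r_i)\diff t+\diff J^{(i)}_t\bigr)$ are nonnegative with a strictly positive jump at default whenever $h_i\geq r_i$. The only differences are cosmetic bookkeeping (you sweep the accumulated gains into the deposit account and solve the resulting linear equation, and you make the mild assumption $\lambda_i>0$ explicit), which does not change the argument.
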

\begin{proof}
	See Appendix \ref{app:proof-no-arbitrage}.
\end{proof}

Immediate consequences are that under no-arbitrage: $r_\ell\leq r_b$ and $0<\gamma_i$ for any $i$ where the second inequality is a consequence from Proposition \eqref{prop:no-arbitrage} and assumption \eqref{eq:bond-return-rate}. In practice we expect the borrowing rate $r_b$ for the dealer to be close to its own bond rate $r_i$.
\newline

Under conditions \eqref{eq:funding-arbitrage}-\eqref{eq:credit-arbitrage} the prices of risky assets (stock, bonds, repos) have neither a.s. increasing trajectories nor a.s. decreasing ones \citep[see Proposition 9.9 in][on no-arbitrage conditions in L\'evy market models]{ContTankov2004}. 
\newline

For the remaining of the paper we assume that $\mathfrak{M}$ is arbitrage-free as well as \emph{complete}, meaning that any $\mathscr{F}_T$-mesurable contingent cash flow $X:\Omega\rightarrow\mathbb{R}$ for $T\in(0,\bar{T}]$ can be replicated using the traded assets.
\newline

Having described the financial market as well as the trading strategies available to the broker-dealer, we can now formulate the valuation problem for a derivative claim.

\subsection{Statement of the valuation problem}
\label{sec:problem-statement}
We assume counterparty 2 (the client) wants to trade with counterparty 1 (the dealer) a \emph{vulnerable} European derivative claim written on the stock price $S$ and with expiry $T\in(0,\bar{T})$ such that $T\leq T_1,T_2$ -- where vulnerable means that the claim is subject to the risk that any party might default.
\newline

The claim design is as follows:
	\begin{itemize}
		\item If $\tau>T$, both parties have survived until expiry and the claim pays the cash amount ${{f}}(S_T)$ at time $T$ where ${{f}}:\mathbb{R}_{+}^{*}\rightarrow\mathbb{R}$ is the \emph{terminal payoff} function. We assume the claim has bilateral payoffs meaning the sign of the payoff function ${{f}}$ can be both negative and positive. 
		\item If $\tau\leq T$, one party has defaulted before expiry and the claim is terminated after paying out an amount which depends on its \emph{mark-to-market} (MTM) value, which is an $\mathbb{F}$-adapted process $M:[0,T]\times\Omega\rightarrow\mathbb{R}$ that admits the representation $M_t=m(t,S_t)$ where $m:[0,T]\times\mathbb{R}_{+}^{*}\rightarrow\mathbb{R}$ is a continuous function. 
	\end{itemize}

The MTM value is usually independent from the counterparties transacting the claim \citep[see][]{BurgardKjaer2011} and serves as a benchmark against which to calculate the payout at default, but the paid amount normally depends on whether either the creditor or the debtor party has defaulted first: generally speaking, a creditor cannot expect to recover the full MTM value from its debtor. 
\newline

The \emph{recovery} or \emph{close-out amount} represents the actual amount which is either paid or received by the dealer and which depends on the MTM value at default. The ISDA 2002 Master Agreement which governs bilateral derivative transactions explains recovery is usually done either at full MTM value if the claim has positive value to the defaulting party, at a fraction of the MTM otherwise. 
\newline

To model recovery, we define the process $\wthalf{M}:[0,T]\times\Omega \rightarrow \mathbb{R}$ using a function $\widetilde{m}:[0,T]\times\mathbb{R}_+^*\rightarrow\mathbb{R}$:
	\begin{align}
		\label{eq:M-tilde}
		\wthalf{M}_t
		=\widetilde{m}(t,S_{t-})
		:=m(t,(1+\kappa)S_{t-})
	\end{align}
which is constructed to be left-continuous and therefore $\mathbb{F}$-predictable. We introduce recovery processes $Z^{1}, Z^{2}:[0,T]\times\Omega\rightarrow\mathbb{R}$ and recovery rates $\varkappa_1,\varkappa_2\in(0,1]$ where $Z^{i}$ (resp. $\varkappa_i$) represents the amount (resp. fraction) recovered (or paid out) by party $i$ in case it defaults first. Formally, we require the recovery process to be $\mathbb{F}$-predictable as in \citet{BieleckiJeanblancRutkowski2004} hence we posit the following specification which is similar to that from \citet{BurgardKjaer2011}:
	\begin{align}
		\label{eq:recovery-value-1}
		Z_t^{1}&:=(\wthalf{M}_t)^+-\varkappa_1(\wthalf{M}_t)^-
		\\[3pt]
		\label{eq:recovery-value-2}
		Z_t^{2}&:=\varkappa_2(\wthalf{M}_t)^+-(\wthalf{M}_t)^-
	\end{align}
The recovery values can be modelled as $Z^{i}_t=z_i(t,(1+\kappa)S_{t-})$ where $z_i:[0,T]\times\mathbb{R}_{+}^{*}\rightarrow\mathbb{R}$ is a continuous function, so that the claim is terminated by paying off the close-out amount $z_i(\tau_i,(1+\kappa)S_{\tau_i-})=z_i(\tau_i,S_{\tau_i})$ if $\tau=\tau_i$. By setting $\varkappa_i>0$ we assume recovery is strictly positive.
\newline

In the following, we will use the notation $\widehat{\mathcal{C}}_1(f,z_1,z_2)$ to refer to this claim. This notation stresses that, under our assumptions, the claim is fully specified by its terminal payoff function ${{f}}$ and the recovery pair $(z_1,z_2)$; moreover, cash flows are defined with respect to counterparty 1. If there is no ambiguity, we will shorten to $\widehat{\mathcal{C}}_1$.
\newline

Based on our description of $\widehat{\mathcal{C}}_1$, we choose to characterize the claim cash flows by introducing value, dividend and gain processes $\widehat{V}, \widehat{D}, \widehat{G}:[0,T]\times\Omega\rightarrow\mathbb{R}$, respectively, defined as follows:
	\begin{subequations}
	\label{eq:def-contract}
	\begin{alignat}{2}
		\label{eq:def-contract-price}
		&\widehat{V}_t:=
		\indicator{t<\tau}
		V_t,&&\qquad \widehat{V}_0=V_0\\[4pt]
		\label{eq:def-contract-dividend}
		&\widehat{D}_t:=\sum_{i=1}^2\int_{0}^t\left(
		\indicator{u\leq\tau}
		Z^i_u\right)\diff
		\indicator{u\geq\tau_i},
		&&\qquad \widehat{D}_0=0\\[4pt]
		\label{eq:def-contract-gains}
		&\widehat{G}_t:=(\widehat{V}_t-\widehat{V}_0)+\widehat{D}_t,&&\qquad \widehat{G}_0=0
	\end{alignat}
	\end{subequations}
where $V:[0,T]\times\Omega\rightarrow\mathbb{R}$ in \eqref{eq:def-contract-price} is the \emph{pre-default value} of the claim. All price process in $\mathfrak{M}$ have the Markov property under $\mathbb{P}$ thus we assume $V$ admits the following representation for $t\in[0,T]$:
	\begin{align}
		V_t:=v(t,S_t)
	\end{align}
for a twice continuously differentiable function $v:[0,T]\times\mathbb{R}_{+}^{*}\rightarrow\mathbb{R}$, which we call the \emph{pre-default valuation function}. The dividend process is defined as the sum of integrals of the predictable process $\mathbbm{1}_{[0,\tau]}Z^{i}$ with respect to the \cadlag\ jump process $J^{i}$ for $i=1,2$.
\newline

Similar modelling approaches have already been suggested in the literature. The approach we adopt here is close to that from \citet{BieleckiRutkowski2004} although the latter justify the representation \eqref{eq:def-contract} using martingale methods. 
\newline

Alternatively, our characterization is equivalent to defining a single (price) process $\widehat{V}^\prime$ through a function $\widehat{v}:[0,T]\times\mathbb{R}_{+}^{*}\times\{0,1\}^2\rightarrow\mathbb{R}$ such that $\widehat{G}=\widehat{V}^\prime$ and:
	\begin{subequations}
	\begin{align}
		\label{eq:BK-notation}
		& \widehat{V}^\prime_t=\indicator{t\leq\tau}\widehat{v}(t,S_t,J^{1}_t,J^{2}_t)\\[2pt]
		& \widehat{v}(t,s,0,0):=v(t,s)\\[2pt]
		& \widehat{v}(t,s,1,0):=z_1(t,(1+\kappa)s)\\[2pt]
		& \widehat{v}(t,s,0,1):=z_2(t,(1+\kappa)s)
	\end{align}
	\end{subequations}
which is in the spirit of \citet{BurgardKjaer2011}.

\begin{remark}
	It is possible to express $\widehat{V}$ as a function of all asset prices in $\mathfrak{M}$. However, as noted in Section 7.9.2 of \citet{JeanblancYorChesney2009}, we can restrict ourselves to a function of $t$ and $S$ only because the bank accounts and the non-defaultable components of the zero-coupon bonds are deterministic, and any such dependency is captured by the time variable $t$.
\end{remark}

\begin{remark}
	\citet{BieleckiRutkowski2004} define the value from the recovery dividend $\widehat{D}$ assuming that it gets reinvested in the deposit account $B^\ell$ after default. In practice, the claim is terminated at $\tau$ and we can neglect the trading strategy afterwards.
\end{remark}

Using the definition of stochastic integrals with respect to jump processes, the gains $\widehat{G}$ admit the representation:
	\begin{align}
		\widehat{G}_t+V_0
		=\indicator{t<\tau}V_t+\sum_{i=1}^2\left(\indicator{(t\wedge\tau_i)\leq\tau}Z^i_{t\wedge\tau_i}\right)\indicator{t\geq\tau_i}
		=\indicator{t<\tau}V_t+\sum_{i=1}^2\left(\indicator{\tau_i\leq\tau}Z^i_{\tau_i}\right)\indicator{t\geq\tau_i}
	\end{align}
Therefore the \emph{loss-given-default} (LGD) $\Delta\widehat{G}:[0,T]\times\Omega\rightarrow\mathbb{R}$, defined for any $t$ as $\Delta \widehat{G}_t:=\widehat{G}_t-\widehat{G}_{t-}$, satisfies at $\tau$:
	\begin{align}
		\Delta\widehat{G}_\tau
		=\sum_{i=1}^2\indicator{\tau=\tau_i}Z^{i}_{\tau}-V_{\tau-}
	\end{align}
The gain process as defined in \eqref{eq:def-contract-gains} exhibits the characteristics we want in a valuation model with credit and wrong way risks. Before default, the claim has a certain value $V$ which fluctuates through time based on the changes in the stock price $S$. At default, the value $V_{\tau-}$ of the claim prior to any jump is extinguished, while a recovery dividend $Z^i$ is paid out after the stock price $S$ has jumped -- where $i$ is the defaulting party.
\newline

The objective of counterparty 1 is to hedge the claim by replicating the gains $\widehat{G}$ generated by $\widehat{\mathcal{C}}_1$ over $[0,T\wedge\tau]$. Assuming the recovery functions $z_1,z_2$ are exogenous, the problem reduces to finding the pre-default value $V$, and we can now formally state the dealer valuation problem.
\begin{problem}
	\label{problem}
	Does the pre-default valuation function $v$ for the vulnerable bilateral claim $\widehat{\mathcal{C}}_1({{f}},z_1,z_2)$ -- under credit, funding and wrong way risk as well as positive recovery -- admit an analytical expression?
\end{problem}

\section{PDEs induced by the valuation problem for vulnerable contracts}
\label{sec:pdes-associated-valuation-problem}

Now that we have formulated the valuation problem \ref{problem}, we use hedging arguments to derive PDEs which characterize the valuations of such contracts.

\subsection{Non-linear PDE for the general case}
\label{sec:pde-general-case}
To derive an expression for the pre-default function $v$, we construct a self-financing trading strategy $\Theta$ such that the portfolio induced by $\Theta$ has same value $\widehat{V}$ as the claim up to default while hedging the recovery dividend $\widehat{D}$.
\newline

Formally, let us consider a hedged portfolio where we hold a single unit of claim $\widehat{\mathcal{C}}_1$ with value $\widehat{V}$ together with assets from $\mathfrak{M}$ in quantities given by a trading strategy $\Theta$. For $\Theta$ to hedge the claim up to default included, we need to ensure the strategy's valuation matches that from the claim:
	\begin{align}
		\label{eq:hedging-condition-price}
		V^\Theta_t+\widehat{V}_t=0
	\end{align}
for any $t\in[0,T]$; we refer to \eqref{eq:hedging-condition-price} as the \emph{hedging condition} for claim $\widehat{\mathcal{C}}_1$. We also require $\Theta$ to be self-financing i.e. $G^\Theta_t=V^\Theta_t-V^\Theta_0$ up to default%
	\footnote{Note at $\tau$ the portfolio is trivially self-financing because the only gain arises from the price jump.}. 
From the definition of $\widehat{G}$, observe that $\widehat{G}_t=\widehat{V}_t-\widehat{V}_0$ over $\{t<\tau\}$; moreover if \eqref{eq:hedging-condition-price} holds then $V^\Theta_t-V^\Theta_0=\widehat{V}_t-\widehat{V}_0$ for any $t$. Hence we formulate the following \emph{self-financing condition}:
	\begin{align}
		\label{eq:self-financing-condition-gains}
		G^{\Theta}_t+\widehat{G}_t=0
	\end{align}
for any $t\in[0,T]$. We say a trading strategy $\Theta$ \emph{hedges} claim $\widehat{\mathcal{C}}_1$ if it is admissible and satisfies both conditions 
	\eqref{eq:hedging-condition-price} and
	\eqref{eq:self-financing-condition-gains}.
\newline

We define the \emph{exposures} to the stock $\xi^S$ and the bonds $\xi^{i}$ as the total units in the physical asset and its repo for any $t\in[0,T]$:
	\begin{align}
		\label{eq:def-exposures}
		\xi^S_t:=\theta^S_t+\upvartheta^S_t,\qqquad
		\xi^{i}_t:=\theta^{i}_t+\upvartheta^{i}_t
	\end{align}
The following proposition characterizes the \emph{hedging strategy} for the vulnerable claim.

\begin{proposition}
	\label{prop:hedging-strategy}
	The hedging strategy $\Theta$ for the vulnerable claim $\widehat{\mathcal{C}}_1(f,z_1,z_2)$ with pre-default valuation $V$ and dividend process $\widehat{D}$ satisfies for $i\in\{1,2\}$ and $t\in[0,T]$:
	\begin{align}
		\label{eq:stock-hedging}
		&\xi^S_t
		=-\indicator{t\leq\tau}v_s(t,S_{t-})
		\\[11pt]\label{eq:bond-hedging}
		&\xi^{i}_t
		=\frac{\indicator{t\leq\tau}}{P^{i}_{t-}}
		\left(z_i(t,(1+\kappa)S_{t-})-v(t,S_{t-})-v_s(t,S_{t-})\kappa S_{t-}\right)
		\\\label{eq:deposit-hedging}
		&\theta^\ell_t
		=\frac{1}{B^\ell_t}\left(
		-\indicator{t\leq\tau}V_{t-}-\theta^S_tS_{t-}
		-\sum_{i=1}^2\theta^{i}_tP^{i}_{t-}\right)^+
		\\\label{eq:funding-hedging}
		&\theta^b_t
		=-\frac{1}{B^b_t}\left(
		-\indicator{t\leq\tau}V_{t-}-\theta^S_tS_{t-}
		-\sum_{i=1}^2\theta^{i}_tP^{i}_{t-}\right)^-
	\end{align}
\end{proposition}
\begin{proof}
	See Appendix \ref{app:proof-lemma-hedging-strategy}.
\end{proof}

We introduce the \emph{funding policy} $\upalpha=(\alpha_S,\alpha_1,\alpha_2)\in[0,1]^3$ which represents the exposure split between outright purchases and repos for the stock and bonds, where the vector $\upalpha$ represents the fraction of the exposure funded using the bank accounts, assumed to be static across $[0,T]$. Hence in the stock case, we have:
\begin{align}
	\label{eq:funding-policy-stock}
	\theta^S_t=\alpha_S\xi^S_t,\qquad
	\upvartheta^S_t=(1-\alpha_S)\xi^S_t,
\end{align}
An equivalent definition holds for the funding strategy $\alpha_1, \alpha_2$ of the bonds. We define the standard \emph{pricing operator} $\mathscr{L}_S$ for process $S$ as follows \citep[][]{Piterbarg2010}:
\begin{align}
	\label{eq:pricing-operator}
	\mathscr{L}_Sv(t,s)
	:=\left(\frac{\partial}{\partial t}+\frac{\sigma^2s^2}{2}\frac{\partial^2}{\partial s^2}\right)v(t,s)
\end{align}

\begin{proposition}
	\label{prop:pricing-pde}
	The pre-default valuation function $v:(t,s)\mapsto v(t,s)$ can be characterized as the solution to the following Cauchy problem:
		\begin{multline}
			\label{eq:pricing-pde}
			-r_Vv
			+\mathscr{L}_Sv
			+(r_S-q-\kappa\lambda_V)v_ss
			+\sum_{i=1}^2\lambda_iz_i\\
			-\varphi\left((\alpha_P-1)v+(\alpha_S+\alpha_P\kappa)v_ss
			-\sum_{i=1}^2\alpha_{i}z_i\right)^-
			=0
		\end{multline}
	with terminal condition $v(T,s)={{f}}(s)$ for $s\in\mathbb{R}_{+}^{*}$ and $t\in[0,T]$, where we have defined the following variables for $i\in\{1,2\}$:
		\begin{align}
		\label{eq:def-rs}
		r_S
		&:=\alpha_Sr_\ell+(1-\alpha_S)h_S\\[3pt]
		\label{eq:def-lambdai}
		\lambda_i
		&:=\gamma_i-(1-\alpha_i)(h_i-r_\ell)\\[4pt]
		\label{eq:def-lambdav}
		\lambda_V
		&:=\lambda_1+\lambda_2\\[3pt]
		\label{eq:def-rv}
		r_V
		&:=r_\ell+\lambda_V\\[4pt]
		\label{eq:def-varphi}
		\varphi
		&:=r_b-r_\ell\\[3pt]
		\label{eq:def-alphap}
		\alpha_P
		&:=\alpha_1+\alpha_2
		\end{align}
\end{proposition}

\begin{proof}
	See Appendix \ref{app:proof-prop-pricing-pde}.
\end{proof}

Even before assuming any specific functional form for the recovery functions $z_1$ and $z_2$, the PDE \eqref{eq:pricing-pde} is non-linear due to the presence of the minimum operator around the unknown function $v$ and its spatial partial derivative $v_s$. To the best of our knowledge, problem \eqref{eq:pricing-pde} does not admit a general analytical solution due to this non-linearity \citep[see also][for a discussion]{Piterbarg2015}.

\subsection{PDE for the linearised case and stochastic representation via Feynman-Kac}
\label{sec:pde-linearised-case}

We seek to determine conditions under which the PDE \eqref{eq:pricing-pde} can be linearised with respect to the pre-default function $v$. Let $\alpha\in[0,1]$, we define the following funding policy $\upalpha^\star$:
	\begin{subequations}
	\label{eq:linearising-funding}
	\begin{align}
		\label{eq:linear-bond1-funding}
		\alpha_1^\star&:=\alpha\\[2pt]
		\label{eq:linear-bond2-funding}
		\alpha_2^\star&:=1-\alpha\\[2pt]
		\label{eq:linear-stock-funding}
		\alpha_S^\star&:=-\kappa(\alpha_1^\star+\alpha_2^\star)
	\end{align} 
	\end{subequations}
Under this funding policy $\upalpha^\star$ the following equalities hold:
	\begin{align}
		\label{eq:linearising-funding-consequences}
		r_S=h_S+\kappa(h_S-r_\ell),\qquad
		\lambda_1=\gamma_1-(1-\alpha)(h_1-r_\ell),\qquad
		\lambda_2=\gamma_2-\alpha(h_2-r_\ell)
	\end{align}
The following corollary ensues.

\begin{corollary}
	\label{coro:pricing-pde-linearised}
	Under the assumptions of Proposition \ref{prop:pricing-pde} and the funding policy $\upalpha^\star$ defined in \eqref{eq:linearising-funding}, the pre-default Cauchy problem \eqref{eq:pricing-pde} becomes linear:
	\begin{align}
		\label{eq:pricing-pde-linearised}
		&-r_Vv
		+\mathscr{L}_Sv
		+(h_S-q+\kappa(h_S-r_V))v_ss
		+\varrho_1(\widetilde{m})^+-\varrho_2(\widetilde{m})^-=0
	\end{align}
	with terminal condition $v(T,s)={{f}}(s)$ for $(s,t)\in\mathbb{R}_{+}^{*}\times[0,T]$, where we have defined the following variables:
		\begin{align}
			\varrho_1&:=\lambda_1+\lambda_2\varkappa_2-\varphi(\alpha+(1-\alpha)\varkappa_2)\\[1pt]
			\varrho_2&:=\lambda_1\varkappa_1+\lambda_2
		\end{align}
\end{corollary}

\begin{proof}
	See Appendix \ref{app:proof-coro-pricing-pde-linearised}.
\end{proof}

We have assumed $\alpha_S\geq0$ hence the funding policy $\upalpha^\star$ is only viable when $\kappa\in[-1,0]$ which corresponds to the case where the stock price $S$ falls whenever counterparties 1 and 2 default: this is the most likely (and interesting) case in practice. Therefore we assume $\kappa\leq0$ from now on.
\newline

We call $\upalpha^\star$ the \emph{linearising funding policy}. In particular, the strategy for the bonds $(\alpha_1^\star,\alpha_2^\star)$ generalizes \citet{BurgardKjaer2011} in which the authors consider the specific case $\alpha=1$. Equation \eqref{eq:linear-stock-funding} is equivalent to funding all exposure to the stock price $S$ through repurchase agreements except the default profit-and-loss $\Delta S$. On the other hand, equations \eqref{eq:linear-bond1-funding}-\eqref{eq:linear-bond2-funding} imply half of the total exposure to the defaultable bonds is financed using repos. These assumptions are consistent with standard market practice whereby participants, in particular large derivative dealers, tend to fund hedging portfolios using repos. 
\newline

Assuming the MTM function $m$ is independent from the pre-default function $v$, the PDE \eqref{eq:pricing-pde-linearised} is now linear because the term within the minimum operator does not depend any longer on the unknown function.
\newline

The solution to the linearised Cauchy problem \eqref{eq:pricing-pde-linearised} admits a stochastic representation which is derived using the Feynman-Kac formula. We give such representation in the following theorem, where we use the notation $\mathbb{E}_{t,s}(\cdot):=\mathbb{E}(\cdot|S_t=s)$ for conditional expectations.

\begin{theorem}
	\label{th:conditional-expectation}
	Assume the functions $f$, $m$ and $v$ satisfy a polynomial growth condition for some constants $a>0$ and $b\geq2$:
		\begin{alignat}{2}
			\label{eq:polynomial-growth-f}
			& |{{f}}(x)|\leq a(1+x^{b}),\quad 
			&&x\in\mathbb{R}_{+}^{*}\\[3pt]
			\label{eq:polynomial-growth-m}
			& |m(t,x)|\leq a(1+x^{b}),\quad 
			&&x\in\mathbb{R}_{+}^{*},\enspace t\in[0,T]\\[3pt]
			\label{eq:polynomial-growth-v}
			& |v(t,x)|\leq a(1+x^{b}),\quad 
			&&x\in\mathbb{R}_{+}^{*},\enspace t\in[0,T]
		\end{alignat}
	Then the solution $v$ to the linearised Cauchy problem \eqref{eq:pricing-pde-linearised} is unique and admits the following stochastic representation for any $(t,s)\in[0,T]\times\mathbb{R}_{+}^{*}$:
		\begin{align}
			\label{eq:conditional-expectation}
			v(t,s)&=\mathbb{E}_{t,s}^{\widehat{\mathbb{Q}}}\left(
			e^{-r_V(T-t)}f(S_T)\right)
			+\mathbb{E}_{t,s}^{\widehat{\mathbb{Q}}}\left(
			\int_t^Te^{-r_V(u-t)}\left(
			\varrho_1(\wthalf{M}_u)^+
			-\varrho_2(\wthalf{M}_u)^-
			\right)\diff u\right)
		\end{align}
	under a probability measure $\widehat{\mathbb{Q}}$ 	where $S$ solves the SDE for $t\in[0,T]$:
		\begin{align}
			\label{eq:dynamics-s-pricing-measure}
			\diff S_t=
			S_t((h_S-q+\kappa(h_S-r_V))\diff t+\sigma\diff \widehat{W}_t)
		\end{align}
	 with initial condition $S_0$ and where $\widehat{W}$ is a Brownian motion under measure $\widehat{\mathbb{Q}}$.
\end{theorem}
\begin{proof}
	The conditions for applying the Feynman-Kac formula as stated in Theorem 7.6 from \citet{KaratzasShreve1991} are fulfilled, including our assumptions that $v$ is twice continuous differentiable, and the payoff $f$ and MTM $m$ functions are continuous. In particular, it is evident that if $m$ satisfies the polynomial growth condition \eqref{eq:polynomial-growth-m} then $\varrho_1(\widetilde{m})^+-\varrho_2(\widetilde{m})^-$ does too.
\end{proof}

Theorem \ref{th:conditional-expectation} gives a probabilistic solution to the linearised Cauchy problem for $v$ under a measure $\widehat{\mathbb{Q}}$ where the stock price process $S$ does not jump but where its dynamics are compensated by the term $\kappa(h_S-r_V)S_t\diff t$, which includes the price's expected jump over $\diff t$ under $\mathbb{P}$. We highlight that, contrary to the approach from \citet{BurgardKjaer2011}, we do not make any assumption or \emph{ansatz} on the decomposition of the pre-default valuation $V$ into additive components when deriving formula \eqref{eq:conditional-expectation}.

\begin{proposition}
	\label{prop:equivalent-measures}
	The stock price process $S$ under the physical measure $\mathbb{P}$ is equivalent to the price process under the pricing measure $\widehat{\mathbb{Q}}$ if and only if $\kappa>-1$ i.e. the stock price does not collapse to zero at default.
\end{proposition}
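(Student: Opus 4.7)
My plan is to derive explicit pathwise representations of $S$ under each measure and then use the support of $S_T$ as the key discriminator between the two directions of the equivalence.

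First, I would integrate the SDE under $\mathbb{P}$ by applying It\^o's formula to $\log S_t$ (with the jump handled through the Dol\'eans--Dade exponential), which yields
\begin{align*}
S_t = S_0\exp\!\left(\left(\mu-q-\tfrac{\sigma^2}{2}\right)t+\sigma W_t\right)(1+\kappa)^{J_t},
\end{align*}
where $J_t\in\{0,1\}$ since $J$ jumps at most once, at $\tau=\tau_1\wedge\tau_2$. Under $\widehat{\mathbb{Q}}$, the SDE \eqref{eq:dynamics-s-pricing-measure} is a pure geometric Brownian motion, so $S_t>0$ almost surely for every $t\in[0,T]$.

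For the necessity direction, I suppose $\kappa=-1$, which is the only remaining possibility in the admissible range $[-1,1]$. The pathwise representation then forces $S_t=0$ on $\{t\geq\tau\}$, so the event $A:=\{S_T=0\}$ satisfies $\mathbb{P}(A)=1-e^{-(\lambda_1+\lambda_2)T}>0$, while $\widehat{\mathbb{Q}}(A)=0$ by strict positivity of $S$ under $\widehat{\mathbb{Q}}$. This rules out equivalence.

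For the sufficiency direction, I assume $\kappa>-1$. The idea is to absorb the deterministic jump of $S$ into a multiplicative factor, leaving only a Brownian-drift adjustment to be performed by Girsanov. I introduce the strictly positive auxiliary process $\widehat{S}_t:=(1+\kappa)^{-J_t}S_t$ and verify, using It\^o's product rule, that under $\mathbb{P}$ it satisfies the pure-diffusion SDE $d\widehat{S}_t=\widehat{S}_t((\mu-q)\,dt+\sigma\,dW_t)$. A classical Girsanov transform with market price of risk $\theta:=((\mu-q)-(\hat{r}_S-\kappa\hat{b}_V))/\sigma$ then defines a measure $\widehat{\mathbb{Q}}$ equivalent to $\mathbb{P}$ on $\mathscr{G}_T$ via the strictly positive density $\exp(-\theta W_T-\tfrac{1}{2}\theta^2 T)$, under which $\widehat{S}$ has the drift $\hat{r}_S-\kappa\hat{b}_V$ prescribed by \eqref{eq:dynamics-s-pricing-measure}.

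The main obstacle will be the sufficiency direction: a Girsanov transform on the Brownian motion alone cannot reconcile the jump of $S$ under $\mathbb{P}$ with the diffusive dynamics under $\widehat{\mathbb{Q}}$. The resolution is to work with $\widehat{S}$, since the factor $(1+\kappa)^{-J_t}$ is well-defined and strictly positive precisely when $\kappa>-1$; this is what ties the analytic condition to the structural one and reduces the change of measure to a classical drift transformation.
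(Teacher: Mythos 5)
Your necessity argument is correct and is exactly the paper's: at $\kappa=-1$ the pathwise representation $S_t=\widehat{S}_t(1+\kappa)^{J_t}$ forces $S_T=0$ on $\{\tau\leq T\}$, an event with positive $\mathbb{P}$-probability and zero $\widehat{\mathbb{Q}}$-probability, so equivalence fails.

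The sufficiency direction contains a genuine gap. The Girsanov transform you describe produces a measure under which the \emph{auxiliary} process $\widehat{S}_t=(1+\kappa)^{-J_t}S_t$ is a GBM with drift $\hat{r}_S-\kappa\hat{b}_V$, but under that same measure $S_t=(1+\kappa)^{J_t}\widehat{S}_t$ still jumps at $\tau$ (the Girsanov density $\exp(-\theta W_T-\tfrac12\theta^2T)$ depends only on the Brownian component and leaves the law of $J$ untouched). So the measure you construct is \emph{not} the $\widehat{\mathbb{Q}}$ of Theorem \ref{th:conditional-expectation}, under which $S$ itself follows the pure diffusion \eqref{eq:dynamics-s-pricing-measure}. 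Worse, no Radon--Nikodym change of measure on $(\Omega,\mathscr{G}_T)$ can reconcile the two: the event ``$S$ has a jump on $[0,T]$'' has $\mathbb{P}$-probability $1-e^{-(\lambda_1+\lambda_2)T}>0$ but $\widehat{\mathbb{Q}}$-probability zero, so the laws of $S$ on path space are mutually singular for \emph{every} $\kappa$. The proposition must therefore be read in the paper's weaker sense -- that the null sets (equivalently, the supports) of the one-dimensional laws of $S$ coincide -- and for that sufficiency needs no measure change at all: when $\kappa>-1$ the factor $(1+\kappa)^{J_t}$ is strictly positive, so $S_t>0$ $\mathbb{P}$-a.s., and since the $\widehat{\mathbb{Q}}$-GBM is also strictly positive both laws have support $(0,\infty)$. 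Your own opening observation (``use the support of $S_T$'') is all that is required; the Girsanov machinery is both inapplicable to the intended $\widehat{\mathbb{Q}}$ and unnecessary for the claim actually being made.
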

\begin{proof}
	Under measure $\widehat{\mathbb{Q}}$ the price process \eqref{eq:dynamics-s-pricing-measure} is a pure geometric Brownian motion (GBM) with starting value $S_0=s>0$ thus its null set is $\{\omega:S(t,\omega)\leq0\}$. On the other hand, under the measure $\mathbb{P}$ the price process \eqref{eq:stock-price-process} behaves as a GBM which jumps by a percentage $\kappa\in[-1,0]$ whenever parties 1 or 2 default. Hence the process $S$ will remain strictly positive under $\mathbb{P}$ if and only if $\kappa>-1$.
\end{proof}

To ensure $\widehat{\mathbb{Q}}$ is equivalent to $\mathbb{P}$, for the rest of the paper we fix the range of allowable values for the jump size:
	\begin{align}
		\label{eq:kappa-domain}
		\kappa\in(-1,0]
	\end{align}
This assumption ensures the two probabilities are equivalent while maintaining the viability of the linearising funding policy $\alpha^\star_S$ for the stock. However it excludes certain cases where the stock price can fall to zero, e.g. if the shares are from any of the two counterparties 1 or 2.

\begin{remark}
	We refer to \citet{Friedman1964,Friedman1975} for sufficient conditions under which the polynomial growth condition \eqref{eq:polynomial-growth-v} is guaranteed to hold. For claims with \emph{linear} payoff functions, we  expect the valuation to be bounded from above by a polynomial of the stock price, hence Theorem \ref{th:conditional-expectation} should hold for most common payoffs. Once the payoff and MTM functions have been specified, one can calculate the conditional expectation \eqref{eq:conditional-expectation} then verify it satisfies condition \eqref{eq:polynomial-growth-v}.
\end{remark}

\section{Analytical valuation formula for a forward contract}
\label{sec:analytical-valuation-forward}

We now apply the framework developed in Sections \ref{sec:valuation-problem-vulnerable-derivatives} and \ref{sec:pdes-associated-valuation-problem} to the valuation of a vulnerable forward contract, for which the payoff function ${{f}}$ is defined as follows for any $s\in\mathbb{R}_{+}^{*}$:
	\begin{align}
		\label{eq:forward-payoff}
		{{f}}(s):=s-K
	\end{align}
for some strike value $K\in\mathbb{R}_{+}^{*}$.

\subsection{Risk-free close-out}
\label{sec:mtm-assumption}
Until now we have not made any assumption on the mark-to-market process $M_t$ introduced in Section \ref{sec:problem-statement}. We recall that the recovery amount in case of default is determined based on this MTM value.
\newline

We introduce an equivalent \emph{risk-free claim} $\mathcal{C}_1^*(f)$ written on the stock price $S$ and with same payoff function $f$ as the vulnerable claim $\widehat{\mathcal{C}}_1(f,z_1,z_2)$ but with no default risk from the counterparties. Because there is no credit risk from either party, we assume the dealer can hedge this contract simply by trading repos on the stock and that any negative cash balance (resp. positive) can be borrowed (resp. loaned) at a unique \emph{risk-free rate} $r\geq r_\ell$ (for example, a collateral rate).
\newline

Let $V^*_t:=v^*(t,S_t)$ be a stochastic process representing the valuation for $\mathcal{C}_1^*(f)$ where $v^*$ is a continuous function. We say $V^*$ is the \emph{risk-free value} of the vulnerable contract $\widehat{\mathcal{C}}_1$ because we are valuing a claim with same terminal payoff $f$ but in the absence of default or funding asymmetry.
\newline

Using hedging arguments, the risk-free valuation function $v^*$ satisfies the Cauchy problem:
	\begin{align}
	\label{eq:risk-free-pde}
	&-rv^*+\mathscr{L}_Sv^*+(h_S-q)v^*_ss=0
\end{align}
with terminal condition $v^*(T,s)=f(s)$. Notice this is the same PDE as \eqref{eq:pricing-pde-linearised} but with no credit, asymmetric funding or WWR components.
\newline

Assuming suitable regularity conditions on $v^*$, the Feynman-Kac theorem gives a stochastic representation for the solution to problem \eqref{eq:risk-free-pde}:
\begin{align}
	\label{eq:risk-free-expectation}
	&v^*(t,s)=\mathbb{E}^{\mathbb{Q}^*}_{t,s}\left(e^{-r(T-t)}{f}(S_T)\right)
\end{align}
under a measure $\mathbb{Q}^*$ where the dynamics of $S$ are:
\begin{align}
	\label{eq:risk-free-dynamics}
	\diff S_t=S_t((h_S-q)\diff t+\sigma\diff W^*_t)
\end{align}
with initial condition $S_0$, where $W^*$ is a Brownian Motion. In the case where the payoff function $f$ is that of a forward, the risk-free value \eqref{eq:risk-free-expectation} reduces to:
\begin{align}
	\label{eq:risk-free-forward}
	v^*(t,s)=e^{-r(T-t)}\left(F_T(t,s)-K\right)
\end{align}
where we define $F_T:[0,T]\times\mathbb{R}_+^*\rightarrow\mathbb{R}$ as the \emph{risk-free forward price} of the stock for delivery at time $T$ conditional on the price at time $t$ being $s$:
\begin{align}
	\label{eq:rf-fwd-price-stock}
	F_T(t,s)
	:=\mathbb{E}^{\mathbb{Q}^*}_{t,s}(S_T)
	=se^{(h_S-q)(T-t)}
\end{align}
We say a contract, either risk-free or vulnerable, is \emph{at-the-money} (ATM) at time $t$ when its strike is equal to the risk-free forward price:
\begin{align}
	\label{eq:def-atm}
	K=F_T(t,S_t)
\end{align}

For the remainder of this paper, we assume \emph{risk-free close-out} (or recovery) that is the mark-to-market process $M$ is equal to the risk-free value $V^*$:
	\begin{align}
		\label{eq:risk-free-mtm-assumption}
		m\equiv v^*
	\end{align}
This approach is consistent with ISDA requirements that MTM values should be established without reference to any of the parties involved in the deal.
\newline

Under risk-free close-out, we are in a position to derive a representation for the price of a vulnerable forward contract.

\begin{remark}
	\label{remark:measures-coincide}
	If $\alpha_S=\alpha_S^\star$ and $\kappa=0$ then measures $\mathbb{Q}^*$ and $\widehat{\mathbb{Q}}$ coincide as in \citet{BurgardKjaer2011}. In particular, when $\kappa$ is zero there is no dependence between stock prices and credit events. In this scenario, the calculation of vulnerable valuations is more straightforward because market and credit variables can be treated independently \citep{Gregory2020}.
\end{remark}

\subsection{The forward contract as a portfolio of European options}
\label{sec:general-k}
We first introduce the Black-Scholes pricing function for a European call option $\mathcal{BS}_C:\mathbb{R}_{+}^{*}\times(0,\bar{T}]\rightarrow\mathbb{R}_{+}^{*}$, where we recall this claim has payoff function $f_C(s):=(s-K)^+$ for a strike $K$. 
\newline

Formally, given a log-normal random variable $X:\Omega\rightarrow\mathbb{R}$ with mean $x$ and log-variance $V(\ln X)=\sigma^2y$, we define the Black-Scholes call function for $(x,y)\in\mathbb{R}_{+}^{*}\times(0,\bar{T}]$ as follows:
	\begin{align}
		\label{eq:def-bs-call-expectation}
		\mathcal{BS}_C(x,y)
		:=\mathbb{E}_x((X-K)^+)
	\end{align}	
We state the following, well-known theorem without proof.

\begin{theorem}
	\label{th:black-scholes}
	The Black-Scholes call function satisfies:
		\begin{align}
			\label{eq:bs-call-formula}
			\mathcal{BS}_C(x,y)
			=x\Phi(d_1(x,y))-K\Phi(d_2(x,y))
		\end{align}	
	where $\Phi:\mathbb{R}\rightarrow[0,1]$ is the Gaussian integral defined as:
		\begin{align}
			\label{eq:gaussian-cdf}
			\Phi(z):=
			\int_{-\infty}^z\frac{1}{\sqrt{2\pi}}
			e^{-\frac{\omega^2}{2}}\diff\omega,
			\quad z\in\mathbb{R}
		\end{align}
	and we introduce the functions $d_1,d_2:(0,\bar{T}]\rightarrow\mathbb{R}$:
		\begin{align}
			\label{eq:def-d1}
			d_1(x,y)&:=\frac{1}{\sigma\sqrt{y}}\left(
			\ln\left(\frac{x}{K}\right)+\frac{1}{2}\sigma^2y
			\right)
			\\[6pt]\label{eq:def-d2}
			d_2(x,y)&:=d_1(x,y)-\sigma\sqrt{y}
		\end{align}	
\end{theorem}
\begin{proof}
	The formula was originally derived in \citet{BlackScholes1973}; we refer to Theorem 2.3.2.1 in \citet{JeanblancYorChesney2009} for a proof.
\end{proof}

We define equivalently the Black-Scholes function $\mathcal{BS}_P$ for the price of a European put option with terminal payoff $f_P(s):=(K-s)^+$:
\begin{align}
	\label{eq:bs-put-formula}
	\mathcal{BS}_P(x,y):=
	K\Phi(-d_2(x,y))-x\Phi(-d_1(x,y))
\end{align}

We now state a valuation theorem for a forward contract, valid for any $K\in\mathbb{R}_+^*$ and which gives an analytical representation of its pre-default valuation function in terms of European option values.

\begin{proposition}
	\label{prop:valuation-forward}
	Under the assumptions of Theorem \ref{th:conditional-expectation} as well as that of risk-free close-out, the pre-default value of a vulnerable forward contract with terminal payoff function \eqref{eq:forward-payoff} is given for any $t\in[0,T]$ by:
		\begin{align}
			\notag
			v(t,s)&=
			e^{-r_V(T-t)}
			\left(F_T(t,s)
			e^{\kappa(h_S-r_V)(T-t)}-K
			\right)
			\\[2pt]\notag&\qquad
			+\varrho_1\int_t^T
			e^{-r_V(u-t)}e^{-r(T-u)}
			\mathcal{BS}_C\left((1+\kappa)F_T(t,s)
			e^{\kappa(h_S-r_V)(u-t)},u-t
			\right)\diff u
			\\[3pt]\label{eq:valuation-forward}
			&\qquad
			-\varrho_2\int_t^T
			e^{-r_V(u-t)}e^{-r(T-u)}
			\mathcal{BS}_P\left((1+\kappa)F_T(t,s)
			e^{\kappa(h_S-r_V)(u-t)},u-t
			\right)\diff u
		\end{align}
\end{proposition}
\begin{proof}
	See Appendix \ref{app:proof-th-valuation-forward}. Notice each term in $s$ inside the representation \eqref{eq:valuation-forward} can be bounded from above by a linear term in $s$, including the integrand terms given $0\leq\Phi(x)\leq1$ for any real $x$. Therefore we can find a polynomial in $s$ which bounds $|v(t,s)|$ from above, and the same is true for the MTM function $m(t,s)$. Thus the polynomial growth conditions \eqref{eq:polynomial-growth-f}, \eqref{eq:polynomial-growth-m} and \eqref{eq:polynomial-growth-v} are satisfied, and Theorem \ref{th:conditional-expectation} is applicable.
\end{proof}

Representation \eqref{eq:valuation-forward} corresponds to a linear combination between a forward, and two continuous portfolios of European call and put options. Hence the pricing formula demonstrates the forward can be replicated using a portfolio of options with carefully chosen strikes and expiries, valued under the probability measure $\widehat{\mathbb{Q}}$.
\newline

The first term in \eqref{eq:valuation-forward} captures the value arising from the contractual payoff at expiry $T$ if no counterparty has defaulted beforehand: we refer to it as the \emph{terminal} value component. The functional form of this term is similar to that of the risk-free valuation \eqref{eq:risk-free-forward} but is also sensitive to credit parameters such as credit spreads $\gamma_1,\gamma_2$ and the jump size $\kappa$. 
\newline 

The two integrals in \eqref{eq:valuation-forward} capture three additional components: 
	\begin{itemize}
		\item The first integral relates to (1) the recovery amount if the dealer is in credit at default that is $\wthalf{M}_\tau>0$, and (2) the incremental funding cost when the cash balance is negative due to asymmetric rates.
		\item The second integral depends on the recovery amount in case the dealer is liable at $\tau$.
	\end{itemize}
To align with the literature we shall refer to each integral as the \emph{credit adjustment} (or CVA) and \emph{debit adjustment} (DVA) components respectively%
	\footnote{Our definition for the debit component captures what is usually known as DVA as well as the funding valuation adjustment (FVA) arising from the funding spread $\varphi$. For a discussion on the entanglement between DVA and FVA (or lack thereof) we refer to \citet{HullWhite2012}.}. 
Because of the optionality embedded in these recovery terms, we view the terminal component as the \emph{intrinsic value} of the vulnerable forward: the value of the cash flow to be settled at expiry if default does not materialise.
\newline

Formula \eqref{eq:valuation-forward} is reminiscent of the formula for CVA under wrong way risk derived by \citet{MercurioLi2015} for FX European-style claims. However our model extends their results by considering bilateral wrong way risk and asymmetric funding; moreover, while their results rely on a suitable approximation to \eqref{eq:valuation-forward}, in the next section we show our formula admits an analytical expression under very mild conditions.

\subsection{Analytical valuation formula for the vulnerable forward contract}

Let us define the \emph{normalized log-moneyness} $\eta:\mathbb{R}_+^*\rightarrow\mathbb{R}$ for the vulnerable forward claim $\widehat{\mathcal{C}}_1$:
\begin{align}
	\label{eq:moneyness}
	\eta(s):=
	\frac{1}{\sigma}\ln\frac{(1+\kappa)F_T(t,s)}{K}
\end{align}
Function $\eta$ measures the distance between the contract strike and forward price, adjusted by the jump size $\kappa$. Before stating our main result, we define the following variables to lighten notation:
\begin{align}
	\lambda^*_V:=r_V-r,\qquad
	\zeta_1:=\frac{2\kappa(h_S-r_V)+\sigma^2}{2\sigma},\qquad
	\zeta_2:=\zeta_1-\sigma
\end{align}

The following theorem gives an exact analytical formula, involving only elementary functions and the Gaussian integral \eqref{eq:gaussian-cdf}, for the value of a vulnerable forward contract.

\begin{theorem}
	\label{th:analytical-fwd-value}
	Under the assumptions of Proposition \ref{prop:valuation-forward}, for any $t\in[0,T]$ and letting $F:=F_T(t,s)$, the pre-default value of a vulnerable forward contract with terminal payoff function \eqref{eq:forward-payoff} is equal to:
	\begin{align}
		\notag
		&v(t,s)=
		e^{-r(T-t)}\biggl(e^{-(r_V-r)(T-t)}\left(F
		e^{\kappa(h_S-r_V)(T-t)}-K
		\right)\\[3pt]\notag
		& +\varrho_1\left(
		(1+\kappa)F
		\Lambda\left(T-t,(\lambda^*_V-\kappa(h_S-r_V)),
		\zeta_1,\eta(s)\right)
		-K\Lambda\left(T-t,\lambda^*_V,
		\zeta_2,\eta(s)\right)\right)
		\\[3pt]\label{eq:analytical-fwd-value}
		& -\varrho_2\left(
		K\Lambda\left(T-t,\lambda^*_V,
		-\zeta_2,-\eta(s)\right)
		-(1+\kappa)F
		\Lambda\left(T-t,(\lambda^*_V-\kappa(h_S-r_V)),
		-\zeta_1,-\eta(s)\right)
		\right)\biggr)
	\end{align}
	where $\Lambda:\mathbb{R}_+\times\mathbb{R}^3\rightarrow\mathbb{R}_+$ is the function defined as follows for $x\in\mathbb{R}^*$ and $\rho(x,y)\neq0$%
		\footnote{Analytical formul\ae\ for $\Lambda$ are also derived in Appendix \ref{app:proof-th-analytical-fwd-value} for the particular cases where $\rho(x,y)=0$ (see Corollary \ref{coro:Lambda-zero-rho}); $x=0$ with $y\neq0$ (see Corollary \ref{coro:Lambda-zero-x}); and $x=y=0$ (see Corollary \ref{coro:Lambda-zero-xy}).}:
	\begin{align}
		\notag
		&\Lambda(t,x,y,z)=\\[4pt]
		\label{eq:Lambda-rho-nonzero}
		&\
		\begin{dcases}
			\frac{1}{x}\biggl(
			-e^{-xt}\Phi\left(\beta_0\right)
			-\frac{e^{-yz}}{2}\left(
			e^{-z\rho}\left(\frac{y}{\rho}-1\right)
			\Phi\left(-\beta_1\right)
			-e^{z\rho}\left(\frac{y}{\rho}+1\right)
			\Phi\left(\beta_2\right)
			\right)\biggr),
			& z<0 
			\\[4pt]
			\frac{1}{x}\biggl(
			1-e^{-xt}\Phi\left(\beta_0\right)
			+\frac{e^{-yz}}{2}\left(
			e^{-z\rho}\left(\frac{y}{\rho}-1\right)
			\Phi\left(\beta_1\right)
			-e^{z\rho}\left(\frac{y}{\rho}+1\right)
			\Phi\left(-\beta_2\right)
			\right)\biggr), 
			& z\geq0,
		\end{dcases}
	\end{align}
	where $\rho:=\rho(x,y)$ is the complex-valued function defined as:
	\begin{align}
		\rho(x,y):=
		\begin{dcases}
			\sqrt{2x+y^2}, 		& 2x+y^2\geq0\\
			i\sqrt{|2x+y^2|},	& 2x+y^2<0
		\end{dcases}
	\end{align}
	and we have inserted the following variables:
	\begin{align}
		\beta_0:=\frac{yt+z}{\sqrt{t}},\qquad
		\beta_1:=\frac{\rho(x,y)t-z}{\sqrt{t}},\qquad
		\beta_2:=\beta_1+\frac{2z}{\sqrt{t}}
	\end{align}
\end{theorem}
\begin{proof}
	See Appendix \ref{app:proof-th-analytical-fwd-value}.
\end{proof}

Function $\Lambda$ arises as a closed-form formula to the following integral:
\begin{align}
	\label{eq:Lambda}
	\Lambda(t,x,y,z)=
	\int_0^te^{-xu}\Phi\left(y\sqrt{u}+\frac{z}{\sqrt{u}}\right)\diff u
\end{align}
The previous integral originates in the Black-Scholes integrands from \eqref{eq:valuation-forward}. It is interpreted as a weighted probability for the forward contract to be in-the-money (ITM): we recall that in the Black-Scholes formulas \eqref{eq:bs-call-formula} and \eqref{eq:bs-put-formula}, the terms in $\Phi$ correspond to the probabilities for the option to expire ITM under two specific measures \citep[see Theorem 2 in][]{GemanKarouiRochet1995}.

\begin{remark}
	Although the expression for $\Lambda$ involves complex numbers, the function is actually real-valued. To see this, it suffices to observe that the integral in \eqref{eq:Lambda} is defined with respect to real-valued integrands, therefore it must be real-valued itself.
\end{remark}

The next remark quantifies the benefits from the analytical formula \eqref{eq:analytical-fwd-value} with respect to the option representation \eqref{eq:valuation-forward}.

\begin{remark}
	We integrate numerically \eqref{eq:valuation-forward} and we calculate \eqref{eq:analytical-fwd-value} using an implementation of the Gaussian CDF from the scientific library \emph{\texttt{scipy}} in Python; we find the difference between the two values to be near machine precision for 64-bit float numbers (i.e. $2.2\times10^{-16}$). We measure execution time for both calculations using module \emph{\texttt{timeit}} and find the analytical expression \eqref{eq:analytical-fwd-value} to be around \textbf{2 orders of magnitude} ($10^2$) quicker to compute than integrating \eqref{eq:valuation-forward}.
\end{remark}

When aggregated across thousands of trades in an industrial environment with competing requests, using the closed-form formula \eqref{eq:analytical-fwd-value} allows for material time efficiencies and optimization of computing resources. The formula can be used for obtaining quick price estimates for vulnerable forward contracts traded with risky counterparties, allowing trading desks to quote all-inclusive prices to clients without resorting to Monte Carlo methods. The formula can be integrated into Equity or FX electronic trading systems, enabling dealers to immediately execute requests for quotes (RFQ) from clients while ensuring adequate risk management of funding and credit risk factors.
\newline

To conclude our theoretical findings, we express the results of Theorem \ref{th:analytical-fwd-value} in terms of our original problem.

\begin{corollary}
	The valuation problem \ref{problem} admits an analytical expression \eqref{eq:gaussian-cdf} involving only elementary functions and the Gaussian integral.
\end{corollary}

\section{Numerical analysis of the impact of funding, credit and wrong way risk factors on forward valuation}
\label{sec:numerical-analysis}

We use our valuation model for vulnerable claims to perform numerical analysis on credit and funding factors. We address two questions: first, how do valuations and exposures of a vulnerable claim differ from those of a claim with no credit or wrong way risks? Second, what is the sensitivity of the contract's valuation to different model parameters?

\subsection{Time-zero valuation benchmarking against risk-free contracts}
\label{sec:benchmarking-numerical-analysis}
We consider a numerical example for an equity forward with 5-year expiry ($T=5$) and compare time-zero valuations for a risk-free contract $\mathcal{C}_1^*(f)$ against that of the vulnerable contract $\widehat{\mathcal{C}}_1(f,z_1,z_2)$ valued using the analytical expression from equation \eqref{eq:analytical-fwd-value}.
\newline

Hereafter we assume the risk-free rate $r$ from equation \eqref{eq:risk-free-forward} is equal to the deposit rate $r_\ell$:
\begin{align}
	\label{eq:rf-rate-equal-deposit-rate}
	r:=r_\ell
\end{align}
This is a standard assumption in the literature, see \citet{BrigoMorini2011} for example. 
\newline

Without loss of generality we set $t=0$, $s=1$ and $\alpha=0.5$. The value for $s$ is equivalent to studying the following terminal payoff $f$:
\begin{align}
	\label{eq:trs-payoff}
	f\left(\frac{S_T}{S_0}\right)=\frac{S_T}{S_0}-K,
\end{align}
which corresponds to an \emph{equity return forward}. Such instruments, together with \emph{equity return swaps}, have become increasingly popular among leveraged counterparties such as hedge funds: regulatory financial statements (form FR Y-9C) reported by the five largest US broker-dealers%
\footnote{Bank of America, Citigroup, Goldman Sachs, JP Morgan Chase, Morgan Stanley.}
show the total notional of instruments booked as equity swaps/forwards increased from \$1.3Tn in March 2016 to \$2.6Tn in September 2023 -- a two-fold increase.
\newline

We are interested in the contributions from credit and wrong way parameters, therefore we assume all interest rates are equal and the stock does not pay dividends. These static values are summarised in Table \ref{tab:static-params}.%
\begin{table}[h!]
	\centering
	\begin{threeparttable}
	\begin{tabularx}{0.725\textwidth}
		{>{\raggedright\arraybackslash}m{2.35cm}
		*8{>{\centering\arraybackslash}X}}
		\toprule
		\textbf{Parameter} & $r$ &$r_\ell$ &$\varphi$ &$h_S$ &$h_1$ &$h_2$ &$\sigma$ &$q$
		\\\cmidrule(lr){1-9}
		\textbf{Base value} &0.04 &0.04 &0 &0.04 &0.04 &0.04 &0.3 &0
		\\\bottomrule
	\end{tabularx}
	\caption{Values for static parameters in benchmarking analysis against risk-free contract.}
	\label{tab:static-params}
	\end{threeparttable}
\end{table}%

In order to develop intuition, consider the case represented by Table \ref{tab:static-params} where all market rates are equal to $r$. Let $\gamma_1,\gamma_2=\gamma>0$ and $\varkappa_1,\varkappa_2=\varkappa\in(0,1]$, then the vulnerable forward's price admits the approximation in $\gamma T$:
	\begin{align}
		\label{eq:approx-vulnerable-value}
		v(0,S_0)&=
		v^*(0,S_0)\left(1-\gamma T(1-\varkappa)\right)-\gamma T(1-\varkappa)\kappa S_0+\mathcal{O}(\gamma^2T^2)
	\end{align}
Approximation \eqref{eq:approx-vulnerable-value} is an average of two terms, the risk-free value $v^*(0,S_0)$ and the expected price jump $\kappa S_0$ at default, weighted by the expected loss per unit of notional, $\gamma T(1-\varkappa)$. The formula shows the magnitude of the valuation adjustment linked to wrong way risk is proportional to the probability of default $\gamma T$ and the loss rate $(1-\varkappa)$.
\newline

We analyse numerically the \emph{valuation spread} between vulnerable and risk-free versions of an ATM contract with strike $K=F$, which in this case is equal to:
	\begin{align}
		v(0,s)-v^*(0,s)=v(0,s)
	\end{align}
because $v^*(0,S_0)=0$ when $K=F$. We study different value combinations for credit spreads $(\gamma_1,\gamma_2)$ and jump size $\kappa$, under four distinct scenarios for recovery and creditworthiness which we summarise in Table \ref{tab:benchmark-analysis-scenarios}.%
\newline

\begin{table}[H]
	\centering
	\begin{tabular}{|c|c|}
		\hline
		\parbox{5.2cm}{\vspace{.2cm}%
			{\centering\textbf{Scenario (a)}\par}
			High recovery: $\varkappa_1,\varkappa_2=0.9$\\
			Symmetric credit: $\gamma_1=\gamma_2$%
		\vspace{.2cm}}
		& \parbox{5.2cm}{\vspace{.2cm}%
			{\centering\textbf{Scenario (b)}\par}
			Low recovery: $\varkappa_1,\varkappa_2=0.6$\\ 
			Symmetric credit: $\gamma_1=\gamma_2$%
		\vspace{.2cm}}
		\\\hline
		\parbox{5.2cm}{\vspace{.2cm}%
			{\centering\textbf{Scenario (c)}\par}
			High recovery: $\varkappa_2=0.9$\\ 
			\mbox{Riskless dealer: $\gamma_1=0$, $\varkappa_1=1$}%
		\vspace{.2cm}}
		& \parbox{5.2cm}{\vspace{.2cm}%
			{\centering\textbf{Scenario (d)}\par}
			Low recovery: $\varkappa_2=0.6$\\ 
			\mbox{Riskless dealer: $\gamma_1=0$, $\varkappa_1=1$}%
		\vspace{.2cm}}
		\\\hline
	\end{tabular}
	\caption{Scenarios for benchmarking against risk-free valuation; numerical results are displayed in Table \ref{tab:benchmark-analysis}.}
	\label{tab:benchmark-analysis-scenarios}
\end{table}

Numerical results under all four scenarios are displayed in Table \ref{tab:benchmark-analysis} where the valuation spread is expressed in basis points (bps).
\newline

\setlength{\fboxrule}{.75pt}
\begin{table}[h!]
	\begin{subtable}{.5\textwidth}
		\centering
		\rowcolors{3}{}{gray!10}
		\begin{tabular}{lccccc}
			\toprule
			\multirow{2}{*}{\bm{$\kappa$}} 
			& \multicolumn{5}{c}{\bm{$\gamma_1=\gamma_2$}}	\\\cmidrule(lr){2-6}
			~    		& \it 0 & \it 0.01 	& \it 0.02  & \it 0.03  & \it 0.05 \\\midrule
			\it 0 		& 0.0 	& 0.0 		& 0.0 		& 0.0 		& 0.0 	\\ 
			\it -0.05	& 0.0 	& 2.3 		& 4.1 		& 5.6 		& 7.7	\\
			\it -0.1 	& 0.0 	& 4.5 		& 8.3 		& 11.3 		& 15.5 	\\
			\it -0.2 	& 0.0 	& 9.1 		& 16.7  	& 22.9 		& 31.9	\\
			\it -0.3 	& 0.0 	& 13.8 		& 25.3  	& 34.9 		& 49.1	\\
			\bottomrule
		\end{tabular}
		\caption{$\varkappa_1=\varkappa_2=0.9$}
		\label{tab:benchmark-analysis-1}
	\end{subtable}
	\begin{subtable}{.5\textwidth}
	\centering
	\rowcolors{3}{}{gray!10}
	\begin{tabular}{lccccc}
		\toprule
		\multirow{2}{*}{\bm{$\kappa$}} 
		& \multicolumn{5}{c}{\bm{$\gamma_1=\gamma_2$}}	\\\cmidrule(lr){2-6}
		~    		& \it 0 & \it 0.01 	& \it 0.02  & \it 0.03  & \it 0.05 \\\midrule
		\it 0 		& 0.0 	& 0.0		& 0.0 	& 0.0 	& 0.0 	\\ 
		\it -0.05 	& 0.0 	& 9.1		& 16.5 	& 22.4 	& 30.7 	\\
		\it -0.1 	& 0.0 	& 18.2		& 33.1 	& 45.1 	& 62.2 	\\
		\it -0.2 	& 0.0 	& 36.6		& 66.8 	& 91.6  & 127.6 \\
		\it -0.3 	& 0.0 	& 55.1		& 101.3	& \fbox{139.5} & 196.3 \\
		\bottomrule
	\end{tabular}
	\caption{$\varkappa_1=\varkappa_2=0.6$}
	\label{tab:benchmark-analysis-2}
	\end{subtable}
	\\[9pt]
	\begin{subtable}{.5\textwidth}
		\centering
		\rowcolors{3}{}{gray!10}
		\begin{tabular}{lccccc}
			\toprule
			\multirow{2}{*}{\bm{$\kappa$}} 
			& \multicolumn{5}{c}{\bm{$\gamma_2$} ($\gamma_1=0$)}	\\\cmidrule(lr){2-6}
			~    		& \it 0 & \it 0.01 	& \it 0.02  & \it 0.03  & \it 0.05 \\\midrule
			\it 0 		& 0.0 	& -8.6 		& -16.6 	& -24.2 	& -38.1 \\ 
			\it -0.05	& 0.0 	& -7.2 		& -14.1 	& -20.6 	& -32.6	\\
			\it -0.1 	& 0.0 	& -6.0 		& -11.8 	& -17.3 	& -27.7	\\
			\it -0.2 	& 0.0 	& -4.0 		& -8.0  	& -11.8 	& -19.2	\\
			\it -0.3 	& 0.0 	& -2.5 		& -5.0  	& -7.5 		& -12.5	\\
			\bottomrule
		\end{tabular}
		\caption{$\varkappa_2=0.9$ ($\varkappa_1=1$)}
		\label{tab:benchmark-analysis-3}
	\end{subtable}
	\begin{subtable}{.5\textwidth}
		\centering
		\rowcolors{3}{}{gray!10}
		\begin{tabular}{lccccc}
			\toprule
			\multirow{2}{*}{\bm{$\kappa$}} 
			& \multicolumn{5}{c}{\bm{$\gamma_2$} ($\gamma_1=0$)}	\\\cmidrule(lr){2-6}
			~    		& \it 0 & \it 0.01 	& \it 0.02  & \it 0.03  & \it 0.05 \\\midrule
			\it 0 		& 0.0 	& -34.2		& -66.5 	& -96.8 	& -152.2 	\\ 
			\it -0.05 	& 0.0 	& -28.9		& -56.4 	& -82.4 	& -130.5 	\\
			\it -0.1 	& 0.0 	& -24.1		& -47.2 	& -69.4		& -110.7 	\\
			\it -0.2 	& 0.0 	& -16.1		& -31.8 	& -47.2  	& -76.8 \\
			\it -0.3 	& 0.0 	& -10.0		& -19.9		& -29.9 	& -49.9 \\
			\bottomrule
		\end{tabular}
		\caption{$\varkappa_2=0.6$ ($\varkappa_1=1$)}
		\label{tab:benchmark-analysis-4}
	\end{subtable}
	\caption{Valuation spread, in bps, between vulnerable $\widehat{\mathcal{C}}_1$ and risk-free $\mathcal{C}_1^*$ ATM contracts ($K=F$) for different values of hazard rates and jump, under the numerical configuration from Table \ref{tab:static-params}.}
	\label{tab:benchmark-analysis}
\end{table}

In the case where $\kappa=0$, as anticipated by approximation \eqref{eq:approx-vulnerable-value}, in the presence of symmetric credit parameters for both counterparties, an ATM vulnerable contract has roughly same value as its risk-free equivalent because the CVA and DVA recovery components cancel each other -- first row from Subtables \ref{tab:benchmark-analysis-1} and \ref{tab:benchmark-analysis-2}. 
\newline

On the other hand, when the stock price is expected to jump at default, the valuation gap between risk-free and vulnerable claims can widen or tighten depending on the values for the rest of model parameters (13 in total) -- see second row onwards in all four subtables within Table \ref{tab:benchmark-analysis}. In particular, as evidenced when comparing the right-hand side subtables to the left-hand side ones, the magnitude of the adjustment (and thus the valuation gap) increases as the recovery rates $\varkappa_1, \varkappa_2$ decrease, as predicted by the approximation \eqref{eq:approx-vulnerable-value}. Moreover, Subtables \ref{tab:benchmark-analysis-3} and \ref{tab:benchmark-analysis-4} show that even when one counterparty cannot default, wrong way risk still has a material impact on the contract's valuation.
\newline
	
The next example illustrates just how costly neglecting wrong way risk can be for a derivative dealer.

\begin{example}
	\label{ex:no_wwr-vs-wwr}
	Consider an ATM forward contract with \$500m notional. Assume the Credit Default Swap (CDS) market implies a 3\% annual hazard rate for both counterparties. A dealer which neglects wrong way risk will not request any upfront payment from the client. However if the dealer assesses the stock price might fall by 30\% at the default event, according to our model it should request an upfront payment of \textbf{\$7.0m} (\$500m $\times$ 139.5bps from Table \ref{tab:benchmark-analysis-2}).
\end{example}

In conclusion, neglecting wrong way risks leads to significant mispricing of a transaction, in particular when default likelihood and expected loss are material. This last situation can occur when facing highly leveraged counterparties: a recent illustrative example is the failure of Archegos Asset Management, where total losses across its broker-dealers exceeded \$10bn \citep{LewisWalker2021}. A sizeable portion of the hedge fund's exposure was concentrated in five stocks, to the extent that a sharp drop in their share prices (between 30\% to 50\% in three days, which shows our sample values in Table \ref{tab:benchmark-analysis} are realistic) triggered a margin call from its dealers that it could not meet \mycitepalias{see for example the report by}{}{ESMA2022}. For the banks, default from the counterparty coincided with a large jump in the underlying shares prices and therefore in deal valuations, a risk that had not been accounted for \citep{Cesa2023}.

\subsection{Analysis of the impact from wrong way risk on future exposures}

We consider the same 5-year vulnerable equity forward $\widehat{\mathcal{C}}_1$ and work under the assumptions and parameter settings introduced in the previous section, unless otherwise specified. We compare future exposures for this claim with and without jump-at-default from the stock price to quantify the contribution from wrong way risk to risk measures.
\newline

In addition to the initial valuation, broker dealers also estimate supplementary metrics which quantify the credit risk embedded in the transaction: lifelong \emph{Expected Positive Exposure} or EPE (resp. Negative, ENE) as well as \emph{Potential Future Exposure} (PFE). We briefly review these concepts below and refer to \citet{PykhtinZhu2007} for an excellent overview.
\newline

Given a time grid $\mathbb{T}:=\{t_0,t_1,\dots,t_n\}$ starting at deal inception $t_0=t$ and ending by the expiry date $t_n=T$, we define \emph{positive exposure} $E_+:\mathbb{T}\rightarrow\mathbb{R}_+$ (resp. negative, $E_-:\mathbb{T}\rightarrow\mathbb{R}_-$) for the vulnerable contract $\widehat{\mathcal{C}}_1$ as follows for any $t_i\in\mathbb{T}$:
	\begin{align}
		E_+(t_i):=V_{t_i}^+, \qqquad
		E_-(t_i):=-V_{t_i}^-
	\end{align} 
The \emph{EPE profile} is defined as the expectation at time $t_0$ of the exposure vector $(E_+(t_i))_{i=0,\dots,n}$ under some measure $\mathbb{Q}$. An equivalent definition holds for the ENE profile using negative exposure. The \emph{positive PFE profile} (resp. negative) is defined as the element wise high percentile (resp. low) of positive exposure (resp. negative) along $\mathbb{T}$; typical levels for positive and negative PFE are 95\% and 5\% respectively \citep{PykhtinZhu2007}%
	\footnote{Some authors define such profiles using discounted exposures. In our case where interest rates are not correlated to the stock price $S$, such distinction is not important.}.
\newline

An additional output derived from positive PFE is \emph{Peak Exposure} (PE) which is defined as the maximum PFE value along the time grid $\mathbb{T}$. Peak Exposure is used by risk managers to evaluate the incremental credit risk against a counterparty arising from dealing a new contract. Derivative dealers establish exposure limits against individual counterparties in terms of PE; such limits prevent credit risk to become concentrated among a handful of clients, and deals which breach PE limits might be blocked by risk management functions.
\newline

We compare expected and potential exposures for the ATM forward contract introduced above, under the numerical configuration from Table \ref{tab:benchmark-analysis-4}. We choose the credit spreads $\gamma_1,\gamma_2$ to be equal to 0.03 while the recovery rates $\varkappa_1,\varkappa_2$ are both set to 0.6. We estimate potential exposures under two scenarios: no wrong way risk ($\kappa=0$), and with wrong way risk where $\kappa=-0.3$. 
\newline

To calculate exposures at future times $t_1,\dots,t_n$ we simulate the stock price $S$ under the measure $\widehat{\mathbb{Q}}$ introduced in equation \eqref{eq:dynamics-s-pricing-measure} up until deal expiry $T$. We choose a weekly time grid and simulate a total of 10,000 paths with antithetic variates. The results are displayed in Figure \ref{fig:pfe} where exposures are expressed as a percentage of trade notional.

\begin{figure}[H]
	\centering
	\begin{minipage}{\textwidth}
		\centering
		\includegraphics[width=\textwidth]{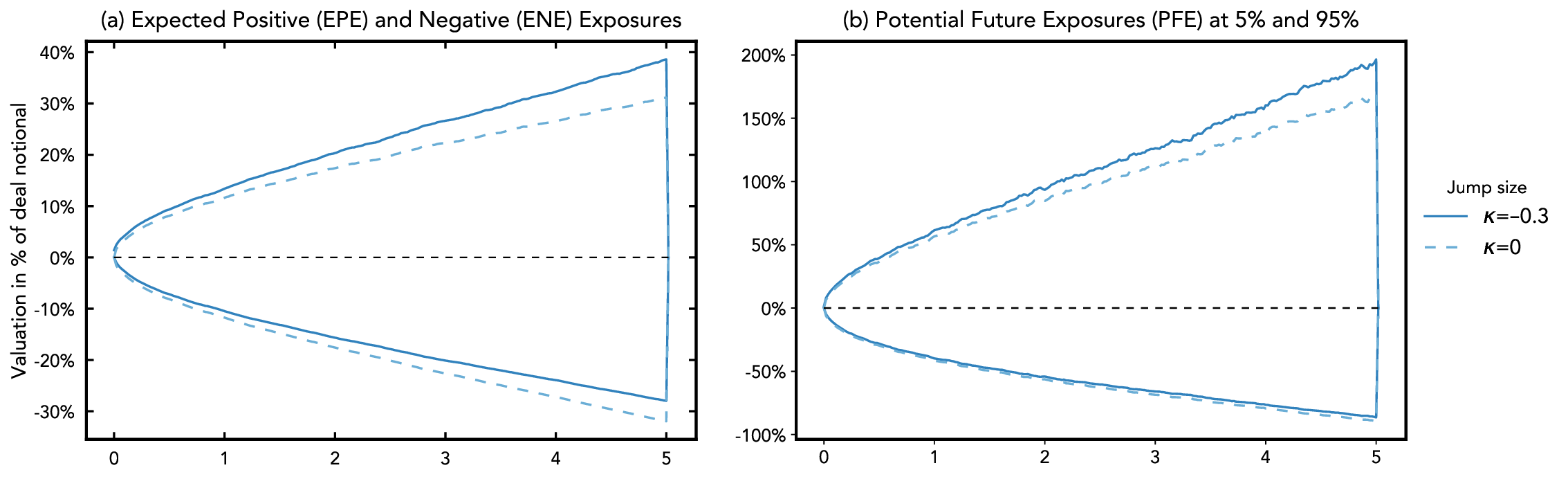}
	\end{minipage}
	\captionsetup{justification=justified}
	\par\vspace{-.2cm}
	\caption{Expected (EPE, ENE) and Potential Future exposures, in percentage of notional, for contract $\widehat{\mathcal{C}}_1$ with and without wrong way risk, under the numerical configuration from Table \ref{tab:static-params}, $\gamma_1=\gamma_2=0.03$ and $\varkappa_1=\varkappa_2=0.6$.}
	\label{fig:pfe}
\end{figure}

We notice immediately that by modelling wrong way risk, exposures are shifted upwards and become higher for the dealer, as the CVA component of the valuation is effectively increased while the DVA one is lowered when $\kappa$ decreases from 0 to -0.3. Dealer PE is materially higher under wrong way risk as seen in Figure \ref{fig:pfe}(b), with the metric increasing from 169\% of notional up to 196\% -- an increase higher than a quarter of trade notional. 

\begin{example}
	Using the same notional amount from Example \ref{ex:no_wwr-vs-wwr}, this corresponds to a \textbf{\$135m} difference in the PE estimate with and without wrong way risk. 
\end{example}

While our estimate assumes neither counterparty posts collateral (which has the effect of decreasing exposure) these results illustrate how neglecting wrong way risk leads to inadequate business decisions: exposure underestimation against a counterparty leads to dangerous credit risk concentration and exposes the dealer to hefty losses in case of counterparty default, as occurred when Archegos Asset Management failed.

\subsection{Multivariate sensitivity analysis}
\label{sec:multivariate-sensitivity-analysis}

We study price sensitivities to various model parameters and identify which ones have the largest impact on valuations, for the same 5-year vulnerable equity forward $\widehat{\mathcal{C}}_1(f,z_1,z_2)$ introduced in Section \ref{sec:benchmarking-numerical-analysis}.
\newline

Our model, while analytical, is also complex with 13 different parameters in total, including the funding policy scalar $\alpha$, which can complicate interpretation. We focus on the sensitivity to credit and wrong way parameters, as well as any interaction between them or with the rest of model parameters. Indeed, credit spreads are recognized as one among two major risk factors to CVA and DVA valuations \mycitepalias{see for example the CVA framework in}{}{BCBS2011} while our analysis in the prior section demonstrates the significance of gap risk at default, therefore it is judicious to target our risk assessment to said parameters.
\newline

We perform our numerical study under assumption \eqref{eq:rf-rate-equal-deposit-rate} for the risk-free rate $r$, namely $r=r_\ell$, and set again $t=0$, $s=1$ and $\alpha=0.5$. We choose to analyse an ATM contract with $K=F$%
	\footnote{When modifying model parameters which affect the ATM level $F$ such as the stock repo rate $h_S$, the contract will no longer be ATM under these different scenarios.} %
to avoid the intrinsic value of the contract dominating its CVA and DVA components.
\newline

For the rest of model parameters, we choose baseline values based on empirical market data (such as S\&P volatilities or US rates) and usual market conventions, which we display in Table \ref{tab:baseline-values}. To aid interpretation and avoid spurious cross effects, all interest rates are set to the same value while the stock is assumed to pay no dividends. When calculating sensitivity to a given parameter and unless otherwise specified, all other parameters are set to the values in this table.

\begin{table}[h!]
	\centering
	\begin{tabularx}{0.9\textwidth}
		{>{\raggedright\arraybackslash}m{2.1cm}
			*{10}{>{\centering\arraybackslash}X}}
		\toprule
		\textbf{Parameter} & $r$ &$r_\ell$ &$\varphi$ &$h_S$ &$h_i$ &$\gamma_i$ &$\varkappa_i$ &$\sigma$ &$q$ &$\kappa$
		\\\cmidrule(lr){1-11}
		\textbf{Base value} &0.04 &0.04 &0 &0.04 &0.04 &0.03 &0.75 &0.3 &0 &0
		\\\bottomrule
	\end{tabularx}
	\caption{Baseline values for model parameters in multivariate sensitivity analysis ($i=1,2$).}
	\label{tab:baseline-values}
\end{table}

Given the large number of parameters and combinations to be analysed, we only highlight the most significant results.

\paragraph{Wrong way risk turbo-charges valuation sensitivity in case of market-wide credit crisis}

Our first observation pertains to the interaction between credit spreads and gap risk. Figure \ref{fig:sensi-g} displays the sensitivity to spreads $\gamma_1,\gamma_2$ under different values for the jump size $\kappa$ with the rest of parameters taking their values from Table \ref{tab:baseline-values}. 
\newline

Subfigures \ref{fig:sensi-g}(a) and \ref{fig:sensi-g}(b) show only one hazard rate varying, a scenario corresponding to an idiosyncratic credit worsening from either counterparty; in Subfigure \ref{fig:sensi-g}(c) both credit spreads widen in unison with their basis $\gamma_2-\gamma_1=1\%$ constant thus representing a scenario of market-wide credit crisis; finally in Subfigure \ref{fig:sensi-g}(d) both spreads also widen but so does their basis, thus capturing a worsening of the client's creditworthiness relative to the dealer.

\begin{figure}[H]
	\centering
	\begin{minipage}{\textwidth}
		\centering
		\includegraphics[width=\textwidth]{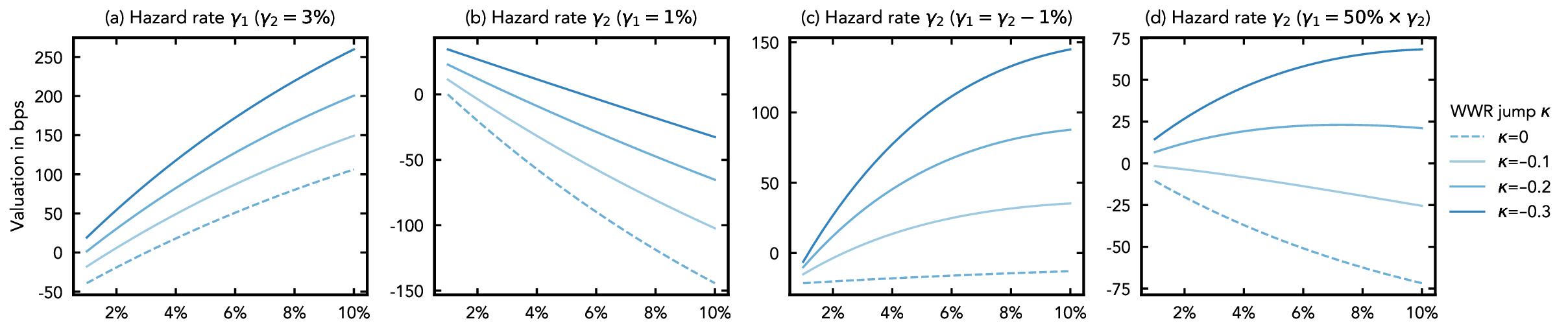}
	\end{minipage}
	\captionsetup{justification=justified}
	\par\vspace{-.2cm}
	\caption{Sensitivity, in bps, to $\gamma_1$ and $\gamma_2$ under different values of $\kappa$.}
	\label{fig:sensi-g}
\end{figure}

Subfigure \ref{fig:sensi-g}(c) proves how the sensitivity to a market-wide credit event (such as the default from a large participant) becomes much sharper in the presence of wrong way risk -- and can even change direction as shown in Subfigure \ref{fig:sensi-g}(d).
\newline

The next example illustrates how material a neglect of wrong way risk can be when managing the contract. 

\begin{example}
	We consider the ATM forward contract with \$500m notional and assume a constant basis $\gamma_2-\gamma_1=1\%$ between the two credit spreads. Consider a scenario where credit spreads jump by 2\%, the client's from from 2\% to 4\% whereas the dealer's from 1\% to 3\%, such that the basis remains unchanged. Based on Subfigure \ref{fig:sensi-g}(c), a dealer which neglects wrong way risk will expect a change in contract value of around 2bps under such a scenario (dashed light blue curve, -20bps against -18bps). However if it assumes a stock price fall of -30\% in case of default, the change in contract value under this scenario will be around 51bps (top dark blue curve, 27bps against 77bps): a difference in expected profit-and-loss of almost \textbf{\$2.5m}.
\end{example}

In conclusion, wrong way not only impacts valuation and exposures of $\widehat{\mathcal{C}}_1$, but also materially widens the valuation sensitivity to credit spreads from both counterparties.

\paragraph{Recovery rate assumptions become critical in the presence of wrong way risk}

Our second observation focuses on recovery rates $\varkappa_1,\varkappa_2$ and the jump size $\kappa$. These parameters are arguably difficult to calibrate: there are no adequate calibration instruments traded in the market with sufficient liquidity, while historical data is scarce. Recovery rates can be estimated based on studies such as \citet{SnP2023} while $\kappa$ can be calibrated to recover stock-credit correlations as suggested by \citet{MercurioLi2015}. In any case, it is important for a dealer to understand how sensitive the valuation is to these choices.
\newline

Figure \ref{fig:sensi-kappa} displays the sensitivity to the jump size $\kappa$ under different values for the recovery rates $\varkappa_1$ and $\varkappa_2$. In the first two Subfigures \ref{fig:sensi-kappa}(a) and \ref{fig:sensi-kappa}(b) we show sensitivities to $\kappa$ against changes in a single recovery rate ($\varkappa_1$ and $\varkappa_2$ respectively) whereas in Subfigure \ref{fig:sensi-kappa}(c) the sensitivity is plotted against simultaneous and identical changes in both the client and dealer recovery rates.%

\begin{figure}[H]
	\centering
	\begin{minipage}{.9\textwidth}
		\centering
		\includegraphics[width=\textwidth]{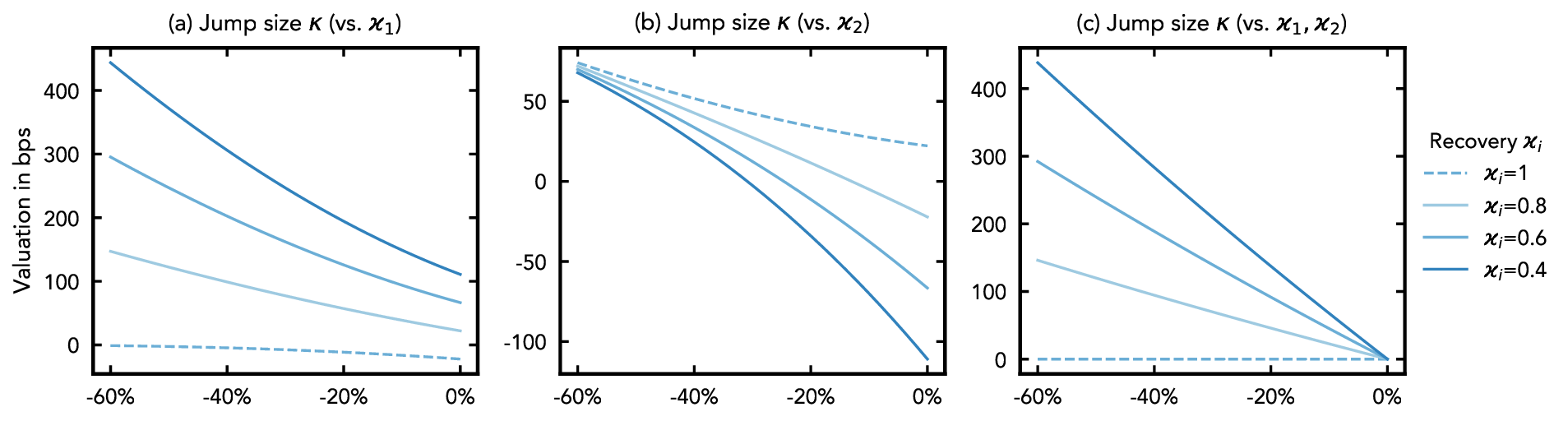}
	\end{minipage}
	\captionsetup{justification=justified}
	\par\vspace{-.2cm}
	\caption{Sensitivity, in bps, to $\kappa$ under different values for $\varkappa_1$ and $\varkappa_2$.}
	\label{fig:sensi-kappa}
\end{figure}

The three subfigures demonstrate the recovery rate values have a critical impact on the valuation sensitivity to changes in the jump size $\kappa$: lower recovery rates or higher jump sizes both increase the expected close-out payment at default thereby inflating the CVA or DVA components, which explains the cross effect evidenced by Figure \ref{fig:sensi-kappa}. The third subfigure \ref{fig:sensi-kappa}(c) is particularly striking: under a common assumption of 60\% recovery (second curve from the top), the valuation difference between assuming either a -20\%  or a -30\% relative jump from $S$ is around 48bps (92bps against 140bps) which corresponds again to around \textbf{\$2.5m} under our \$500m notional assumption.
\newline

A derivative dealer who wishes to capture wrong way risk might be forced to rely on scarce data and expert judgement to choose a value for $\kappa$. Our numerical results illustrate how such assumptions can produce materially different valuations for the same contract depending on the value chosen for recovery rates, and highlight the importance of computing valuations under different ``what-if'' scenarios to properly evaluate the credit risk of a deal.

\paragraph{The recovery rate $\varkappa_2$ shapes the sensitivity to the repo spread $h_2-r_\ell$}

Our final observation is on a significant cross effect between funding and credit parameters, which is not easy to infer from inspecting the option representation \eqref{eq:valuation-forward}. 
\newline

For this analysis, we choose to set $\varphi=3\%$ in order to preserve the no-arbitrage condition \eqref{eq:funding-arbitrage} from Proposition \ref{prop:no-arbitrage}, as well as $\alpha=1$ which implies full repo funding for bond 2 (this is the same setting as in \citeauthor{BurgardKjaer2011}, \citeyear{BurgardKjaer2011}).
\newline
	
Figure \ref{fig:sensi-varkappa2} displays the sensitivity to the spread between bond 2's repo rate $h_2$ and the deposit rate $r_\ell$ under different assumptions for the recovery rate $\varkappa_2$. In Subfigures \ref{fig:sensi-varkappa2}(a) and \ref{fig:sensi-varkappa2}(b) we only vary one of the parameters, while in Subfigure \ref{fig:sensi-varkappa2}(c) we show the sensitivity when we vary both variables simultaneously and identically while keeping the repo spread $h_2-r_\ell$ constant and equal to 1\%.

\begin{figure}[H]
	\centering
	\begin{minipage}{.9\textwidth}
		\centering
		\includegraphics[width=\textwidth]{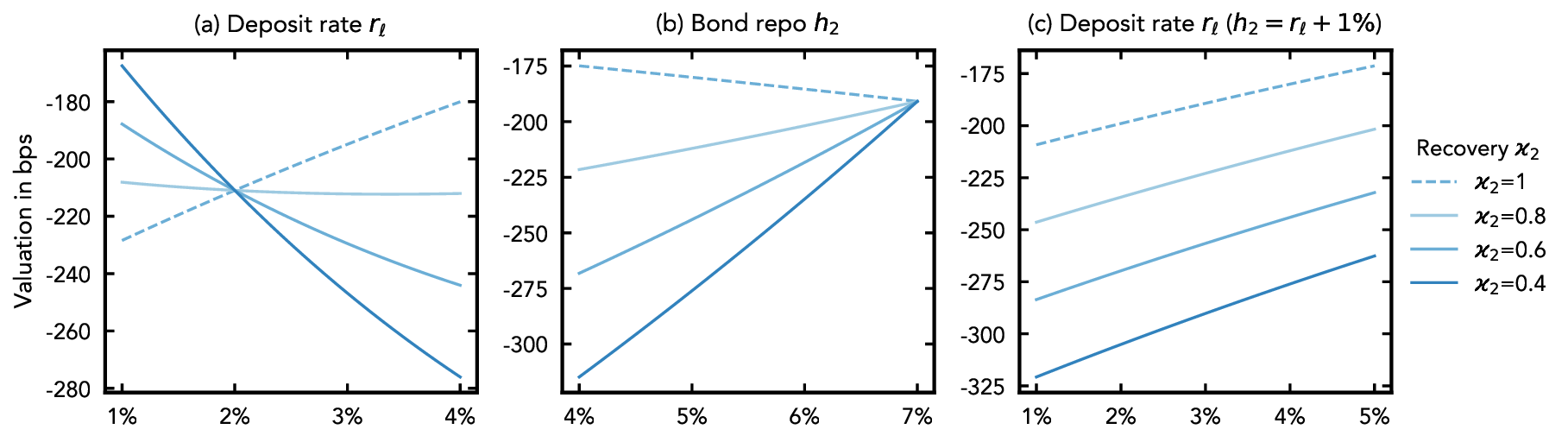}
	\end{minipage}
	\captionsetup{justification=justified}
	\par\vspace{-.2cm}
	\caption{Sensitivity, in bps, to $r_\ell$ and $h_2$ under different values of $\varkappa_2$ with $\alpha=1$ and $\varphi=3\%$.}
	\label{fig:sensi-varkappa2}
\end{figure}

We observe that in the third subfigure \ref{fig:sensi-varkappa2}(c), the slope from the different sensitivity curves remains constant; while in the previous two subfigures the curve's slope rotates as we modify the values from $r_\ell$ and $h_2$ respectively. We conclude there is a cross effect between the recovery rate $\varkappa_2$ and the repo spread $h_2-r_\ell$: by varying either of the two parameters and keeping the other fixed, as done in Subfigures \ref{fig:sensi-varkappa2}(a) and \ref{fig:sensi-varkappa2}(b), the repo spread changes value. 
\newline

These results illustrate how valuation sensitivity to the repo spread $h_2-r_\ell$ changes under different values for recovery rate $\varkappa_2$ -- which corresponds to the recovery in case the dealer's client defaults first. Subfigure \ref{fig:sensi-varkappa2}(a) shows the slope rotates as we decrease the recovery: as observed when analysing the sensitivity to $\kappa$, assumptions on recovery rates require careful consideration and practitioners need be aware of how their hedging strategy can be different depending on their values.

\section{Conclusions}
\label{sec:conclusions}

In this paper we have introduced a valuation framework which incorporates \emph{ab initio} funding, credit and wrong way risks using a mixed Brownian-Poisson model for asset prices and counterparty defaults. We have derived PDEs associated with the valuation of vulnerable claims, including a linearised one from which we have derived an explicit stochastic representation for the pre-default valuation using the Feynman-Kac formula. We have specialized our results to the case of a forward contract on an equity share, and shown that this claim can be represented in terms of European options. Importantly, we have proved the valuation formula for a forward admits an analytical formula, under a funding policy which linearises the PDE, involving only elementary functions and Gaussian integrals.
\newline

Our numerical analysis demonstrates wrong way risk is a significant risk factor: neglecting price jumps at default leads to severe errors when calculating valuations and future exposures, which can quickly run into the millions for typical trade notionals even under benign assumptions such as 3-5\% credit spreads and 10-30\% jump sizes; our analysis also show wrong way risk increases the sensitivity to other risk factors, in particular credit spreads from both parties. Additionally, our numerical results highlight that assumptions about recovery rates, which are hard to calibrate or estimate, have a material impact on other parameter sensitivities such as jump size or repo spreads.
\newline

Our research contributes to both academic and industrial aspects of derivative valuation under credit, funding and wrong way risk. We prove it is possible to price vulnerable contracts in closed-form even in the presence of asymmetric funding, wrong way risk and price-credit correlation thereby extending the analytical frameworks developed in the work from \citet{BurgardKjaer2011} on PDEs for contracts with credit and funding adjustments; \citet{MercurioLi2015} on wrong way risk at default; and \citet{BrigoBuescuRutkowski2017} on analytical formulas for valuing vulnerable contracts with bilateral cash flows. For derivative dealers, our formulas offer an efficient methodology for quoting prices and estimating sensitivities inclusive of funding, credit and wrong way risk.  They can be particularly useful in electronic trading platforms where speedy execution is crucial.
\newline

Our results apply immediately to unilateral cash flow contracts such as European options by considering only default from the counterparty selling the option, thereby covering the two most common type of derivative contracts traded in financial markets.
\newline

In summary, the valuation framework we have presented generalises existing approaches from the literature; provides new insights particularly regarding wrong way risk; and is amenable to further refinements in several directions. It can accommodate other asset classes which lend themselves to log-normal modelling such as commodities \citep{GibsonSchwartz1990} and inflation \citep{JarrowYildirim2003}. Alternatively, it can be extended to interest rate products such as swaps or swaptions \mycitepalias{which command some of the largest volumes in traded notional among derivatives, see for example}{}{BIS2023} by using normal dynamics for the asset price $S$. While we have considered a deterministic model for the funding spread or hazard rates, the model can be augmented by inserting stochastic dynamics for these risk factors, which would enable to study the impact from correlation between funding and credit variables. Finally, the present framework can also be extended to incorporate variation margin (VM) as well as initial margin (IM) for OTC deals with collateral: in particular this allows studying how collateral agreements mitigate any wrong way risk arising from the transaction.

\clearpage

\clearpage
\appendix


\section{Technical Proofs}

\subsection{Proof of Proposition \ref{prop:no-arbitrage}}
\label{app:proof-no-arbitrage}
\begin{proof}
Proposition \ref{prop:no-arbitrage} states a necessary condition for $\mathfrak{M}$ to be arbitrage-free hence it suffices to prove that if conditions \eqref{eq:funding-arbitrage}-\eqref{eq:credit-arbitrage} are not fulfilled, then there exists arbitrage opportunities. To prove \eqref{eq:funding-arbitrage} is a necessary condition, without loss of generality let us consider $h=h_S$ and a trading strategy $\Theta$ such that $\upvartheta^S_t\geq0$ for all $t$ and:
\begin{align}
	\theta^S_t=-\upvartheta^S_t,\qquad
	\theta^\ell_t=\frac{\upvartheta^S_tS_t}{B^\ell_t}
\end{align}
while the rest of the elements of $\Theta$ are null, then:
\begin{align}
	V^\Theta_0=0,\qquad
	G^\Theta_t=(r_\ell-h_S)\int_0^t\upvartheta^S_uS_u\diff u
\end{align}
which is an arbitrage if $r_\ell>h_S$. To show that $r_b<h_S$ entails arbitrage opportunities, it suffices to consider instead the portfolio with $\upvartheta^S_t\leq0$ and:
\begin{align}
	\theta^S_t=-\upvartheta^S_t,\qquad
	\theta^b_t=\frac{\upvartheta^S_tS_t}{B^b_t}
\end{align}
The same reasoning is applicable to the bond repo rates $h_1$ and $h_2$ thus establishing the necessity of \eqref{eq:funding-arbitrage}. To prove \eqref{eq:credit-arbitrage} is necessary, consider an alternative strategy $\Theta$ with \smash{$\upvartheta^{i}_t\leq0$} and the rest of units being null such that:
\begin{align}
	V^\Theta_0=0,\qquad
	G^\Theta_t=(h_i-r_i)\int_0^tP^{i}_{u-}\diff u+\int_0^tP^{i}_{u-}\diff J^{i}_u
\end{align}
\lineskiplimit=-\maxdimen
If $h_i\geq r_i$ then the strategy receives a continuous positive cash flow stream $P^{i}_{u-}(h_i-r_i)\diff u$ until default $\tau$ where it pays out the positive lump sum $P^{i}_\tau$ which constitutes an arbitrage.
\end{proof}

\subsection{Proof of Proposition \ref{prop:hedging-strategy}}
\label{app:proof-lemma-hedging-strategy}

We first state two technical lemmas.
\begin{lemma}
	\label{lemm:ito-for-fv}
	Let $X:[0,T]\times\Omega\rightarrow\mathbb{R}$ be a semimartingale and $Y:[0,T]\times\Omega\rightarrow\mathbb{R}$ a process of bounded variation. Then the following product rule holds:
	\begin{align}
		X_tY_t=X_0Y_0+\int_0^tY_{u-}\diff X_u+\int_0^tX_u\diff Y_u
	\end{align}
\end{lemma}
\begin{proof}
	See proof for Lemma 1.10 in \citet{BieleckiJeanblancRutkowski2004}.
\end{proof}

\begin{lemma}
	\label{lemm:indicator-equivalency}
	The following equality holds for $i,j\in\{1,2\}$ such that $i\neq j$:
		\begin{align}
			\{\tau_i\leq\tau_j\}\cap\{t\geq\tau\}
			=\{(t\wedge\tau_i)\leq\tau\}\cap\{t\geq\tau_i\}
		\end{align}
\end{lemma}
\begin{proof}
	Note that by construction the default processes are independent and cannot jump at the same time, hence the sets $\{\tau_i<\tau_j\}$ and $\{\tau_i\leq\tau_j\}$ have same measure. Then:
	\begin{align}
		\notag
		\{\tau_i\leq\tau_j\}\cap\{t\geq\tau\}
		&=\{\tau_i\leq\tau_j\}\cap\{t\geq(\tau_i\wedge\tau_j)\}\\[7pt]\notag
		&=\{\tau_i\leq\tau_j\}\cap\{t\geq\tau_i\}\\[7pt]\notag
		&=\{\tau_i\leq\tau\}\cap\{t\geq\tau_i\}\\[7pt]
		&=\{(t\wedge\tau_i)\leq\tau\}\cap\{t\geq\tau_i\}
	\end{align}
	where the last equality stems from the equivalence $t\wedge\tau_i\equiv\tau_i$ over the set $\{t\geq\tau_i\}$.
\end{proof}

\begin{proof}[Proof of Proposition \ref{prop:hedging-strategy}]
To derive the hedging strategy $\Theta$, we need to ensure conditions 
	\eqref{eq:netting-condition},
	\eqref{eq:hedging-condition-price}, and
	\eqref{eq:self-financing-condition-gains}
hold for any $t$ up to expiry $T$:
	\begin{itemize}
		\item We use the bank accounts to enforce the hedging condition \eqref{eq:hedging-condition-price} while enforcing the netting condition \eqref{eq:netting-condition}.
		\item The trading strategy in stock and bonds is derived using the self-financing condition \eqref{eq:self-financing-condition-gains}.
	\end{itemize}

Define the \emph{cash balance} $C^{\ell,b}:[0,T]\times\Omega\rightarrow\mathbb{R}$ from strategy $\Theta$ as follows:
	\begin{align}
		\label{eq:cash-balance}
		C^{\ell,b}_t:=\theta^\ell_tB^\ell_t+\theta^b_tB^b_t
	\end{align}
Because the bank account prices $B^\ell,B^b$ are continuous while the strategies $\theta^\ell,\theta^b$ are $\mathbb{F}$-predictable, the expression for the cash balance must preserve left-continuity. We recall from equation \eqref{eq:portfolio-price} that the value of any portfolio in $\mathfrak{M}$ is:
	\begin{align}
		V^\Theta_t:=
		\theta^S_tS_t
		+\sum_{i=1}^2\theta^{i}_tP^{i}_t
		+\theta^\ell_tB^\ell_t
		+\theta^b_tB^b_t
		=\theta^S_tS_t
		+\sum_{i=1}^2\theta^{i}_tP^{i}_t
		+C_t^{\ell_b}
	\end{align}
Hence to ensure $\Theta$ satisfies $V^\Theta+\widehat{V}=0$ and that the cash balance remains $\mathbb{F}$-predictable, $C^{\ell,b}$ must be equal to:
	\begin{align}
		\notag
		C^{\ell,b}_t
		&=-\widehat{V}_{t-}-\theta^S_tS_{t-}-\sum_{i=1}^2\theta^{i}_tP^{i}_{t-}\\\notag
		&=-\indicator{t^-<\tau}V_{t-}-\theta^S_tS_{t-}-\sum_{i=1}^2\theta^{i}_tP^{i}_{t-}\\
		\label{eq1:proof-lemma-hedging-strategy}
		&=-\indicator{t\leq\tau}V_{t-}-\theta^S_tS_{t-}-\sum_{i=1}^2\theta^{i}_tP^{i}_{t-}
	\end{align}
By combining the definition of the cash balance \eqref{eq:cash-balance} with the netting condition $\theta^\ell\theta^b\equiv0$, we obtain the strategies in the deposit and funding accounts respectively:
	\begin{align}
		\label{eq2:proof-lemma-hedging-strategy}
		\theta^\ell_t=\frac{1}{B^\ell_t}(C^{\ell,b}_t)^+,
		\qqquad
		\theta^b_t=-\frac{1}{B^b_t}(C^{\ell,b}_t)^-
	\end{align}
which concludes the first part of the proof.
\newline

To derive the strategies in the stock $(\theta^S,\upvartheta^S)$ and the bonds $(\theta^{i},\upvartheta^{i})$ we start by applying the product rule from Lemma \ref{lemm:ito-for-fv} to the price component $\widehat{V}=(1-J)V$ within the gain process $\widehat{G}$. The Lemma is applicable because $V$ is a function of the bounded variation process $t$ and a semimartingale, thus is semimartingale itself, while $J$ is evidently of bounded variation:
	\begin{align}
		\label{eq3:proof-lemma-hedging-strategy}
		\diff \widehat{G}_t
		=\diff\widehat{V}_t+\diff\widehat{D}_t
		=\left((1-J_{t-})\diff V_t-V_t\diff J_t\right)
		+\sum_{i=1}^2(1-J_{t-})Z^i_t\diff J^{i}_t
	\end{align}
We seek to factorize the previous equation by $(1-J_{t-})=\indicatorf{[0,\tau]}{t}$. Using the definition of jump integrals and applying Lemma \ref{lemm:indicator-equivalency}, note that the second integral in \eqref{eq3:proof-lemma-hedging-strategy} can be written as:
	\begin{align}
		\notag
		\int_0^tV_u\diff J_u
		=V_{\tau}\indicator{t\geq\tau}
		=V_{t\wedge\tau}\indicator{t\geq\tau}
		&=\left(V_{t\wedge\tau_1}\indicator{\tau_1\leq\tau_2}+
		V_{t\wedge\tau_2}\indicator{\tau_1>\tau_2}\right)\indicator{t\geq\tau}\\\notag
		&=\left(V_{t\wedge\tau_1}\indicator{(t\wedge\tau_1)\leq\tau}\right)\indicator{t\geq\tau_1}+
		\left(V_{t\wedge\tau_2}\indicator{(t\wedge\tau_2)\leq\tau}\right)\indicator{t\geq\tau_2}\\[4pt]\notag
		&=\left(V_{\tau_1}\left(1-J_{\tau_1-}\right)\right)\indicator{t\geq\tau_1}+
		\left(V_{\tau_2}\left(1-J_{\tau_2-}\right)\right)\indicator{t\geq\tau_2}\\
		\label{eq4:proof-lemma-hedging-strategy}
		&=\int_0^t(1-J_{u-})V_u\diff J^{1}_u+\int_0^t(1-J_{u-})V_u\diff J^{2}_u
	\end{align}
Injecting equation \eqref{eq4:proof-lemma-hedging-strategy} into \eqref{eq3:proof-lemma-hedging-strategy}, the gain process $\widehat{G}$ is equal to:
	\begin{align}
		\label{eq5:proof-lemma-hedging-strategy}
		\diff \widehat{G}_t
		&=(1-J_{t-})\left(\diff V_t+\sum_{i=1}^2\left(Z^i_t-V_t\right)\diff J^{i}_t\right)
	\end{align}
To expand the term in $\diff V$ above, we apply It\^o's Formula for semimartingales \citep[Theorem 32 from Chapter II in][]{Protter1990}:
	\begin{align}
		\label{eq6:proof-lemma-hedging-strategy}
		\diff V_t=
		\mathscr{L}_Sv(t,S_{t-})\diff t
		+v_s(t,S_{t-})\diff S_t
		+(\Delta V_t-v_s(t,S_{t-})\Delta S_t)(\diff J^{1}_t+\diff J^{2}_t)
	\end{align}
where we define the partial derivative of $v$; the pricing operator $\mathscr{L}_S$ for process $S$; and the jump sizes $\Delta S$ and $\Delta V$, respectively:
\begin{align}
	\label{eq7:proof-lemma-hedging-strategy}
	v_s:=\frac{\partial v}{\partial s},\qquad
	\mathscr{L}_S:=\frac{\partial}{\partial t}+\frac{\sigma^2s^2}{2}\frac{\partial^2}{\partial s^2},\qquad
	\Delta S_t:=S_t-S_{t-},\quad
	\Delta V_t:=V_t-V_{t-}
\end{align}
Injecting the stochastic differential equation \eqref{eq6:proof-lemma-hedging-strategy} satisfied by $V$ into equation \eqref{eq5:proof-lemma-hedging-strategy} and using the definition of jump $\Delta V$, we obtain an integral representation for the gain process $\widehat{G}$:
	\begin{align}
		\label{eq8:proof-lemma-hedging-strategy}
		\diff \widehat{G}_t
		&=(1-J_{t-})\left(\mathscr{L}_Sv(t,S_{t-})\diff t+v_s(t,S_{t-})\diff S_t
		+\sum_{i=1}^2\left(Z^i_t-V_{t-}-v_s(t,S_{t-})\Delta S_t\right)\diff J^{i}_t\right)
	\end{align}
To finalize the proof, we consider the gains from strategy $\Theta$. Replacing the repo dividend processes defined in equations \eqref{eq:stock-repo-model} and \eqref{eq:zcbond-repo-model} into expression \eqref{eq:portfolio-gain-process} for $G^\Theta$; as well using the definition for stock and bond exposures from equation \eqref{eq:def-exposures}, we obtain:
	\begin{align}
		\label{eq9:proof-lemma-hedging-strategy}
		\diff G^\Theta_t&=
		\xi^S_t\diff(S_t+D_t)
		+\sum_{i=1}^2\xi^{i}_t\diff P^{i}_t
		-\left(h_S\upvartheta^S_tS_{t-}
		+\sum_{i=1}^2h_i\upvartheta^{i}_tP^{i}_{t-}\right)
		\diff t
		+\theta^\ell_t\diff B^\ell_t
		+\theta^b_t\diff B^b_t
	\end{align}
The equality $\widehat{G}_t+G^\Theta_t=0$ can only hold if the stochastic components from both gain processes are a.s. opposites on every path. By expanding the terms $\diff P^{i}_t$ using their definition \eqref{eq:zcbond-model}, then matching the stochastic terms from the portfolio gains $G^\Theta$ in \eqref{eq9:proof-lemma-hedging-strategy} to those $\widehat{G}$ from the vulnerable claim in \eqref{eq8:proof-lemma-hedging-strategy}, we obtain the following system:
	\begin{subequations}
	\label{eq10:proof-lemma-hedging-strategy}
	\begin{align}
		&\xi^S_t\diff S_t=-(1-J_{t-})v_s(t,S_{t-})\diff S_t\\[3pt]
		&\xi^{i}_tP^{i}_{t-}\diff J^{i}_t=(1-J_{t-})\left(Z^i_t-V_{t-}-v_s(t,S_{t-})\Delta S_t\right)\diff J^{i}_t
	\end{align}	
	\end{subequations}
By observing the stock jump size $\Delta S$ is equal to $\kappa S_-$ we conclude the proof.
\end{proof}

\subsection{Proof of Proposition \ref{prop:pricing-pde}}
\label{app:proof-prop-pricing-pde}
\begin{proof}
To derive the PDE \eqref{eq:pricing-pde}, we first show we can restrict our attention to the pre-default set $\{t\leq\tau\}$. Then by applying the self-financing condition \eqref{eq:self-financing-condition-gains}, we replace the numbers of units in each asset by their hedging value as derived in Proposition \ref{prop:hedging-strategy}.
\newline

We start by observing that, similar to the claim's gain process $\widehat{G}$ as displayed in \eqref{eq8:proof-lemma-hedging-strategy}, the gain process $G^\Theta$ for the hedging strategy can be factorized by the pre-default process $(1-J_-)$. For the number of units invested in the physical assets and their repos, this is an immediate consequence from equation \eqref{eq10:proof-lemma-hedging-strategy} on exposures, and definition \eqref{eq:funding-policy-stock} linking exposures and units. This also holds for the bank accounts units $\theta^\ell$ and $\theta^b$. Indeed, the cash balance satisfies:
	\begin{align}
		\notag
		C^{\ell,b}_t
		&=-\indicator{t\leq\tau}V_{t-}-\alpha_S\xi^S_tS_{t-}-\sum_{i=1}^2\alpha_i\xi^{i}_tP^{i}_{t-}\\
		\label{eq1:proof-prop-pricing-pde}
		&=(1-J_{t-})\left(-V_{t-}+\alpha_Sv_s(t,S_{t-})S_{t-}-\sum_{i=1}^2\alpha_i\left(Z^i_t-V_{t-}-v_s(t,S_{t-})\kappa S_{t-}\right)P^{i}_{t-}\right)
	\end{align}
For any process $X\in\{\xi^S,\theta^S,\upvartheta^S,\xi^i,\theta^i,\upvartheta^i,\theta^\ell,\theta^b,C^{\ell,b}\}$ we define its pre-default version $\widetilde{X}$ by keeping the default indicator separate, such that for any $t\in[0,T]$:
	\begin{align}
		\label{eq2:proof-prop-pricing-pde}
		X_t=\indicator{t\leq\tau}\widetilde{X}_t
	\end{align}
Then we write the portfolio gain process \eqref{eq9:proof-lemma-hedging-strategy} using the pre-default versions:
	\begin{align}
		\notag
		\diff G^\Theta_t&=
		(1-J_{t-})\Bigg(
		\widetilde{\xi}^S_t\diff(S_t+D_t)
		+\sum_{i=1}^2\widetilde{\xi}^{i}_t\diff P^{i}_t
		-\left(h_S\widetilde{\upvartheta}^S_tS_{t-}
		+\sum_{i=1}^2h_i\widetilde{\upvartheta}^{i}_tP^{i}_{t-}
		\right)\diff t
		\\\label{eq3:proof-prop-pricing-pde}
		&\qqquad
		+\widetilde{\theta}^\ell_t\diff B^\ell_t
		+\widetilde{\theta}^b_t\diff B^b_t
		\Bigg)
	\end{align}
From expression \eqref{eq8:proof-lemma-hedging-strategy} for the claim gains $\widehat{G}$ combined with equation \eqref{eq3:proof-prop-pricing-pde}, the condition $\widehat{G}+G^\Theta=0$ is satisfied either over the set $\{J_{t-}=1\}=\{t>\tau\}$ which represents the post-default state; or when the following equality holds, where we have 
split the bond price changes $\diff P^{i}_t$ into their return and default components, and used the hedging system \eqref{eq10:proof-lemma-hedging-strategy} to remove any stochastic component:
	\begin{align}
		\label{eq4:proof-prop-pricing-pde}
		&\mathscr{L}_SV_{t-}\diff t
		+\widetilde{\xi}^S_t\diff D_t
		-\left(
		h_S\widetilde{\upvartheta}^S_tS_{t-}
		+\sum_{i=1}^2(h_i\widetilde{\upvartheta}^{i}_t
		-r_i\widetilde{\xi}^{i}_t)
		P^{i}_{t-}
		\right)\diff t
		+\widetilde{\theta}^\ell_t\diff B^\ell_t
		+\widetilde{\theta}^b_t\diff B^b_t=0
	\end{align}	
where by abuse of notation we use $\mathscr{L}_SV_{t-}$ to refer to $\mathscr{L}_Sv(t,S_{t-})$. In the previous equation, we replace the stock dividends $\diff D_t$ by their expression, defined in Section \ref{sec:model-description}, equation \eqref{eq:dividend-model}:
	\begin{align}
		\label{eq5:proof-prop-pricing-pde}
		&\mathscr{L}_SV_{t-}\diff t
		+\left(
		\big(q\widetilde{\xi}^S_t-h_S\widetilde{\upvartheta}^S_t\big)S_{t-}
		+\sum_{i=1}^2\big(r_i\widetilde{\xi}^{i}_t
		-h_i\widetilde{\upvartheta}^{i}_t\big)
		P^{i}_{t-}
		\right)\diff t
		+\widetilde{\theta}^\ell_t\diff B^\ell_t
		+\widetilde{\theta}^b_t\diff B^b_t=0
	\end{align}	
We now insert the expression \eqref{eq2:proof-lemma-hedging-strategy} for the bank account units $\theta^\ell, \theta^b$ and replace the bank account gains $\diff B^\ell_t,\diff B^b_t$ by their expressions from equations \eqref{eq:deposit-model} and \eqref{eq:funding-model}. Given now there are only deterministic terms involved, we factorize and divide by $\diff t$:
	\begin{align}
		\label{eq6:proof-prop-pricing-pde}
		&\mathscr{L}_SV_{t-}
		+\sum_{i=1}^2\big(r_i\widetilde{\xi}^{i}_t
		-h_i\widetilde{\upvartheta}^{i}_t\big)P^{i}_{t-}
		+\big(q\widetilde{\xi}^S_t
		-h_S\widetilde{\upvartheta}^S_t\big)S_{t-}
		+r_\ell\widetilde{\theta}^\ell_tB^\ell_t
		+r_b\widetilde{\theta}^b_tB^b_t=0
	\end{align}	
In the above equation \eqref{eq6:proof-prop-pricing-pde}, we replace the (pre-default) number of units $\widetilde{\upvartheta}^S$ in stock repos by the expression $(1-\alpha_S)\widetilde{\xi}^S$, and the units $\widetilde{\upvartheta}^{i}$ in risky bonds by $(1-\alpha_i)\widetilde{\xi}^{i}$:
	\begin{align}
		\label{eq7:proof-prop-pricing-pde}
		{\hspace{-.2cm}}
		\mathscr{L}_SV_{t-}
		+\sum_{i=1}^2\big(r_i-(1-\alpha_i)h_i\big)
		\widetilde{\xi}^{i}_tP^{i}_{t-}
		+\big(q-(1-\alpha_S)h_S\big)
		\widetilde{\xi}^S_tS_{t-}
		+r_\ell\widetilde{\theta}^\ell_tB^\ell_t
		+r_b\widetilde{\theta}^b_tB^b_t=0
	\end{align}	
We define the \emph{funding spread} $\varphi:=r_b-r_\ell$, which is positive by Proposition \ref{prop:no-arbitrage}.
Then we rearrange the bank account gains in \eqref{eq7:proof-prop-pricing-pde} using the pre-default balance $\widetilde{C}^{\ell,b}$:
\begin{align}
	\label{eq9:proof-prop-pricing-pde}
	r_\ell\widetilde{\theta}^\ell_tB^\ell_t
	+r_b\widetilde{\theta}^b_tB^b_t
	=r_\ell\widetilde{C}^{\ell,b}_t
	-\varphi(\widetilde{C}^{\ell,b}_t)^-
\end{align}
We inject the above decomposition \eqref{eq9:proof-prop-pricing-pde} into equation \eqref{eq7:proof-prop-pricing-pde}, then expand the expression of the cash balance using \eqref{eq1:proof-prop-pricing-pde}, so that equation \eqref{eq7:proof-prop-pricing-pde} becomes:
	\begin{multline}
		-r_\ell V_{t-}
		+\mathscr{L}_SV_{t-}
		+(q-(1-\alpha_S)h_S-\alpha_Sr_\ell)
		\widetilde{\xi}^S_tS_{t-}
		\\\label{eq11:proof-prop-pricing-pde}
		+\sum_{i=1}^2(r_i-(1-\alpha_i)h_i-\alpha_ir_\ell)
		\widetilde{\xi}^{i}_tP^{i}_{t-}
		-\varphi\left(
		-V_{t-}-\alpha_S\widetilde{\xi}^S_tS_{t-}
		-\sum_{i=1}^2\alpha_i\widetilde{\xi}^{i}_tP^{i}_{t-}\right)^-=0
	\end{multline}	
We define the new variables $r_S$ and $\lambda_i$ for $i\in\{1,2\}$:
	\begin{align}
		\label{eq12:proof-prop-pricing-pde}
		r_S&:=\alpha_Sr_\ell+(1-\alpha_S)h_S\\[3pt]
		\label{eq13:proof-prop-pricing-pde}
		\lambda_i&:=\gamma_i-(1-\alpha_i)(h_i-r_\ell)
	\end{align}
where we have used assumption \eqref{eq:bond-return-rate} that is $r_i=r_\ell+\gamma_i$; and we write equation \eqref{eq11:proof-prop-pricing-pde} using our newly defined variables:
	\begin{multline}
		\label{eq14:proof-prop-pricing-pde}
		-r_\ell V_{t-}
		+\mathscr{L}_SV_{t-}
		-(r_S-q)\widetilde{\xi}^S_tS_{t-}
		+\sum_{i=1}^2
		\lambda_i\widetilde{\xi}^{i}_tP^{i}_{t-}\\
		-\varphi\left(
		-V_{t-}-\alpha_S\widetilde{\xi}^S_tS_{t-}
		-\sum_{i=1}^2\alpha_i\widetilde{\xi}^{i}_tP^{i}_{t-}
		\right)^-=0
	\end{multline}	
We replace the bond exposure $\widetilde{\xi}^{i}$ by their hedging values given in \eqref{eq10:proof-lemma-hedging-strategy}, so that: 
	\begin{align}
		\label{eq15:proof-prop-pricing-pde}
		\widetilde{\xi}^{i}_tP^{i}_{t-}=
		Z^{i}_t-V_{t-}+\kappa\widetilde{\xi}^S_tS_{t-}
	\end{align}
and we define the following auxiliary variables:
	\begin{align}
		\label{eq16:proof-prop-pricing-pde}
		\lambda_V:=\lambda_1+\lambda_2,\qqquad
		\alpha_P:=\alpha_1+\alpha_2
	\end{align}
Equation \eqref{eq14:proof-prop-pricing-pde} becomes:
	\begin{multline}
		-(r_\ell+\lambda_V)V_{t-}
		+\mathscr{L}_SV_{t-}
		-(r_S-q-\kappa\lambda_V)\widetilde{\xi}^S_tS_{t-}
		+\sum_{i=1}^2\lambda_iZ^{i}_t
		\\\label{eq17:proof-prop-pricing-pde}
		-\varphi\left(
		-(1-\alpha_P)V_{t-}
		-(\alpha_S+\kappa\alpha_P)\widetilde{\xi}^S_tS_{t-}
		-\sum_{i=1}^2\alpha_iZ^{i}_t
		\right)^-=0
	\end{multline}	
We replace $\widetilde{\xi}^S$ by its value implied by Proposition \ref{prop:hedging-strategy}, equation \eqref{eq:stock-hedging}, in order to obtain the desired PDE displayed in equation \eqref{eq:pricing-pde}.
\newline

Finally, if $\{\tau>T\}$ then the value of the claim at expiry must verify $\widehat{V}_T=V_T=f(S_T)$ which entails the terminal condition $v(T,s)={f}(s)$ thus concluding the proof.
\end{proof}

\subsection{Proof of Corollary \ref{coro:pricing-pde-linearised}}
\label{app:proof-coro-pricing-pde-linearised}
\begin{proof}
Under the funding policy $\upalpha^\star$ we have:
	\begin{alignat}{2}
		\label{eq1:proof-coro-pricing-pde-linearised}
		&\alpha_{P}=1&\quad\Rightarrow\quad&1-\alpha_{P}=0
		\\
		&\alpha^\star_S=-\kappa&\quad\Rightarrow\quad& \alpha_S^\star+\alpha_{P}\kappa=0
	\end{alignat}
Replacing these values in PDE \eqref{eq:pricing-pde} from Proposition \ref{prop:pricing-pde} yields the equation:
	\begin{align}
		&-r_Vv
		+\mathscr{L}_Sv
		+(r_S-q-\kappa\lambda_V)v_ss
		+\sum_{i=1}^2\lambda_iz_i
		-\varphi\left(\sum_{i=1}^2\alpha_{i}z_i\right)^+=0
	\end{align}
	where we have used the equivalency $-(-x)^-=-(x)^+$. Under $\upalpha^\star$ we have $r_S=h_S+\kappa(h_S-r_\ell)$ hence using the definition of $\lambda_V$, equation \eqref{eq:def-lambdav}:
		\begin{align}
			\notag
			r_S-q-\kappa\lambda_V
			&=(h_S+\kappa(h_S-r_\ell))-q-\kappa(r_V-r_\ell)\\
			\label{eq2:proof-coro-pricing-pde-linearised}
			&=h_S-q+\kappa(h_S-r_V)
	\end{align}
	Moreover, using definitions \eqref{eq:recovery-value-1} and \eqref{eq:recovery-value-2} for the recovery processes, observe that:
	\begin{align}
		\notag
		\left(\sum_{i=1}^2\alpha_{i}z_i\right)^+
		&=\left((\alpha_1+\alpha_2\varkappa_2)(\widetilde{m})^+
		-(\alpha_1\varkappa_1+\alpha_2)(\widetilde{m})^-\right)^+\\
		&=(\alpha_1+\alpha_2\varkappa_2)(\widetilde{m})^+
	\end{align}
	given that $\varkappa_1,\varkappa_2>0$ and $\alpha_1,\alpha_2\geq0$ by assumption. Hence:
	\begin{align}
		\sum_{i=1}^2\lambda_iz_i-\varphi\left(\sum_{i=1}^2\alpha_{i}z_i\right)^+
		=(\lambda_1+\lambda_2\varkappa_2-\varphi(\alpha_1+\alpha_2\varkappa_2))(\widetilde{m})^+-(\lambda_1\varkappa_1+\lambda_2)(\widetilde{m})^-
	\end{align}
	The result ensues.
\end{proof}

\subsection{Proof of Proposition \ref{prop:valuation-forward}}
\label{app:proof-th-valuation-forward}

\begin{lemma}
	\label{lemm:bs-integrand}
	For any $u,t\in[0,T]$ such that $t<u$:
		\begin{align}
			\mathbb{E}_{t,s}^{\widehat{\mathbb{Q}}}\left(\left(F_T(u,(1+\kappa)S_{u-})-K\right)^+\right)
			&=\mathcal{BS}_C\left((1+\kappa)F_T(t,s)e^{\kappa(h_S-r_V)(u-t)},u-t\right)
		\end{align}
\end{lemma}

\begin{proof}
	Recall the definition of the risk-free forward price from equation \eqref{eq:rf-fwd-price-stock}:
		\begin{align}
			F_T(t,s):=se^{(h_S-q)(T-t)}
		\end{align}
	Moreover from Theorem \ref{th:conditional-expectation}, equation \eqref{eq:dynamics-s-pricing-measure}, the dynamics of $S$ under measure $\widehat{\mathbb{Q}}$ are:
		\begin{align}
			\label{eq1:proof-lemma-expectation-forward}
			\diff S_t=
			S_t((h_S-q+\kappa(h_S-r_V))\diff t+\sigma\diff \widehat{W}_t)
		\end{align}
	with initial condition $S_0$. Therefore under ${\widehat{\mathbb{Q}}}$ the price $S$ follows a geometric Brownian motion with drift $(h_S-q+\kappa(h_S-r_V))$ and diffusion term $\sigma$. Let $t\leq u\leq T$, we have $S_t=S_{t-}$ and:
		\begin{align}
			\notag
			F_T(u,S_u)
			&=S_ue^{(h_S-q)(T-u)}\\\notag
			&=\left(S_te^{(h_S-q)(T-u)}\right)e^{\left((h_S-q+\kappa(h_S-r_V))-\frac{\sigma^2}{2}\right)(u-t)
			+\sigma(\widehat{W}_u-\widehat{W}_t)}\\\notag
			&=\left(S_te^{(h_S-q)(T-t)}\right)e^{\left(\kappa(h_S-r_V)-\frac{\sigma^2}{2}\right)(u-t)
			+\sigma(\widehat{W}_u-\widehat{W}_t)}\\
			\label{eq2:proof-lemma-expectation-forward}
			&=F_T(t,S_t)e^{\left(\kappa(h_S-r_V)-\frac{\sigma^2}{2}\right)(u-t)
			+\sigma(\widehat{W}_u-\widehat{W}_t)}
		\end{align}
	Hence conditional on $\mathscr{F}_t$ the forward price $F_T(u,S_u)$ is log-normally distributed with mean and log-variance:
		\begin{align}
			\mathbb{E}_{t,s}^{\widehat{\mathbb{Q}}}(F_T(u,S_u))
			&=F_T(t,s)e^{\kappa(h_S-r_V)(u-t)}\\[3pt]
			V_{t,s}^{\widehat{\mathbb{Q}}}(\ln F_T(u,S_u))
			&=\sigma^2(u-t)
		\end{align}
	The jump factor $(1+\kappa)$ is a multiplicative constant to the expectation, hence we can apply Theorem \ref{th:black-scholes} and conclude.
\end{proof}

\begin{proof}[Proof of Proposition \ref{prop:valuation-forward}]
	We recall $f(s)=s-K$ and $m\equiv v^*$ where $v^*$ is defined in \eqref{eq:risk-free-forward}.
	\newline
	
	We start by computing the value from the terminal payoff component in the stochastic representation \eqref{eq:conditional-expectation}. Under measure $\widehat{\mathbb{Q}}$ the stock price has a log-normal distribution specified by the dynamics \eqref{eq:dynamics-s-pricing-measure} therefore:
	\begin{align}
		\label{eq1:proof-th-valuation-forward}
		\mathbb{E}_{t,s}^{\widehat{\mathbb{Q}}}
		\left(f(S_T)\right)
		=\mathbb{E}_{t,s}^{\widehat{\mathbb{Q}}}
		\left(S_T-K\right)
		=se^{(h_S-q+\kappa(h_S-r_V))(T-t)}-K
		=F_T(t,s)e^{\kappa(h_S-r_V)(T-t)}-K
	\end{align}
	where the last equality stems from the definition \eqref{eq:rf-fwd-price-stock} of $F$. Now we compute the integrands within  \eqref{eq:conditional-expectation}. Starting with the CVA integrand, we invert the integration order by linearity of expectation:
		\begin{align}
			\label{eq2:proof-th-valuation-forward}
			\mathbb{E}_{t,s}^{\widehat{\mathbb{Q}}}
			\left(
			\int_t^Te^{-r_V(u-t)}(\wthalf{M}_u)^+
			\diff u\right)
			=
			\int_t^Te^{-r_V(u-t)}
			\mathbb{E}_{t,s}^{\widehat{\mathbb{Q}}}
			((\wthalf{M}_u)^+)
			\diff u
		\end{align}
	Using the definition \eqref{eq:M-tilde} of $\wthalf{M}$ and our assumption that $m\equiv v^*$, notice the integrand in the right-hand side of \eqref{eq2:proof-th-valuation-forward} is equal to:
		\begin{align}
			\label{eq3:proof-th-valuation-forward}
			\mathbb{E}_{t,s}^{\widehat{\mathbb{Q}}}
			((\wthalf{M}_u)^+)
			=\mathbb{E}_{t,s}^{\widehat{\mathbb{Q}}}
			\left(\left(e^{-r(T-u)}(F_T(u,(1+\kappa)S_{u-})-K)\right)^+\right)
		\end{align}
	Using Lemma \ref{lemm:bs-integrand}, we immediately have that:
		\begin{align}
			\label{eq4:proof-th-valuation-forward}
			\mathbb{E}_{t,s}^{\widehat{\mathbb{Q}}}
			((\wthalf{M}_u)^+)
			=e^{-r(T-u)}\mathcal{BS}_C\left((1+\kappa)F_T(t,s)e^{\kappa(h_S-r_V)(u-t)},u-t\right)
		\end{align}
	Lemma \ref{lemm:bs-integrand} can be extended to the case of a put option therefore:
		\begin{align}
			\label{eq5:proof-th-valuation-forward}
			\mathbb{E}_{t,s}^{\widehat{\mathbb{Q}}}
			((\wthalf{M}_u)^-)
			=\mathbb{E}_{t,s}^{\widehat{\mathbb{Q}}}
			((-\wthalf{M}_u)^+)
			=\mathcal{BS}_P\left((1+\kappa)F_T(t,s)e^{\kappa(h_S-r_V)(u-t)},u-t\right)
		\end{align}
	Combining equations 
	\eqref{eq1:proof-th-valuation-forward},
	\eqref{eq4:proof-th-valuation-forward}, and 
	\eqref{eq5:proof-th-valuation-forward}
	yields the result.
\end{proof}

\subsection{Proof of Theorem \ref{th:analytical-fwd-value}}
\label{app:proof-th-analytical-fwd-value}
We start by proving two technical lemmas which give an analytical expression for certain classes of definite integrals involving the Gaussian integral \eqref{eq:gaussian-cdf} and its first derivative. We recall the expression for the \emph{Gaussian density} $\phi:\mathbb{R}\rightarrow\mathbb{R}$ of a standard normal random variable for any $x\in\mathbb{R}$:
	\begin{align}
		\label{eq:gaussian-density}
		\phi(x):=\frac{1}{\sqrt{2\pi}}e^{-\frac{1}{2}x^2}
	\end{align}

\begin{lemma}
	\label{lemm:A-1}
	Let $a,b\in\mathbb{R}$ and $t\geq0$. Define the real-valued integral $A_1:\mathbb{R}_{+}\times\mathbb{R}^2\rightarrow\mathbb{R}_{+}$:
	\begin{align}
		\label{eq:def-A-1}
		A_1(t,a,b):=
		\int_0^t\frac{1}{2\sqrt{u}}\phi\left(\left(au+\frac{b^2}{u}\right)^{\frac{1}{2}}\right)\diff u
	\end{align}
	Then, if $a>0$:
	\begin{align}
		\label{eq:sol-A-1-positive_a}
		A_1(t,a,b)=
		\begin{dcases}
			\frac{1}{2\sqrt{a}}\left(
			e^{b\sqrt{a}}\Phi\left(\frac{t\sqrt{a}+b}{\sqrt{t}}\right)
			-e^{-b\sqrt{a}}\Phi\left(\frac{-t\sqrt{a}+b}{\sqrt{t}}\right)
			\right), & b<0
			\\[4pt]
			\frac{1}{2\sqrt{a}}\left(
			e^{-b\sqrt{a}}\Phi\left(\frac{t\sqrt{a}-b}{\sqrt{t}}\right)
			-e^{b\sqrt{a}}\Phi\left(\frac{-t\sqrt{a}-b}{\sqrt{t}}\right)
			\right), & b\geq0
		\end{dcases}
	\end{align}
	On the other hand, if $a<0$:
	\begin{align}
		\label{eq:sol-A-1-negative_a}
		A_1(t,a,b)=
		\begin{dcases}
			\frac{1}{2i\sqrt{-a}}\left(
			e^{ib\sqrt{-a}}\Phi\left(\frac{it\sqrt{-a}+b}{\sqrt{t}}\right)
			-e^{-ib\sqrt{-a}}\Phi\left(\frac{-it\sqrt{-a}+b}{\sqrt{t}}\right)
			\right), & b<0
			\\[4pt]
			\frac{1}{2i\sqrt{-a}}\left(
			e^{-ib\sqrt{-a}}\Phi\left(\frac{it\sqrt{-a}-b}{\sqrt{t}}\right)
			-e^{ib\sqrt{-a}}\Phi\left(\frac{-it\sqrt{-a}-b}{\sqrt{t}}\right)
			\right), & b\geq0
		\end{dcases}
	\end{align}
	where $i$ is the imaginary unit. Finally, if $a=0$:
	\begin{align}
		\label{eq:sol-A-1-zero_a}
		A_1(t,0,b)=
		\begin{dcases}
			\sqrt{t}\phi\left(\frac{b}{\sqrt{t}}\right)
			+b\Phi\left(\frac{b}{\sqrt{t}}\right),
			&b<0
			\\[3pt]
			\sqrt{t}\phi\left(\frac{b}{\sqrt{t}}\right)
			-b\Phi\left(-\frac{b}{\sqrt{t}}\right),
			&b\geq0
		\end{dcases}
	\end{align}
\end{lemma}

\begin{proof}
	Throughout the proof we use the complementary symbols $\pm$ and $\mp$ to parametrize certain integrals and functions. For example, the notation $g_\pm(a,b,c):=(a\pm b\mp c)$ with $a,b,c\in\mathbb{R}$ is compact for two distinct functions, $g_+(a,b,c):=(a+b-c)$ and $g_-(a,b,c):=(a-b+c)$. When there is no ambiguity, we might shorten $g_\pm(a,b,c)$ to $g_\pm$.
	\newline
	
	We split our analysis based on whether $a$ is zero or not, then split the second case between positive and negative values for $a$.
	\newline
	
	\textbf{Case $(0,b)$}.
	The integral $A_1$ becomes, when $a=0$:
	\begin{align}
		\label{eq4:proof-lemma-A1}
		A_1(t,0,b)
		:=\frac{1}{\sqrt{2\pi}}\int_0^t\frac{1}{2\sqrt{u}}e^{-\frac{b^2}{2u}}\diff u
		=\int_0^t\frac{1}{2\sqrt{u}}\phi\left(\frac{b}{\sqrt{u}}\right)\diff u
	\end{align}
	Using integration by parts, we obtain the decomposition:
	\begin{align}
		\notag
		A_1
		&=\left[\sqrt{u}\phi\left(\frac{b}{\sqrt{u}}\right)\right]_{u=0}^t
		-\int_0^t\sqrt{u}\left(-\frac{b}{2u\sqrt{u}}\right)\left(-\frac{b}{\sqrt{u}}\phi\left(\frac{b}{\sqrt{u}}\right)\right)\diff u
		\\[4pt]\label{eq5:proof-lemma-A1}
		&=\sqrt{t}\phi\left(\frac{b}{\sqrt{t}}\right)
		+b\int_0^t\left(-\frac{b}{2u\sqrt{u}}\right)\phi\left(\frac{b}{\sqrt{u}}\right)\diff u
	\end{align}
	To solve the second integral in \eqref{eq5:proof-lemma-A1}, we make the change of variable $w=b/\sqrt{u}$. The analytical form of the resulting integral will depend on whether $b$ is positive or negative:
	\begin{align}
		\label{eq6:proof-lemma-A1}
		\int_0^t\left(-\frac{b}{2u\sqrt{u}}\right)\phi\left(\frac{b}{\sqrt{u}}\right)\diff u
		=\int_{\sgn(b)\times\infty}^{b/\sqrt{t}}\phi(w)\diff w
		=\begin{dcases}
			\Phi\left(\frac{b}{\sqrt{t}}\right),  	& b<0\\
			\Phi\left(\frac{b}{\sqrt{t}}\right)-1,  & b>0\\
		\end{dcases}
	\end{align}
	Collecting equations \eqref{eq5:proof-lemma-A1} and \eqref{eq6:proof-lemma-A1} together, we conclude for the case $(0,b)$:
	\begin{align}
		\label{eq7:proof-lemma-A1}
			A_1(t,0,b)=
			\begin{dcases}
				\sqrt{t}\phi\left(\frac{b}{\sqrt{t}}\right)
				+b\Phi\left(\frac{b}{\sqrt{t}}\right),
				&b<0
				\\[3pt]
				\sqrt{t}\phi\left(\frac{b}{\sqrt{t}}\right)
				-b\Phi\left(-\frac{b}{\sqrt{t}}\right),
				&b>0
			\end{dcases}
	\end{align}
	
	\textbf{Case $(a,b)$ with $a\in\mathbb{R}^*$.}
	We split our analysis in two parts, depending on whether $a$ is positive or negative.
	\newline
	
	\emph{Case $(a,b)$ with $a>0$}. We need to split $A_1$ into two distinct integrals we can then solve. To do so, observe the exponent within the integrand of $A_1$ admits two alternative factorizations: 
	\begin{align}
		\label{eq8:proof-lemma-A1}
		au+\frac{b^2}{u}
		=\left(\sqrt{au}\pm\frac{b}{\sqrt{u}}\right)^2\mp2b\sqrt{a}
	\end{align}
	Under the assumption that $a\neq0$, we decompose the integral $A_1$ as follows:
	\begin{align}
		\notag
		A_1&=
		\frac{2\sqrt{a}}{2\sqrt{a}}
		\int_0^t\frac{1}{2\sqrt{u}}\phi\left(\left(au+\frac{b^2}{u}\right)^{\frac{1}{2}}\right)\diff u
		\\[0pt]\notag
		&=\frac{1}{2\sqrt{a}}
		\int_0^t\frac{1}{2\sqrt{u}}\left(\left(\sqrt{a}+\frac{b}{u}\right)+\left(\sqrt{a}-\frac{b}{u}\right)\right)
		\phi\left(\left(au+\frac{b^2}{u}\right)^{\frac{1}{2}}\right)\diff u
		\\[10pt]\label{eq9:proof-lemma-A1}
		&=A_{1-}+A_{1+}
	\end{align}
	where we insert in \eqref{eq9:proof-lemma-A1} the real-valued integrals $A_{1+}$ and $A_{1-}:\mathbb{R}_+\times\mathbb{R}^2\rightarrow\mathbb{R}$ defined below:
	\begin{align}
		\label{eq10:proof-lemma-A1}
		A_{1\pm}(t,a,b)
		:=\frac{e^{\pm b\sqrt{a}}}{2\sqrt{a}}\int_0^t\frac{1}{2\sqrt{u}}\left(\sqrt{a}\mp\frac{b}{u}\right)
		\phi\left(\sqrt{au}\pm\frac{b}{\sqrt{u}}\right)\diff u
	\end{align}
	To solve the integrals $A_{1\pm}$ it suffices to make the change of variable $w_\pm=\sqrt{au}\pm\frac{b}{\sqrt{u}}$ that is:
	\begin{align}
		\label{eq11:proof-lemma-A1}
		\diff w_\pm 
		=\frac{1}{2\sqrt{u}}\left(\sqrt{a}\mp\frac{b}{u}\right)\diff u
	\end{align}
	We obtain an expression for both in terms of elementary functions and the Gaussian CDF $\Phi$:
	\begin{align}
		\label{eq12:proof-lemma-A1}
		A_{1+}
		=\frac{e^{b\sqrt{a}}}{2\sqrt{a}}\int_{\sgn(b)\times\infty}^{\sqrt{at}+b/\sqrt{t}}\phi(w)\diff w
		=\begin{dcases}
			\frac{e^{b\sqrt{a}}}{2\sqrt{a}}\Phi\left(\sqrt{at}+\frac{b}{\sqrt{t}}\right), 				 &b<0\\[4pt]
			\frac{e^{b\sqrt{a}}}{2\sqrt{a}}\left(\Phi\left(\sqrt{at}+\frac{b}{\sqrt{t}}\right)-1\right), &b>0
		\end{dcases}
		\\[8pt]
		\label{eq13:proof-lemma-A1}
		A_{1-}
		=\frac{e^{-b\sqrt{a}}}{2\sqrt{a}}\int_{-\sgn(b)\times\infty}^{\sqrt{at}-b/\sqrt{t}}\phi(w)\diff w
		=\begin{dcases}
			\frac{e^{-b\sqrt{a}}}{2\sqrt{a}}\left(\Phi\left(\sqrt{at}-\frac{b}{\sqrt{t}}\right)-1\right), &b<0\\[4pt]
			\frac{e^{-b\sqrt{a}}}{2\sqrt{a}}\Phi\left(\sqrt{at}-\frac{b}{\sqrt{t}}\right), 				  &b>0
		\end{dcases}
	\end{align}
	Inserting \eqref{eq12:proof-lemma-A1} and \eqref{eq13:proof-lemma-A1} into \eqref{eq9:proof-lemma-A1} we obtain the expression for $A_1$ when $a>0$:
	\begin{align}
		\label{eq14:proof-lemma-A1}
			A_1(t,a,b)=
			\begin{dcases}
				\frac{1}{2\sqrt{a}}\left(
				e^{b\sqrt{a}}\Phi\left(\frac{t\sqrt{a}+b}{\sqrt{t}}\right)
				-e^{-b\sqrt{a}}\Phi\left(\frac{-t\sqrt{a}+b}{\sqrt{t}}\right)
				\right), & b<0
				\\[4pt]
				\frac{1}{2\sqrt{a}}\left(
				e^{-b\sqrt{a}}\Phi\left(\frac{t\sqrt{a}-b}{\sqrt{t}}\right)
				-e^{b\sqrt{a}}\Phi\left(\frac{-t\sqrt{a}-b}{\sqrt{t}}\right)
				\right), & b>0
			\end{dcases}
	\end{align}
	
	\emph{Case $(a,b)$ with $a<0$}. On the other hand, we need to replace the factorization in \eqref{eq8:proof-lemma-A1} by one that uses imaginary numbers: 
	\begin{align}
		\label{eq15:proof-lemma-A1}
		au+\frac{b^2}{u}
		=\left(i\sqrt{-au}\pm\frac{b}{\sqrt{u}}\right)^2\mp2ib\sqrt{-a}
	\end{align}
	When $a>0$ we define the integrals $A_{1\pm}$ to be equal to:
	\begin{align}
		\label{eq16:proof-lemma-A1}
		A_{1\pm}
		:=\frac{e^{\pm ib\sqrt{-a}}}{2\sqrt{-a}}\int_0^t\frac{1}{2\sqrt{u}}\left(i\sqrt{-a}\mp\frac{b}{u}\right)
		\phi\left(i\sqrt{-au}\pm\frac{b}{\sqrt{u}}\right)\diff u
	\end{align}
	Proceeding in the same way as for $a>0$ but replacing $\sqrt{a}$ by $i\sqrt{-a}$, we show $A_1=A_{1+}+A_{1-}$ and make the change of variable $w_\pm=i\sqrt{-au}\pm\frac{b}{\sqrt{u}}$, thereby obtaining the following values:
	\begin{align}
		\label{eq17:proof-lemma-A1}
		A_{1+}
		=\frac{e^{ib\sqrt{-a}}}{2i\sqrt{-a}}\int_{\sgn(b)\times\infty}^{i\sqrt{-at}+b/\sqrt{t}}\phi(w)\diff w
		=\begin{dcases}
			\frac{e^{ib\sqrt{-a}}}{2i\sqrt{-a}}\Phi\left(i\sqrt{-at}+\frac{b}{\sqrt{t}}\right), 				&b<0\\[4pt]
			\frac{e^{ib\sqrt{-a}}}{2i\sqrt{-a}}\left(\Phi\left(i\sqrt{-at}+\frac{b}{\sqrt{t}}\right)-1\right),  &b>0
		\end{dcases}
		\\[8pt]
		\label{eq18:proof-lemma-A1}
		A_{1-}
		=\frac{e^{-ib\sqrt{-a}}}{2i\sqrt{-a}}\int_{-\sgn(b)\times\infty}^{i\sqrt{-at}-b/\sqrt{t}}\phi(w)\diff w
		=\begin{dcases}
			\frac{e^{-ib\sqrt{-a}}}{2i\sqrt{-a}}\left(\Phi\left(i\sqrt{-at}-\frac{b}{\sqrt{t}}\right)-1\right), &b<0\\[4pt]
			\frac{e^{-ib\sqrt{-a}}}{2i\sqrt{-a}}\Phi\left(i\sqrt{-at}-\frac{b}{\sqrt{t}}\right), 				&b>0
		\end{dcases}
	\end{align}	
	which allows us to conclude when $a>0$ -- and the proof:
	\begin{align}
		\label{eq19:proof-lemma-A1}
			A_1(t,a,b)=
			\begin{dcases}
				\frac{1}{2i\sqrt{-a}}\left(
				e^{ib\sqrt{-a}}\Phi\left(\frac{it\sqrt{-a}+b}{\sqrt{t}}\right)
				-e^{-ib\sqrt{-a}}\Phi\left(\frac{-it\sqrt{-a}+b}{\sqrt{t}}\right)
				\right), & b<0
				\\[4pt]
				\frac{1}{2i\sqrt{-a}}\left(
				e^{-ib\sqrt{-a}}\Phi\left(\frac{it\sqrt{-a}-b}{\sqrt{t}}\right)
				-e^{ib\sqrt{-a}}\Phi\left(\frac{-it\sqrt{-a}-b}{\sqrt{t}}\right)
				\right), & b>0
			\end{dcases}
	\end{align}
\end{proof}

\begin{lemma}
	\label{lemm:A-2}
	Let $a\in\mathbb{R}$, $b\in\mathbb{R}^*$ and $t\geq0$. Define the real-valued integral $A_2:\mathbb{R}_{+}\times\mathbb{R}\times\mathbb{R}^*\rightarrow\mathbb{R}_{+}$:
	\begin{align}
		\label{eq:def-A-2}
		A_2(t,a,b):=
		\int_0^t\frac{1}{2u\sqrt{u}}
		\phi\left(\left(au+\frac{b^2}{u}\right)^{\frac{1}{2}}\right)
		\diff u
	\end{align}
	Then, if $a>0$:
	\begin{align}
		\label{eq:sol-A-2-positive_a}
		A_2(t,a,b)=
		\begin{dcases}
			\frac{1}{2b}\left(
			e^{b\sqrt{a}}\Phi\left(\frac{t\sqrt{a}+b}{\sqrt{t}}\right)
			-e^{-b\sqrt{a}}\Phi\left(\frac{-t\sqrt{a}+b}{\sqrt{t}}\right)
			\right), 
			& b<0
			\\[4pt]
			\frac{1}{2b}\left(
			e^{-b\sqrt{a}}\Phi\left(\frac{t\sqrt{a}-b}{\sqrt{t}}\right)
			+e^{b\sqrt{a}}\Phi\left(\frac{-t\sqrt{a}-b}{\sqrt{t}}\right)
			\right), 
			& b>0 
		\end{dcases}
	\end{align}
	On the other hand, if $a<0$:
	\begin{align}
		\label{eq:sol-A-2-negative_a}
		A_2(t,a,b)=
		\begin{dcases}
			\frac{1}{2b}\left(
			e^{ib\sqrt{-a}}\Phi\left(\frac{it\sqrt{-a}+b}{\sqrt{t}}\right)
			-e^{-ib\sqrt{-a}}\Phi\left(\frac{-it\sqrt{-a}+b}{\sqrt{t}}\right)
			\right), 
			& b<0 
			\\[4pt]
			\frac{1}{2b}\left(
			e^{-ib\sqrt{-a}}\Phi\left(\frac{it\sqrt{-a}-b}{\sqrt{t}}\right)
			+e^{ib\sqrt{-a}}\Phi\left(\frac{-it\sqrt{-a}-b}{\sqrt{t}}\right)
			\right), 
			& b>0
		\end{dcases}
	\end{align}
	Finally, if $a=0$:
	\begin{align}
		\label{eq:sol-A-2-zero_a}
		A_2(t,0,b)=
		\begin{dcases}
			-\frac{1}{b}\Phi\left(\frac{b}{\sqrt{t}}\right),  & b<0\\[4pt]
			\frac{1}{b}\Phi\left(-\frac{b}{\sqrt{t}}\right), & b>0 
		\end{dcases}
	\end{align}
\end{lemma}

\begin{proof}
	We use the same conventions as in Lemma \ref{lemm:A-1}. We need to restrict ourselves to the case where $b$ is non-zero as otherwise the integral does not converge. Starting with $a>0$:
	\begin{align}
		\notag
		A_2(t,a,b)
		&=\frac{1}{\sqrt{2\pi}}\left(\frac{2b}{2b}\right)\int_0^t\frac{1}{2u\sqrt{u}}
		e^{-\frac{1}{2}\left(au+\frac{b^2}{u}\right)}\diff u
		\\\notag
		&=\frac{1}{\sqrt{2\pi}}\left(\frac{1}{2b}\right)\int_0^t\frac{1}{2\sqrt{u}}
		\left(\left(\sqrt{a}+\frac{b}{u}\right)-\left(\sqrt{a}-\frac{b}{u}\right)\right)
		e^{-\frac{1}{2}\left(au+\frac{b^2}{u}\right)}\diff u
		\\[5pt]\label{eq1:proof-lemma-A2}
		&=\frac{\sqrt{a}}{b}\left(A_{1-}(t,a,b)-A_{1+}(t,a,b)\right)
	\end{align}
	which allows us to conclude using the results from Lemma \ref{lemm:A-1}. For the case $a<0$, the expression for $A_2$ is simply:
	\begin{align}
		\label{eq2:proof-lemma-A2}
		A_2(t,a,b)=
		\frac{i\sqrt{-a}}{b}\left(A_{1-}(t,a,b)-A_{1+}(t,a,b)\right)
	\end{align}
	For the case $a=0$, it suffices to write instead:
	\begin{align}
		\notag
		A_2(t,0,b)
		&=\frac{1}{b\sqrt{2\pi}}\int_0^t\frac{b}{2u\sqrt{u}}
		e^{-\frac{1}{2}\left(\frac{b^2}{u}\right)}\diff u
	\end{align}
	then use equation \eqref{eq6:proof-lemma-A1} from the derivation of the case $(0,b)$ in the proof of Lemma \ref{lemm:A-1} in order to conclude.
\end{proof}

\begin{proposition}
	\label{prop:Lambda}
	Let $x,y,z\in\mathbb{R}$ and $t\geq0$ such that $x\neq0$. Define the real-valued integral $\Lambda:\mathbb{R}_{+}\times\mathbb{R}^*\times\mathbb{R}^2\rightarrow\mathbb{R}_{+}$:
	\begin{align}
		\label{eq:def-Lambda}
		\Lambda(t,x,y,z):=
		\int_0^te^{-xu}\Phi\left(y\sqrt{u}+\frac{z}{\sqrt{u}}\right)\diff u
	\end{align}
	Define the complex-valued function $\rho:\mathbb{R}^*\times\mathbb{R}\rightarrow\mathbb{C}$:
	\begin{align}
		\rho(x,y):=
			\begin{dcases}
				\sqrt{2x+y^2}, 		& 2x+y^2\geq0\\
				i\sqrt{|2x+y^2|},	& 2x+y^2<0
			\end{dcases}
	\end{align}
	as well as the variables:
	\begin{align}
		\label{eq:def-betas}
		\beta_0:=\frac{yt+z}{\sqrt{t}},\qquad
		\beta_1:=\frac{\rho(x,y)t-z}{\sqrt{t}},\qquad
		\beta_2:=\beta_1+\frac{2z}{\sqrt{t}}
	\end{align}
	Letting $\rho:=\rho(x,y)$, if $\rho(x,y)\neq0$ then $\Lambda$ admits the analytical expression:
	\begin{align}
		\notag
		&\Lambda(t,x,y,z)=\\[4pt]
		\label{eq:Lambda-solution-rho-nonzero}
		&\
		\begin{dcases}
			\frac{1}{x}\biggl(
			-e^{-xt}\Phi\left(\beta_0\right)
			-\frac{e^{-yz}}{2}\left(
			e^{-z\rho}\left(\frac{y}{\rho}-1\right)
			\Phi\left(-\beta_1\right)
			-e^{z\rho}\left(\frac{y}{\rho}+1\right)
			\Phi\left(\beta_2\right)
			\right)\biggr),
			& z<0
			\\[4pt]
			\frac{1}{x}\biggl(
			1-e^{-xt}\Phi\left(\beta_0\right)
			+\frac{e^{-yz}}{2}\left(
			e^{-z\rho}\left(\frac{y}{\rho}-1\right)
			\Phi\left(\beta_1\right)
			-e^{z\rho}\left(\frac{y}{\rho}+1\right)
			\Phi\left(-\beta_2\right)
			\right)\biggr), 
			& z\geq0 
		\end{dcases}
	\end{align}
\end{proposition}

\begin{proof}
	We will use integration by parts to decompose the parametrized integral $\Lambda$ into different components, then apply Lemmas \ref{lemm:A-1} and \ref{lemm:A-2} to prove the analytical expression \eqref{eq:Lambda-solution-rho-nonzero} -- we distinguish between the cases $z\in\mathbb{R}^*$ and $z=0$.
	\newline
	
	Integrating by parts $\Lambda$, we obtain the following expression:
	\begin{align}
		\label{eq1:proof-prop-Lambda}
		\Lambda=
		\left[-\frac{e^{-xu}}{x}\Phi\left(y\sqrt{u}+\frac{z}{\sqrt{u}}\right)
		\right]_{u=0}^t
		+\frac{1}{x}\int_0^te^{-xu}\left(
		\frac{1}{2\sqrt{u}}\left(y-\frac{z}{u}\right)
		\phi\left(y\sqrt{u}+\frac{z}{\sqrt{u}}\right)
		\right)\diff u
	\end{align}
	Observe that the first term in equation \eqref{eq1:proof-prop-Lambda} depends on the sign of $z$:
	\begin{align}
		\label{eq2:proof-prop-Lambda}
		\lim_{u\searrow0}\Phi\left(y\sqrt{u}+\frac{z}{\sqrt{u}}\right)=
		\begin{dcases}
			0,			\quad 	& z<0\\[2pt]
			\frac{1}{2},\quad 	& z=0\\[2pt]
			1,			\quad 	& z>0
		\end{dcases}
	\end{align}
	Hence the first term in \eqref{eq1:proof-prop-Lambda} is equal to:
	\begin{align}
		\label{eq3:proof-prop-Lambda}
		\left[-\frac{e^{-xu}}{x}\Phi\left(y\sqrt{u}+\frac{z}{\sqrt{u}}\right)
		\right]_{u=0}^t=
		\begin{dcases}
			-\frac{e^{-xt}}{x}
			\Phi\left(y\sqrt{t}+\frac{z}{\sqrt{t}}\right),
			\quad & z<0
			\\[5pt]
			\frac{1}{x}\left(\frac{1}{2}
			-e^{-xt}\Phi(y\sqrt{t})\right),
			\quad & z=0
			\\[5pt]
			\frac{1}{x}\left(1-e^{-xt}
			\Phi\left(y\sqrt{t}+\frac{z}{\sqrt{t}}\right)
			\right),
			\quad & z>0
		\end{dcases}
	\end{align}
	
	\textbf{Case $z\in\mathbb{R}^*$.}
	If $z\neq0$, we use the integrals $A_1$ and $A_2$ introduced in Lemmas \ref{lemm:A-1} and \ref{lemm:A-2} respectively to rewrite the second term in \eqref{eq1:proof-prop-Lambda}. To do so, first observe that the exponential terms within the integrand are equal to:
	\begin{align}
		\label{eq4:proof-prop-Lambda}
		e^{-xu}
		\phi\left(y\sqrt{u}+\frac{z}{\sqrt{u}}\right)
		=e^{-yz}
		\phi\left(\left((2x+y^2)u+\frac{z^2}{u}\right)^{\frac{1}{2}}\right)
	\end{align}
	Therefore the second integral in \eqref{eq1:proof-prop-Lambda} is equal to:
	\begin{multline}
		\label{eq5:proof-prop-Lambda}
		\int_0^te^{-xu}\left(
		\frac{1}{2\sqrt{u}}\left(y-\frac{z}{u}\right)
		\phi\left(y\sqrt{u}+\frac{z}{\sqrt{u}}\right)
		\right)\diff u\\
		=e^{-yz}\left(yA_1(t,2x+y^2,z)-zA_2(t,2x+y^2,z)\right)
	\end{multline}
	Collecting equations 
	\eqref{eq3:proof-prop-Lambda} and
	\eqref{eq5:proof-prop-Lambda},
	we write \eqref{eq2:proof-prop-Lambda} as follows for the case $z\in\mathbb{R}^*$:
	\begin{multline}
		\label{eq6:proof-prop-Lambda}
		\Lambda(t,x,y,z)=\\
		\frac{1}{x}\left(
		\indicatorf{(0,\infty)}{z}
		-e^{-xt}\Phi\left(\frac{yt+z}{\sqrt{t}}\right)
		+e^{-yz}\left(yA_1(t,2x+y^2,z)-zA_2(t,2x+y^2,z)\right)
		\right)
	\end{multline} 
	Replacing $A_1$ and $A_2$ by their values from Lemmas \ref{lemm:A-1} and \ref{lemm:A-2} gives the result. 
	\newline
	
	\textbf{Case $z=0$.}
	On the other hand, when $z$ is zero then the second integral in equation \eqref{eq1:proof-prop-Lambda} reduces to:
	\begin{align}
		\label{eq7:proof-lemm-Lambda}
		y\int_0^te^{-xu}
		\frac{1}{2\sqrt{u}}\phi\left(y\sqrt{u}\right)
		\diff u
		=y\int_0^t
		\frac{1}{2\sqrt{u}}
		\phi\left(\sqrt{(2x+y^2)u}\right)
		\diff u
		=yA_1(t,2x+y^2,0)
	\end{align}
	Therefore if $z=0$ the integral $\Lambda$ is equal to:
	\begin{align}
		\label{eq8:proof-lemm-Lambda}
		\Lambda(t,x,y,0)=\frac{1}{x}\left(
		\frac{1}{2}-e^{-xt}\Phi(y\sqrt{t})
		+yA_1(t,2x+y^2,0)\right)
	\end{align}
	One can verify the function $\Lambda$ is continuous at $z=0$:
	\begin{align}
		\label{eq9:proof-lemm-Lambda}
		\lim_{z\nearrow0}\Lambda(t,x,y,z)
		=\lim_{z\searrow0}\Lambda(t,x,y,z)
		=\Lambda(t,x,y,0)
	\end{align}
	which concludes the proof.
\end{proof}

\begin{corollary}
	\label{coro:Lambda-zero-rho}
	Let $x,y,z\in\mathbb{R}$ and $t\geq0$ such that $x\neq0$ and $\rho(x,y)=0$. Then $\Lambda$ admits the analytical expression:
	\begin{align}
		\notag
		&\Lambda(t,x,y,z)=\\[4pt]
		\label{eq:Lambda-solution-rho-zero}
		&\
		\begin{dcases}
			\frac{1}{x}\biggl(
			-e^{-xt}\Phi\left(\beta_0\right)
			+e^{-yz}\left(
			y\sqrt{t}\phi\left(\frac{z}{\sqrt{t}}\right)
			+(1+yz)\Phi\left(\frac{z}{\sqrt{t}}\right)
			\right)\biggr),
			& z<0
			\\[4pt]
			\frac{1}{x}\biggl(
			1-e^{-xt}\Phi\left(\beta_0\right)
			+e^{-yz}\left(
			y\sqrt{t}\phi\left(\frac{z}{\sqrt{t}}\right)
			-(1+yz)\Phi\left(-\frac{z}{\sqrt{t}}\right)
			\right)\biggr), 
			& z\geq0 
		\end{dcases}
	\end{align}
\end{corollary}

\begin{proof}
	Following the same steps as in the proof of Proposition \ref{prop:Lambda}, function $\Lambda$ satisfies equation \eqref{eq6:proof-prop-Lambda} even if $\rho(x,y)=0$. It then suffices to replace $A_1(t,0,z)$ and $A_2(t,0,z)$ by the values obtained using equations \eqref{eq:sol-A-1-zero_a} from Lemma \ref{lemm:A-1} and \eqref{eq:sol-A-2-zero_a} from Lemma \ref{lemm:A-2} in order to conclude.
\end{proof}

\begin{corollary}
	\label{coro:Lambda-zero-x}
	Let $x=0$, $y,z\in\mathbb{R}$ and $t\geq0$ such that $y\neq0$. Then $\Lambda$ admits the analytical expression:
	\begin{align}
		\label{eq:Lambda-solution-x-zero}
		\Lambda(t,0,y,z)=
		\begin{dcases}
			t\Phi(\beta_0)+\frac{1}{2y}\left(
			2\sqrt{t}\phi(\beta_0)
			+\frac{e^{-2zy}}{y}\Phi(-\beta_1)
			+\left(2z-\frac{1}{y}\right)\Phi(\beta_0)
			\right),
			& z<0
			\\[4pt]
			t\Phi(\beta_0)+\frac{1}{2y}\left(
			2\sqrt{t}\phi(\beta_0)
			-\frac{e^{-2zy}}{y}\Phi(\beta_1)
			-\left(2z-\frac{1}{y}\right)\Phi(-\beta_0)
			\right), 
			& z\geq0 
		\end{dcases}
	\end{align}
\end{corollary}

\begin{proof}
	We use L'H\^opital rule applied to the expression \eqref{eq:Lambda-solution-rho-nonzero} from Proposition \ref{prop:Lambda}. We show detailed steps for the case $z\geq0$; the proof for the case where $z<0$ proceeds analogously.
	\newline
	
	\textbf{Case $z\geq0$.}
	Define the function $g_1:\mathbb{R}\rightarrow\mathbb{R}$ as follows for $y\neq0$:
		\begin{align}
			g_1(x):=
			1-e^{-xt}\Phi\left(\beta_0\right)
			+\frac{e^{-yz}}{2}\left(
			e^{-z\rho}\left(\frac{y}{\rho}-1\right)
			\Phi\left(\beta_1\right)
			-e^{z\rho}\left(\frac{y}{\rho}+1\right)
			\Phi\left(-\beta_2\right)
			\right)
		\end{align}
	Then $\Lambda(t,x,y,z)=g_1(x)/x$ when $z\geq0$. We have $\rho(0,y)=y$ hence:
		\begin{align}
			g_1(0)
			=1-\Phi\left(\beta_0\right)-\Phi\left(-\beta_2\right)
			=0
		\end{align}
	where the second equality stems from $\beta_2=\beta_0$ when $x=0$. Hence the limit of $\Lambda$ at $x=0$ is an indeterminate form of type $\frac{0}{0}$. Given $g_1$ is differentiable and that:
		\begin{align}
			\frac{\partial\Lambda}{\partial x}(t,x,y,z)
			=g_1^\prime(x),
		\end{align}
	the conditions for the application of L'H\^opital rule are fulfilled. By standard algebraic operations and using the notation $\rho=\rho(x,y)$, we obtain:
		\begin{multline}
			\label{eq1:proof-Lambda-solution-x-zero}
			g_1^\prime(x)=
			te^{-xt}\Phi(\beta_0)+\frac{e^{-yz}}{2}\Bigg\{
				e^{-z\rho}\left(
					\frac{1}{\rho}\left(\frac{y}{\rho}-1\right)\left(
						\sqrt{t}\phi(\beta_1)-z\Phi(\beta_1)
					\right)
					-\frac{y}{\rho^3}\Phi(\beta_1)
				\right)\\
				-e^{z\rho}\left(
				\frac{1}{\rho}\left(\frac{y}{\rho}+1\right)\left(
				z\Phi(-\beta_2)-\sqrt{t}\phi(-\beta_2)
				\right)
				-\frac{y}{\rho^3}\Phi(-\beta_2)
				\right)
			\Bigg\}
		\end{multline}
	where we have used the following equalities (interpreting $\beta_1,\beta_2$ as functions of $x$):
		\begin{align}
			\frac{\partial\rho}{\partial x}(x,y)
			=\frac{1}{\rho(x,y)},\qquad
			\beta_1^\prime(x)
			=\beta_2^\prime(x)
			=\frac{\sqrt{t}}{\rho(x,y)}
		\end{align}
	Evaluating expression \eqref{eq1:proof-Lambda-solution-x-zero} at $x=0$ we obtain:
		\begin{align}
			g_1^\prime(0)=t\Phi(\beta_0)+\frac{1}{2y}\left(
			2\sqrt{t}\phi(-\beta_2)
			-\frac{e^{-2zy}}{y}\Phi(\beta_1)
			-\left(2z-\frac{1}{y}\right)\Phi(-\beta_2)
			\right)
		\end{align}
	Using the identity $\beta_2=\beta_0$ when $x=0$ and that the function $\phi$ is symmetric around zero, we conclude using L'H\^opital rule for the case $z\geq0$:
		\begin{align}
			\notag
			\Lambda(t,0,y,z)
			&=\lim_{x\rightarrow0}g_1^\prime(x)
			\\[4pt]
			&=t\Phi(\beta_0)+\frac{1}{2y}\left(
			2\sqrt{t}\phi(\beta_0)
			-\frac{e^{-2zy}}{y}\Phi(\beta_1)
			-\left(2z-\frac{1}{y}\right)\Phi(-\beta_0)
			\right)
		\end{align}
	
	\textbf{Case $z<0$.}
	By applying the same steps as for the case $z\geq0$ but using function $g_2:\mathbb{R}\rightarrow\mathbb{R}$ defined as follows for $y\neq0$:
	\begin{align}
		g_2(x):=
		-e^{-xt}\Phi\left(\beta_0\right)
		-\frac{e^{-yz}}{2}\left(
		e^{-z\rho}\left(\frac{y}{\rho}-1\right)
		\Phi\left(-\beta_1\right)
		-e^{z\rho}\left(\frac{y}{\rho}+1\right)
		\Phi\left(\beta_2\right)
		\right),
	\end{align}
	we obtain the following expression for $\Lambda$ whenever $z<0$ (and $y\neq0$):
	\begin{align}
		\notag
		\Lambda(t,0,y,z)
		&=\lim_{x\rightarrow0}g_2^\prime(x)
		\\[4pt]
		&=t\Phi(\beta_0)+\frac{1}{2y}\left(
		2\sqrt{t}\phi(\beta_0)
		+\frac{e^{-2zy}}{y}\Phi(-\beta_1)
		+\left(2z-\frac{1}{y}\right)\Phi(\beta_0)
		\right)
	\end{align}
	which concludes the proof.
\end{proof}

\begin{corollary}
	\label{coro:Lambda-zero-xy}
	Let $x=0$, $y=0$, $z\in\mathbb{R}$ and $t\geq0$. Then $\Lambda$ admits the analytical expression:
	\begin{align}
		\label{eq:Lambda-solution-xy-zero}
		\Lambda(t,0,0,z)=
		\begin{dcases}
			(t+z^2)\Phi\left(\frac{z}{\sqrt{t}}\right)
			+z\sqrt{t}\phi\left(\frac{z}{\sqrt{t}}\right),
			& z<0
			\\[4pt]
			(t+z^2)\Phi\left(\frac{z}{\sqrt{t}}\right)
			+z\sqrt{t}\phi\left(\frac{z}{\sqrt{t}}\right)-z^2, 
			& z\geq0 
		\end{dcases}
	\end{align}
\end{corollary}

\begin{proof}
	To prove equation \eqref{eq:Lambda-solution-xy-zero} it suffices to make the following integration by parts:
		\begin{align}
			\int_0^t\Phi\left(\frac{z}{\sqrt{u}}\right)\diff u
			=\left[u\Phi\left(\frac{z}{\sqrt{u}}\right)\right]_{u=0}^t
			+z\int_0^t\frac{1}{2\sqrt{u}}\phi\left(\frac{z}{\sqrt{u}}\right)\diff u,
		\end{align}
	where we recognize the second integral to be $A_1(t,0,z)$, and use the results from Lemma \ref{lemm:A-1} for the case $a=0$, equation \eqref{eq:sol-A-1-zero_a}, to conclude.
\end{proof}

\begin{proof}[Proof of Theorem \ref{th:analytical-fwd-value}]
	To prove the analytical pricing formula \eqref{eq:analytical-fwd-value} for the vulnerable forward contract, we only need to show both recovery integrals within the option portfolio representation \eqref{eq:valuation-forward} can be expressed using the integral defined in \eqref{eq:def-Lambda}. To compact notation, we define the \emph{adjusted forward price} $\widetilde{F}_T:[0,T]\times\mathbb{R}_+^*\rightarrow\mathbb{R}$:
	\begin{align}
		\label{eq:def-vulnerable-fwd-price}
		\widetilde{F}_T(t,s)
		:=(1+\kappa)F_T(t,s)
	\end{align}
	
	We designate by $I_C$ the CVA integral involved in equation \eqref{eq:valuation-forward}, and we make the change of variables $\omega=u-t$:
		\begin{multline}
			\label{eq1:proof-fwd-analytical-price}
			I_C:=\int_t^T
			e^{-r_V(u-t)}e^{-r(T-u)}
			\mathcal{BS}_C\left(\widetilde{F}_T(t,s)
			e^{\kappa(h_S-r_V)(u-t)},u-t
			\right)\diff u\\
			=e^{-r(T-t)}\int_0^{T-t}
			e^{-(r_V-r)\omega}
			\mathcal{BS}_C\left(\widetilde{F}_T(t,s)
			e^{\kappa(h_S-r_V)\omega},\omega
			\right)\diff\omega
		\end{multline}
	Letting $\widetilde{F}_T:=\widetilde{F}_T(t,s)$ and using the definition of the Black-Scholes call function from \eqref{eq:bs-call-formula}, the Black-Scholes integrand in \eqref{eq1:proof-fwd-analytical-price} is equal to:
		\begin{multline}
			\label{eq3:proof-fwd-analytical-price}
			\mathcal{BS}_C\left(
			\widetilde{F}_Te^{\kappa(h_S-r_V)\omega},
			\omega\right)\\
			=\widetilde{F}_Te^{\kappa(h_S-r_V)\omega}
			\Phi\left(d_1(
			\widetilde{F}_Te^{\kappa(h_S-r_V)\omega},
			\omega)\right)
			-K\Phi\left(d_2(
			\widetilde{F}_Te^{\kappa(h_S-r_V)\omega},
			\omega)\right)
		\end{multline}
	Using the expression for function $d_1$ from equation \eqref{eq:def-d1} and the definition \eqref{eq:moneyness} of the moneyness $\eta(s)$, the value of $d_1$ in \eqref{eq3:proof-fwd-analytical-price} is equal to:
		\begin{align}
			\notag
			d_1(\widetilde{F}_T
			e^{\kappa(h_S-r_V)\omega},\omega)
			&=\frac{1}{\sigma\sqrt{\omega}}\left(
			\ln\left(\frac{\widetilde{F}_Te^{\kappa(h_S-r_V)\omega}}{K}\right)+\frac{1}{2}\sigma^2\omega
			\right)
			\\[4pt]\notag
			&=\frac{1}{\sigma\sqrt{\omega}}\left(
			\sigma\eta(s)+\left(\kappa(h_S-r_V)+\frac{1}{2}\sigma^2\right)\omega
			\right)
			\\[4pt]\label{eq4:proof-fwd-analytical-price}
			&=\frac{\eta(s)}{\sqrt{\omega}}+
			\left(\frac{2\kappa(h_S-r_V)+\sigma^2}{2\sigma}\right)\sqrt{\omega}
		\end{align}
	Therefore, using the definition \eqref{eq:def-d2} for $d_2$, the integral \eqref{eq1:proof-fwd-analytical-price} is equal to:
		\begin{multline}
			\label{eq5:proof-fwd-analytical-price}
			I_C=e^{-r(T-t)}\Biggl(
			\widetilde{F}_T\int_0^{T-t}
			e^{-(r_V-r-\kappa(h_S-r_V))\omega}
			\Phi\left(\frac{\eta(s)}{\sqrt{\omega}}
			+\left(
			\frac{2\kappa(h_S-r_V)+\sigma^2}{2\sigma}
			\right)\sqrt{\omega}\right)\diff\omega
			\\
			-
			K\int_0^{T-t}e^{-(r_V-r)\omega}
			\Phi\left(\frac{\eta(s)}{\sqrt{\omega}}
			+\left(
			\frac{2\kappa(h_S-r_V)-\sigma^2}{2\sigma}
			\right)\sqrt{\omega}\right)\diff\omega
			\Biggr)
		\end{multline}
	To lighten notation, we define the variables $\zeta_1$ and $\zeta_2$ such that:
		\begin{align}
			\label{eq7:proof-fwd-analytical-price}
			\zeta_1:=
			\frac{2\kappa(h_S-r_V)+\sigma^2}
			{2\sigma},\qquad
			\zeta_2:=
			\zeta_1-\sigma
		\end{align}
	If the following conditions are met:
		\begin{align}
			\label{eq10:proof-fwd-analytical-price}
			r_V-r-\kappa(h_S-r_V)\neq0,\qquad
			r_V-r\neq0
		\end{align}
	Then the assumptions for Proposition \ref{prop:Lambda} hold with the substitutions:
		\begin{align}
			x = r_V-r-\kappa(h_S-r_V), \qquad
			y = \zeta_1, \qquad
			z = \eta(s)
		\end{align}
	for the first integral in \eqref{eq5:proof-fwd-analytical-price}, and:
		\begin{align}
			x = r_V-r, \qquad
			y = \zeta_2, \qquad
			z = \eta(s)
		\end{align}
	for the second one. Integral $I_C$ is equal to:
		\begin{align}
			\label{eq11:proof-fwd-analytical-price}
			I_C=e^{-r(T-t)}\Bigl(
			\widetilde{F}_T
			\Lambda\left(T-t,(\lambda^*_V-\kappa(h_S-r_V)),
			\zeta_1,
			\eta\right)
			-K\Lambda\left(T-t,\lambda^*_V,
			\zeta_2,\eta\right)\Bigr)
		\end{align}
	where $\lambda^*_V:=r_V-r$ and $\eta:=\eta(s)$. We now define $I_P$ as the DVA integral in equation \eqref{eq:valuation-forward}:
		\begin{align}
			\label{eq12:proof-fwd-analytical-price}
			I_P:=e^{-r(T-t)}\int_0^{T-t}
			e^{-(r_V-r)\omega}
			\mathcal{BS}_P\left(
			\widetilde{F}_Te^{\kappa(h_S-r_V)\omega},
			\omega\right)\diff\omega
		\end{align}
	Using the formula for the Black-Scholes put function from \eqref{eq:bs-put-formula}:
		\begin{multline}
			\label{eq13:proof-fwd-analytical-price}
			I_P=e^{-r(T-t)}\Biggl(
			K\int_0^{T-t}e^{-(r_V-r)\omega}
			\Phi\left(-\frac{\eta(s)}{\sqrt{\omega}}
			-\left(
			\frac{2\kappa(h_S-r_V)-\sigma^2}{2\sigma}
			\right)\sqrt{\omega}\right)\diff\omega\\
			-\widetilde{F}_T
			\int_0^{T-t}
			e^{-(r_V-r-\kappa(h_S-r_V))\omega}
			\Phi\left(-\frac{\eta(s)}{\sqrt{\omega}}
			-\left(
			\frac{2\kappa(h_S-r_V)+\sigma^2}{2\sigma}
			\right)\sqrt{\omega}\right)\diff\omega
			\Biggr)
		\end{multline}
	Given that:
		\begin{align}
			\label{eq14:proof-fwd-analytical-price}
			\left(-\frac{2\kappa(h_S-r_V)-\sigma^2}
			{2\sigma}\right)^2
			=(-\zeta_2)^2
			=\zeta_2^2,\qquad
			\left(-\frac{2\kappa(h_S-r_V)+\sigma^2}
			{2\sigma}\right)^2
			=(-\zeta_1)^2
			=\zeta_1^2
		\end{align}
	The conditions \eqref{eq10:proof-fwd-analytical-price} remain unchanged for the DVA integral, thus we can express $I_P$ using the function $\Lambda$:
		\begin{align}
			\label{eq17:proof-fwd-analytical-price}
			{\hspace{-.2cm} I_P}=
			e^{-r(T-t)}\left(
			K\Lambda\left(T-t,\lambda^*_V,
			-\zeta_2,-\eta\right)
			-\widetilde{F}_T
			\Lambda\left(T-t,(\lambda^*_V-\kappa(h_S-r_V)),
			-\zeta_1,-\eta\right)
			\right)
		\end{align}
	which concludes the proof.
\end{proof}



\end{document}